  \let\fiverm\fivrm
\def\@picture(#1,#2)(#3,#4){%
  \@picht #2\unitlength
  \setbox\@picbox\hbox to #1\unitlength\bgroup 
  \let\endpicture=\!latexendpicture
  \let\frame=\!latexframe
  \let\linethickness=\!latexlinethickness
  \let\multiput=\!latexmultiput
  \let\put=\!latexput
  \hskip -#3\unitlength \lower #4\unitlength \hbox\bgroup}
\def\PiC{P\kern-.12em\lower.5ex\hbox{I}\kern-.075emC}
\def\PiCTeX{\PiC\kern-.11em\TeX}
\def\!ifnextchar#1#2#3{%
  \let\!testchar=#1%
  \def\!first{#2}%
  \def\!second{#3}%
  \futurelet\!nextchar\!testnext}
\def\!testnext{%
  \ifx \!nextchar \!spacetoken 
    \let\!next=\!skipspacetestagain
  \else
    \ifx \!nextchar \!testchar
      \let\!next=\!first
    \else 
      \let\!next=\!second 
    \fi 
  \fi
  \!next}
\def\\{\!skipspacetestagain} 
\def\\ {\futurelet\!nextchar\!testnext} 
\def\\{\let\!spacetoken= } \\  %  ** set \spacetoken to a space token
\def\!tfor#1:=#2\do#3{%
  \edef\!fortemp{#2}%
  \ifx\!fortemp\!empty 
    \else
    \!tforloop#2\!nil\!nil\!!#1{#3}%
  \fi}
\def\!tforloop#1#2\!!#3#4{%
  \def#3{#1}%
  \ifx #3\!nnil
    \let\!nextwhile=\!fornoop
  \else
    #4\relax
    \let\!nextwhile=\!tforloop
  \fi 
  \!nextwhile#2\!!#3{#4}}
\def\!etfor#1:=#2\do#3{%
  \def\!!tfor{\!tfor#1:=}%
  \edef\!!!tfor{#2}%
  \expandafter\!!tfor\!!!tfor\do{#3}}
\def\!cfor#1:=#2\do#3{%
  \edef\!fortemp{#2}%
  \ifx\!fortemp\!empty 
  \else
    \!cforloop#2,\!nil,\!nil\!!#1{#3}%
  \fi}
\def\!cforloop#1,#2\!!#3#4{%
  \def#3{#1}%
  \ifx #3\!nnil
    \let\!nextwhile=\!fornoop 
  \else
    #4\relax
    \let\!nextwhile=\!cforloop
  \fi
  \!nextwhile#2\!!#3{#4}}
\def\!ecfor#1:=#2\do#3{%
  \def\!!cfor{\!cfor#1:=}%
  \edef\!!!cfor{#2}%
  \expandafter\!!cfor\!!!cfor\do{#3}}
\def\!empty{}
\def\!nnil{\!nil}
\def\!fornoop#1\!!#2#3{}
\def\!ifempty#1#2#3{%
  \edef\!emptyarg{#1}%
  \ifx\!emptyarg\!empty
    #2%
  \else
    #3%
  \fi}
\def\!getnext#1\from#2{%
  \expandafter\!gnext#2\!#1#2}%
\def\!gnext\\#1#2\!#3#4{%
  \def#3{#1}%
  \def#4{#2\\{#1}}%
  \ignorespaces}
\def\!getnextvalueof#1\from#2{%
  \expandafter\!gnextv#2\!#1#2}%
\def\!gnextv\\#1#2\!#3#4{%
  #3=#1%
  \def#4{#2\\{#1}}%
  \ignorespaces}
\def\!copylist#1\to#2{%
  \expandafter\!!copylist#1\!#2}
\def\!!copylist#1\!#2{%
  \def#2{#1}\ignorespaces}
\def\!wlet#1=#2{%
  \let#1=#2 
  \wlog{\string#1=\string#2}}
\def\!listaddon#1#2{%
  \expandafter\!!listaddon#2\!{#1}#2}
\def\!!listaddon#1\!#2#3{%
  \def#3{#1\\#2}}
\def\!rightappend#1\withCS#2\to#3{\expandafter\!!rightappend#3\!#2{#1}#3}
\def\!!rightappend#1\!#2#3#4{\def#4{#1#2{#3}}}
\def\!leftappend#1\withCS#2\to#3{\expandafter\!!leftappend#3\!#2{#1}#3}
\def\!!leftappend#1\!#2#3#4{\def#4{#2{#3}#1}}
\def\!lop#1\to#2{\expandafter\!!lop#1\!#1#2}
\def\!!lop\\#1#2\!#3#4{\def#4{#1}\def#3{#2}}
\def\!loop#1\repeat{\def\!body{#1}\!iterate}
\def\!iterate{\!body\let\!next=\!iterate\else\let\!next=\relax\fi\!next}
\def\!!loop#1\repeat{\def\!!body{#1}\!!iterate}
\def\!!iterate{\!!body\let\!!next=\!!iterate\else\let\!!next=\relax\fi\!!next}
\def\!removept#1#2{\edef#2{\expandafter\!!removePT\the#1}}
{\catcode`p=12 \catcode`t=12 \gdef\!!removePT#1pt{#1}}
\def\placevalueinpts of <#1> in #2 {%
  \!removept{#1}{#2}}
\def\!mlap#1{\hbox to 0pt{\hss#1\hss}}
\def\!vmlap#1{\vbox to 0pt{\vss#1\vss}}
\def\!not#1{%
  #1\relax
    \!switchfalse
  \else
    \!switchtrue
  \fi
  \if!switch
  \ignorespaces}
\def\wlog#1{}    
\newdimen\headingtoplotskip     %.A.................
\newdimen\linethickness         %.A..X....U........T
\newdimen\longticklength        %.A................T
\newdimen\plotsymbolspacing     %......D...L....Q...
\newdimen\shortticklength       %.A................T
\newdimen\stackleading          %.A..........P......
\newdimen\tickstovaluesleading  %.A................T
\newdimen\totalarclength        %......D...L....Q...
\newdimen\valuestolabelleading  %.A.................
\newbox\!boxA                   %.AW...............T
\newbox\!boxB                   %..W................
\newbox\!picbox                 %............P......
\newbox\!plotsymbol             %..........L..O.....
\newbox\!putobject              %............PO...S.
\newbox\!shadesymbol            %.................S.
\newdimen\!Xleft                %............P......
\newdimen\!Xright               %............P......
\newdimen\!Xsave                %.A................T
\newdimen\!Ybot                 %............P......
\newdimen\!Ysave                %.A................T
\newdimen\!Ytop                 %............P......
\newdimen\!angle                %........E..........
\newdimen\!arclength            %..W......UL....Q...
\newdimen\!areabloc             %.A........L........
\newdimen\!arealloc             %.A........L........
\newdimen\!arearloc             %.A........L........
\newdimen\!areatloc             %.A........L........
\newdimen\!bshrinkage           %.................S.
\newdimen\!checkbot             %..........L........
\newdimen\!checkleft            %..........L........
\newdimen\!checkright           %..........L........
\newdimen\!checktop             %..........L........
\newdimen\!dimenA               %.AW.X.DVEUL..OYQRST
\newdimen\!dimenB               %....X.DVEU...O.QRS.
\newdimen\!dimenC               %..W.X.DVEU......RS.
\newdimen\!dimenD               %..W.X.DVEU....Y.RS.
\newdimen\!dimenE               %..W........G..YQ.S.
\newdimen\!dimenF               %...........G..YQ.S.
\newdimen\!dimenG               %...........G..YQ.S.
\newdimen\!dimenH               %...........G..Y..S.
\newdimen\!dimenI               %...BX.........Y....
\newdimen\!distacross           %..........L....Q...
\newdimen\!downlength           %..........L........
\newdimen\!dp                   %.A..X.......P....S.
\newdimen\!dshade               %.................S.
\newdimen\!dxpos                %..W......U..P....S.
\newdimen\!dxprime              %...............Q...
\newdimen\!dypos                %..WB.....U..P......
\newdimen\!dyprime              %...............Q...
\newdimen\!ht                   %.A..X.......P....S.
\newdimen\!leaderlength         %......D..U.........
\newdimen\!lshrinkage           %.................S.
\newdimen\!midarclength         %...............Q...
\newdimen\!offset               %.A................T
\newdimen\!plotheadingoffset    %.A.................
\newdimen\!plotsymbolxshift     %..........L..O.....
\newdimen\!plotsymbolyshift     %..........L..O.....
\newdimen\!plotxorigin          %..........L..O.....
\newdimen\!plotyorigin          %..........L..O.....
\newdimen\!rootten              %...........G.......
\newdimen\!rshrinkage           %.................S.
\newdimen\!shadesymbolxshift    %.................S.
\newdimen\!shadesymbolyshift    %.................S.
\newdimen\!tenAa                %...........G.......
\newdimen\!tenAc                %...........G.......
\newdimen\!tenAe                %...........G.......
\newdimen\!tshrinkage           %.................S.
\newdimen\!uplength             %..........L........
\newdimen\!wd                   %....X.......P....S.
\newdimen\!wmax                 %...............Q...
\newdimen\!wmin                 %...............Q...
\newdimen\!xB                   %...............Q...
\newdimen\!xC                   %...............Q...
\newdimen\!xE                   %..W.....E.L....Q.S.
\newdimen\!xM                   %..W.....E......Q.S.
\newdimen\!xS                   %..W.....E.L....Q.S.
\newdimen\!xaxislength          %.A................T
\newdimen\!xdiff                %..........L........
\newdimen\!xleft                %............P......
\newdimen\!xloc                 %..WB.....U.......S.
\newdimen\!xorigin              %.A........L.P....S.
\newdimen\!xpivot               %................R..
\newdimen\!xpos                 %..........L.P..Q.ST
\newdimen\!xprime               %...............Q...
\newdimen\!xright               %............P......
\newdimen\!xshade               %.................S.
\newdimen\!xshift               %..W.........PO...S.
\newdimen\!xtemp                %............P......
\newdimen\!xunit                %.AWBX...EUL.P..QRS.
\newdimen\!xxE                  %........E..........
\newdimen\!xxM                  %........E..........
\newdimen\!xxS                  %........E..........
\newdimen\!xxloc                %..WB....EU.........
\newdimen\!yB                   %...............Q...
\newdimen\!yC                   %...............Q...
\newdimen\!yE                   %..W.....E.L....Q...
\newdimen\!yM                   %..W.....E......Q...
\newdimen\!yS                   %..W.....E.L....Q...
\newdimen\!yaxislength          %.A................T
\newdimen\!ybot                 %............P......
\newdimen\!ydiff                %..........L........
\newdimen\!yloc                 %..WB.....U.......S.
\newdimen\!yorigin              %.A........L.P....S.
\newdimen\!ypivot               %................R..
\newdimen\!ypos                 %..........L.P..Q.ST
\newdimen\!yprime               %...............Q...
\newdimen\!yshade               %.................S.
\newdimen\!yshift               %..W.........PO...S.
\newdimen\!ytemp                %............P......
\newdimen\!ytop                 %............P......
\newdimen\!yunit                %.AWBX...EUL.P..QRS.
\newdimen\!yyE                  %........E..........
\newdimen\!yyM                  %........E..........
\newdimen\!yyS                  %........E..........
\newdimen\!yyloc                %..WB....EU.........
\newdimen\!zpt                  %.AWBX.DVEULGP.YQ.ST
\newif\if!axisvisible           %.A.................
\newif\if!gridlinestoo          %..................T
\newif\if!keepPO                %...................
\newif\if!placeaxislabel        %.A.................
\newif\if!switch                %H..................
\newif\if!xswitch               %.A................T
\newtoks\!axisLaBeL             %.A.................
\newtoks\!keywordtoks           %.A.................
\newwrite\!replotfile           %.............O.....
\def\!cosrotationangle{1}      %................R..
\def\!sinrotationangle{0}      %................R..
\def\!xpivotcoord{0}           %................R..
\def\!xref{0}                  %............P......
\def\!xshadesave{0}            %.................S.
\def\!ypivotcoord{0}           %................R..
\def\!yref{0}                  %............P......
\def\!yshadesave{0}            %.................S.
\def\!zero{0}                  %..................T
\let\wlog=\!!!wlog
\def\normalgraphs{%
  \longticklength=.4\baselineskip
  \shortticklength=.25\baselineskip
  \tickstovaluesleading=.25\baselineskip
  \valuestolabelleading=.8\baselineskip
  \linethickness=.4pt
  \stackleading=.17\baselineskip
  \headingtoplotskip=1.5\baselineskip
  \visibleaxes
  \ticksout
  \nogridlines
  \unloggedticks}
\def\setplotarea x from #1 to #2, y from #3 to #4 {%
  \!arealloc=\!M{#1}\!xunit \advance \!arealloc -\!xorigin
  \!areabloc=\!M{#3}\!yunit \advance \!areabloc -\!yorigin
  \!arearloc=\!M{#2}\!xunit \advance \!arearloc -\!xorigin
  \!areatloc=\!M{#4}\!yunit \advance \!areatloc -\!yorigin
  \!initinboundscheck
  \!xaxislength=\!arearloc  \advance\!xaxislength -\!arealloc
  \!yaxislength=\!areatloc  \advance\!yaxislength -\!areabloc
  \!plotheadingoffset=\!zpt
  \!dimenput {{\setbox0=\hbox{}\wd0=\!xaxislength\ht0=\!yaxislength\box0}}
     [bl] (\!arealloc,\!areabloc)}
\def\visibleaxes{%
  \def\!axisvisibility{\!axisvisibletrue}}
\def\!fixkeyword#1{%
  \errhelp=\!keywordhelp
  \errmessage{Unrecognized keyword `#1': \the\!keywordtoks{NEW KEYWORD}'}}
\def\fixkeyword#1{%
  \!nextkeyword#1 }
\def\axis {%
  \def\!nextkeyword##1 {%
    \expandafter\ifx\csname !axis##1\endcsname \relax
      \def\!next{\!fixkeyword{##1}}%
    \else
      \def\!next{\csname !axis##1\endcsname}%
    \fi
    \!next}%
  \!offset=\!zpt
  \!axisvisibility
  \!placeaxislabelfalse
  \!nextkeyword}
\def\!axisbottom{%
  \!axisylevel=\!areabloc
  \def\!tickxsign{0}%
  \def\!tickysign{-}%
  \def\!axissetup{\!axisxsetup}%
  \def\!axislabeltbrl{t}%
  \!nextkeyword}
\def\!axistop{%
  \!axisylevel=\!areatloc
  \def\!tickxsign{0}%
  \def\!tickysign{+}%
  \def\!axissetup{\!axisxsetup}%
  \def\!axislabeltbrl{b}%
  \!nextkeyword}
\def\!axisleft{%
  \!axisxlevel=\!arealloc
  \def\!tickxsign{-}%
  \def\!tickysign{0}%
  \def\!axissetup{\!axisysetup}%
  \def\!axislabeltbrl{r}%
  \!nextkeyword}
\def\!axisright{%
  \!axisxlevel=\!arearloc
  \def\!tickxsign{+}%
  \def\!tickysign{0}%
  \def\!axissetup{\!axisysetup}%
  \def\!axislabeltbrl{l}%
  \!nextkeyword}
\def\!axisshiftedto#1=#2 {%
  \if 0\!tickxsign
    \!axisylevel=\!M{#2}\!yunit
    \advance\!axisylevel -\!yorigin
  \else
    \!axisxlevel=\!M{#2}\!xunit
    \advance\!axisxlevel -\!xorigin
  \fi
  \!nextkeyword}
\def\!axisvisible{%
  \!axisvisibletrue  
  \!nextkeyword}
\def\!axisinvisible{%
  \!axisvisiblefalse
  \!nextkeyword}
\def\!axislabel#1 {%
  \!axisLaBeL={#1}%
  \!placeaxislabeltrue
  \!nextkeyword}
\def\csname !axis/\endcsname{%
  \!axissetup % This could done already by "ticks"; if so, now \relax
  \if!placeaxislabel
    \!placeaxislabel
  \fi
  \if +\!tickysign %                 ** (A "top" axis)
    \!dimenA=\!axisylevel
    \advance\!dimenA \!offset %      ** dimA = top of the axis structure
    \advance\!dimenA -\!areatloc %   ** dimA = excess over the plot area
    \ifdim \!dimenA>\!plotheadingoffset
      \!plotheadingoffset=\!dimenA % ** Greatest excess over the plot area
    \fi
  \fi}
\def\grid #1 #2 {%
  \!countA=#1\advance\!countA 1
  \axis bottom invisible ticks length <\!zpt> andacross quantity {\!countA} /
  \!countA=#2\advance\!countA 1
  \axis left   invisible ticks length <\!zpt> andacross quantity {\!countA} / }
\def\plotheading#1 {%
  \advance\!plotheadingoffset \headingtoplotskip
  \!dimenput {#1} [B] <.5\!xaxislength,\!plotheadingoffset>
    (\!arealloc,\!areatloc)}
\def\!axisxsetup{%
  \!axisxlevel=\!arealloc
  \!axisstart=\!arealloc
  \!axisend=\!arearloc
  \!axisLength=\!xaxislength
  \!!origin=\!xorigin
  \!!unit=\!xunit
  \!xswitchtrue
  \if!axisvisible 
    \!makeaxis
  \fi}
\def\!axisysetup{%
  \!axisylevel=\!areabloc
  \!axisstart=\!areabloc
  \!axisend=\!areatloc
  \!axisLength=\!yaxislength
  \!!origin=\!yorigin
  \!!unit=\!yunit
  \!xswitchfalse
  \if!axisvisible
    \!makeaxis
  \fi}
\def\!makeaxis{%
  \setbox\!boxA=\hbox{% (Make a pseudo-y[x] tick for an x[y]-axis)
    \beginpicture
      \!setdimenmode
      \setcoordinatesystem point at {\!zpt} {\!zpt}   
      \putrule from {\!zpt} {\!zpt} to
        {\!tickysign\!tickysign\!axisLength} 
        {\!tickxsign\!tickxsign\!axisLength}
    \endpicturesave <\!Xsave,\!Ysave>}%
    \wd\!boxA=\!zpt
    \!placetick\!axisstart}
\def\!placeaxislabel{%
  \advance\!offset \valuestolabelleading
  \if!xswitch
    \!dimenput {\the\!axisLaBeL} [\!axislabeltbrl]
      <.5\!axisLength,\!tickysign\!offset> (\!axisxlevel,\!axisylevel)
    \advance\!offset \!dp  % ** advance offset by the "tallness"
    \advance\!offset \!ht  % ** of the label
  \else
    \!dimenput {\the\!axisLaBeL} [\!axislabeltbrl]
      <\!tickxsign\!offset,.5\!axisLength> (\!axisxlevel,\!axisylevel)
  \fi
  \!axisLaBeL={}}
\def\arrow <#1> [#2,#3]{%
  \!ifnextchar<{\!arrow{#1}{#2}{#3}}{\!arrow{#1}{#2}{#3}<\!zpt,\!zpt> }}
\def\!arrow#1#2#3<#4,#5> from #6 #7 to #8 #9 {%
%
% ** convert to dimensions
  \!xloc=\!M{#8}\!xunit   
  \!yloc=\!M{#9}\!yunit
  \!dxpos=\!xloc  \!dimenA=\!M{#6}\!xunit  \advance \!dxpos -\!dimenA
  \!dypos=\!yloc  \!dimenA=\!M{#7}\!yunit  \advance \!dypos -\!dimenA
  \let\!MAH=\!M%                         ** save current c/d mode
  \!setdimenmode%                        ** go into dimension mode
  \!xshift=#4\relax  \!yshift=#5\relax%  ** pick up shift
  \!reverserotateonly\!xshift\!yshift%   ** back rotate shift
  \advance\!xshift\!xloc  \advance\!yshift\!yloc
%
% **  draw shaft of arrow
  \!xS=-\!dxpos  \advance\!xS\!xshift
  \!yS=-\!dypos  \advance\!yS\!yshift
  \!start (\!xS,\!yS)
  \!ljoin (\!xshift,\!yshift)
%
% ** find 32*cosine and 32*sine of angle of rotation
  \!Pythag\!dxpos\!dypos\!arclength
  \!divide\!dxpos\!arclength\!dxpos  
  \!dxpos=32\!dxpos  \!removept\!dxpos\!!cos
  \!divide\!dypos\!arclength\!dypos  
  \!dypos=32\!dypos  \!removept\!dypos\!!sin
% 
% ** construct arrowhead
  \!halfhead{#1}{#2}{#3}%                ** draw half of arrow head
  \!halfhead{#1}{-#2}{-#3}%              ** draw other half
  \let\!M=\!MAH%                         ** restore old c/d mode
  \ignorespaces}
  \def\!halfhead#1#2#3{%
    \!dimenC=-#1%                
    \divide \!dimenC 2 %                 ** half way back
    \!dimenD=#2\!dimenC%                 ** half the mid width
    \!rotate(\!dimenC,\!dimenD)by(\!!cos,\!!sin)to(\!xM,\!yM)
    \!dimenC=-#1%                        ** all the way back
    \!dimenD=#3\!dimenC
    \!dimenD=.5\!dimenD%                 ** half the full width
    \!rotate(\!dimenC,\!dimenD)by(\!!cos,\!!sin)to(\!xE,\!yE)
    \!start (\!xshift,\!yshift)
    \advance\!xM\!xshift  \advance\!yM\!yshift
    \advance\!xE\!xshift  \advance\!yE\!yshift
    \!qjoin (\!xM,\!yM) (\!xE,\!yE) 
    \ignorespaces}
\def\betweenarrows #1#2 from #3 #4 to #5 #6 {%
  \!xloc=\!M{#3}\!xunit  \!xxloc=\!M{#5}\!xunit%   
  \!yloc=\!M{#4}\!yunit  \!yyloc=\!M{#6}\!yunit%           
  \!dxpos=\!xxloc  \advance\!dxpos by -\!xloc
  \!dypos=\!yyloc  \advance\!dypos by -\!yloc
  \advance\!xloc .5\!dxpos
  \advance\!yloc .5\!dypos
  \let\!MBA=\!M%           ** save current coord\dimen mode
  \!setdimenmode%          ** express locations in dimens
  \ifdim\!dypos=\!zpt
    \ifdim\!dxpos<\!zpt \!dxpos=-\!dxpos \fi
    \put {\!lrarrows{\!dxpos}{#1}}#2{} at {\!xloc} {\!yloc}
  \else
    \ifdim\!dxpos=\!zpt
      \ifdim\!dypos<\!zpt \!dypos=-\!zpt \fi
      \put {\!udarrows{\!dypos}{#1}}#2{} at {\!xloc} {\!yloc}
    \fi
  \fi
  \let\!M=\!MBA%           ** restore previous c/d mode
  \ignorespaces}
\def\!lrarrows#1#2{% #1=width, #2=text
  {\setbox\!boxA=\hbox{$\mkern-2mu\mathord-\mkern-2mu$}%
   \setbox\!boxB=\hbox{$\leftarrow$}\!dimenE=\ht\!boxB
   \setbox\!boxB=\hbox{}\ht\!boxB=2\!dimenE
   \hbox to #1{$\mathord\leftarrow\mkern-6mu
     \cleaders\copy\!boxA\hfil
     \mkern-6mu\mathord-$%
     \kern.4em $\vcenter{\box\!boxB}$$\vcenter{\hbox{#2}}$\kern.4em
     $\mathord-\mkern-6mu
     \cleaders\copy\!boxA\hfil
     \mkern-6mu\mathord\rightarrow$}}}
\def\!udarrows#1#2{% #1=width, #2=text
  {\setbox\!boxB=\hbox{#2}%
   \setbox\!boxA=\hbox to \wd\!boxB{\hss$\vert$\hss}%
   \!dimenE=\ht\!boxA \advance\!dimenE \dp\!boxA \divide\!dimenE 2
   \vbox to #1{\offinterlineskip
      \vskip .05556\!dimenE
      \hbox to \wd\!boxB{\hss$\mkern.4mu\uparrow$\hss}\vskip-\!dimenE
      \cleaders\copy\!boxA\vfil
      \vskip-\!dimenE\copy\!boxA
      \vskip\!dimenE\copy\!boxB\vskip.4em
      \copy\!boxA\vskip-\!dimenE
      \cleaders\copy\!boxA\vfil
      \vskip-\!dimenE \hbox to \wd\!boxB{\hss$\mkern.4mu\downarrow$\hss}
      \vskip .05556\!dimenE}}}
\def\putbar#1breadth <#2> from #3 #4 to #5 #6 {%
  \!xloc=\!M{#3}\!xunit  \!xxloc=\!M{#5}\!xunit%   
  \!yloc=\!M{#4}\!yunit  \!yyloc=\!M{#6}\!yunit%           
  \!dypos=\!yyloc  \advance\!dypos by -\!yloc
  \!dimenI=#2  
  \ifdim \!dimenI=\!zpt %            ** If 0 breadth
    \putrule#1from {#3} {#4} to {#5} {#6} % ** Then draw line
  \else %                            ** Else, put in a rectangle
    \let\!MBar=\!M%                  ** save current c/d mode
    \!setdimenmode %                 ** go into dimension mode
    \divide\!dimenI 2
    \ifdim \!dypos=\!zpt             
      \advance \!yloc -\!dimenI %    ** Equal y coordinates
      \advance \!yyloc \!dimenI
    \else
      \advance \!xloc -\!dimenI %    ** Equal x coordinates
      \advance \!xxloc \!dimenI
    \fi
    \putrectangle#1corners at {\!xloc} {\!yloc} and {\!xxloc} {\!yyloc}
    \let\!M=\!MBar %                 ** restore c/d mode
  \fi
  \ignorespaces}
\def\setbars#1breadth <#2> baseline at #3 = #4 {%
  \edef\!barshift{#1}%
  \edef\!barbreadth{#2}%
  \edef\!barorientation{#3}%
  \edef\!barbaseline{#4}%
  \def\!bardobaselabel{\!bardoendlabel}%
  \def\!bardoendlabel{\!barfinish}%
  \let\!drawcurve=\!barcurve
  \!setbars}
\def\!setbars{%
  \futurelet\!nextchar\!!setbars}
\def\!!setbars{%
  \if b\!nextchar
    \def\!!!setbars{\!setbarsbget}%
  \else 
    \if e\!nextchar
      \def\!!!setbars{\!setbarseget}%
    \else
      \def\!!!setbars{\relax}%
    \fi
  \fi
  \!!!setbars}
\def\!setbarsbget baselabels (#1) {%
  \def\!barbaselabelorientation{#1}%
  \def\!bardobaselabel{\!!bardobaselabel}%
  \!setbars}
\def\!setbarseget endlabels (#1) {%
  \edef\!barendlabelorientation{#1}%
  \def\!bardoendlabel{\!!bardoendlabel}%
  \!setbars}
\def\!barcurve #1 #2 {%
  \if y\!barorientation
    \def\!basexarg{#1}%
    \def\!baseyarg{\!barbaseline}%
  \else
    \def\!basexarg{\!barbaseline}%
    \def\!baseyarg{#2}%
  \fi
  \expandafter\putbar\!barshift breadth <\!barbreadth> from {\!basexarg}
    {\!baseyarg} to {#1} {#2}
  \def\!endxarg{#1}%
  \def\!endyarg{#2}%
  \!bardobaselabel}
\def\!!bardobaselabel "#1" {%
  \put {#1}\!barbaselabelorientation{} at {\!basexarg} {\!baseyarg}
  \!bardoendlabel}
\def\!!bardoendlabel "#1" {%
  \put {#1}\!barendlabelorientation{} at {\!endxarg} {\!endyarg}
  \!barfinish}
\def\!barfinish{%
  \!ifnextchar/{\!finish}{\!barcurve}}
\def\putrectangle{%
  \!ifnextchar<{\!putrectangle}{\!putrectangle<\!zpt,\!zpt> }}
\def\!putrectangle<#1,#2> corners at #3 #4 and #5 #6 {%
%
% ** get locations
  \!xone=\!M{#3}\!xunit  \!xtwo=\!M{#5}\!xunit%   
  \!yone=\!M{#4}\!yunit  \!ytwo=\!M{#6}\!yunit%           
  \ifdim \!xtwo<\!xone
    \!dimenI=\!xone  \!xone=\!xtwo  \!xtwo=\!dimenI
  \fi
  \ifdim \!ytwo<\!yone
    \!dimenI=\!yone  \!yone=\!ytwo  \!ytwo=\!dimenI
  \fi
  \!dimenI=#1\relax  \advance\!xone\!dimenI  \advance\!xtwo\!dimenI
  \!dimenI=#2\relax  \advance\!yone\!dimenI  \advance\!ytwo\!dimenI
  \let\!MRect=\!M%                  ** save current coord/dimen mode
  \!setdimenmode
%
% ** shade rectangle if appropriate
  \!shaderectangle
%
% ** draw horizontal edges
  \!dimenI=.5\linethickness
  \advance \!xone  -\!dimenI%       ** adjust x-location to overlap corners
  \advance \!xtwo   \!dimenI%       ** ditto
  \putrule from {\!xone} {\!yone} to {\!xtwo} {\!yone} 
  \putrule from {\!xone} {\!ytwo} to {\!xtwo} {\!ytwo} 
%
% ** draw vertical edges
  \advance \!xone   \!dimenI%       ** restore original x-values
  \advance \!xtwo  -\!dimenI% 
  \advance \!yone  -\!dimenI%       ** adjust y-location to overlap corners
  \advance \!ytwo   \!dimenI%       ** ditto
  \putrule from {\!xone} {\!yone} to {\!xone} {\!ytwo} 
  \putrule from {\!xtwo} {\!yone} to {\!xtwo} {\!ytwo} 
  \let\!M=\!MRect%                  ** restore coord/dimen mode
  \ignorespaces}
\def\shaderectanglesoff{%
  \def\!shaderectangle{}%
  \ignorespaces}
\def\!!shaderectangle{%
  \!dimenA=\!xtwo  \advance \!dimenA -\!xone
  \!dimenB=\!ytwo  \advance \!dimenB -\!yone
  \ifdim \!dimenA<\!dimenB
    \!startvshade (\!xone,\!yone,\!ytwo)
    \!lshade      (\!xtwo,\!yone,\!ytwo)
  \else
    \!starthshade (\!yone,\!xone,\!xtwo)
    \!lshade      (\!ytwo,\!xone,\!xtwo)
  \fi
  \ignorespaces}
\def\frame{%
  \!ifnextchar<{\!frame}{\!frame<\!zpt> }}
\long\def\!frame<#1> #2{%
  \beginpicture
    \setcoordinatesystem units <1pt,1pt> point at 0 0 
    \put {#2} [Bl] at 0 0 
    \!dimenA=#1\relax
    \!dimenB=\!wd \advance \!dimenB \!dimenA
    \!dimenC=\!ht \advance \!dimenC \!dimenA
    \!dimenD=\!dp \advance \!dimenD \!dimenA
    \let\!MFr=\!M
    \!setdimenmode
    \putrectangle corners at {-\!dimenA} {-\!dimenD} and {\!dimenB} {\!dimenC}
    \!setcoordmode
    \let\!M=\!MFr
  \endpicture
  \ignorespaces}
\def\rectangle <#1> <#2> {%
  \setbox0=\hbox{}\wd0=#1\ht0=#2\frame {\box0}}
\def\plot{%
  \!ifnextchar"{\!plotfromfile}{\!drawcurve}}
\def\!plotfromfile"#1"{%
  \expandafter\!drawcurve \input #1 /}
\def\setquadratic{%
  \let\!drawcurve=\!qcurve
  \let\!!Shade=\!!qShade
  \let\!!!Shade=\!!!qShade}
\def\setlinear{%
  \let\!drawcurve=\!lcurve
  \let\!!Shade=\!!lShade
  \let\!!!Shade=\!!!lShade}
\def\sethistograms{%
  \let\!drawcurve=\!hcurve}
\def\!qcurve #1 #2 {%
  \!start (#1,#2)
  \!Qjoin}
\def\!Qjoin#1 #2 #3 #4 {%
  \!qjoin (#1,#2) (#3,#4)             % \!qjoin  is defined in QUADRATIC
  \!ifnextchar/{\!finish}{\!Qjoin}}
\def\!lcurve #1 #2 {%
  \!start (#1,#2)
  \!Ljoin}
\def\!Ljoin#1 #2 {%
  \!ljoin (#1,#2)                    % \!ljoin  is defined in LINEAR
  \!ifnextchar/{\!finish}{\!Ljoin}}
\def\!finish/{\ignorespaces}
\def\!hcurve #1 #2 {%
  \edef\!hxS{#1}%
  \edef\!hyS{#2}%
  \!hjoin}
\def\!hjoin#1 #2 {%
  \putrectangle corners at {\!hxS} {\!hyS} and {#1} {#2}
  \edef\!hxS{#1}%
  \!ifnextchar/{\!finish}{\!hjoin}}
\def\vshade #1 #2 #3 {%
  \!startvshade (#1,#2,#3)
  \!Shadewhat}
\def\hshade #1 #2 #3 {%
  \!starthshade (#1,#2,#3)
  \!Shadewhat}
\def\!Shadewhat{%
  \futurelet\!nextchar\!Shade}
\def\!Shade{%
  \if <\!nextchar
    \def\!nextShade{\!!Shade}%
  \else
    \if /\!nextchar
      \def\!nextShade{\!finish}%
    \else
      \def\!nextShade{\!!!Shade}%
    \fi
  \fi
  \!nextShade}
\def\!!lShade<#1> #2 #3 #4 {%
  \!lshade <#1> (#2,#3,#4)                 % \!lshade is defined in SHADING
  \!Shadewhat}
\def\!!!lShade#1 #2 #3 {%
  \!lshade (#1,#2,#3)
  \!Shadewhat} 
\def\!!qShade<#1> #2 #3 #4 #5 #6 #7 {%
  \!qshade <#1> (#2,#3,#4) (#5,#6,#7)      % \!qshade is defined in SHADING
  \!Shadewhat}
\def\!!!qShade#1 #2 #3 #4 #5 #6 {%
  \!qshade (#1,#2,#3) (#4,#5,#6)
  \!Shadewhat} 
\def\setdashpattern <#1>{%
  \def\!Flist{}\def\!Blist{}\def\!UDlist{}%
  \!countA=0
  \!ecfor\!item:=#1\do{%
    \!dimenA=\!item\relax
    \expandafter\!rightappend\the\!dimenA\withCS{\\}\to\!UDlist%
    \advance\!countA  1
    \ifodd\!countA
      \expandafter\!rightappend\the\!dimenA\withCS{\!Rule}\to\!Flist%
      \expandafter\!leftappend\the\!dimenA\withCS{\!Rule}\to\!Blist%
    \else 
      \expandafter\!rightappend\the\!dimenA\withCS{\!Skip}\to\!Flist%
      \expandafter\!leftappend\the\!dimenA\withCS{\!Skip}\to\!Blist%
    \fi}%
  \!leaderlength=\!zpt
  \def\!Rule##1{\advance\!leaderlength  ##1}%
  \def\!Skip##1{\advance\!leaderlength  ##1}%
  \!Flist%
  \ifdim\!leaderlength>\!zpt 
  \else
    \def\!Flist{\!Skip{24in}}\def\!Blist{\!Skip{24in}}\ignorespaces
    \def\!UDlist{\\{\!zpt}\\{24in}}\ignorespaces
    \!leaderlength=24in
  \fi
  \!dashingon}
\def\!dashingon{%
  \def\!advancedashing{\!!advancedashing}%
  \def\!drawlinearsegment{\!lineardashed}%
  \def\!puthline{\!putdashedhline}%
  \def\!putvline{\!putdashedvline}%
  \ignorespaces}% 
\def\!dashingoff{%
  \def\!advancedashing{\relax}%
  \def\!drawlinearsegment{\!linearsolid}%
  \def\!puthline{\!putsolidhline}%
  \def\!putvline{\!putsolidvline}%
  \ignorespaces}
\def\setdots{%
  \!ifnextchar<{\!setdots}{\!setdots<5pt>}}
\def\!setdots<#1>{%
  \!dimenB=#1\advance\!dimenB -\plotsymbolspacing
  \ifdim\!dimenB<\!zpt
    \!dimenB=\!zpt
  \fi
\setdashpattern <\plotsymbolspacing,\!dimenB>}
\def\setdotsnear <#1> for <#2>{%
  \!dimenB=#2\relax  \advance\!dimenB -.05pt  
  \!dimenC=#1\relax  \!countA=\!dimenC 
  \!dimenD=\!dimenB  \advance\!dimenD .5\!dimenC  \!countB=\!dimenD
  \divide \!countB  \!countA
  \ifnum 1>\!countB 
    \!countB=1
  \fi
  \divide\!dimenB  \!countB
  \setdots <\!dimenB>}
\def\setdashes{%
  \!ifnextchar<{\!setdashes}{\!setdashes<5pt>}}
\def\!setdashes<#1>{\setdashpattern <#1,#1>}
\def\setdashesnear <#1> for <#2>{%
  \!dimenB=#2\relax  
  \!dimenC=#1\relax  \!countA=\!dimenC 
  \!dimenD=\!dimenB  \advance\!dimenD .5\!dimenC  \!countB=\!dimenD
  \divide \!countB  \!countA
  \ifodd \!countB 
  \else 
    \advance \!countB  1
  \fi
  \divide\!dimenB  \!countB
  \setdashes <\!dimenB>}
\def\setsolid{%
  \def\!Flist{\!Rule{24in}}\def\!Blist{\!Rule{24in}}%  
  \def\!UDlist{\\{24in}\\{\!zpt}}%
  \!dashingoff}  
\def\!divide#1#2#3{%
  \!dimenB=#1%                      **  dimB  holds current remainder (r)
  \!dimenC=#2%                      **  dimC  holds divisor (d)
  \!dimenD=\!dimenB%                **  dimD  holds quotient q=r/d for this 
  \divide \!dimenD \!dimenC%        **    step, in units of scaled pts
  \!dimenA=\!dimenD%                **  dimA  eventually holds answer (a)
  \multiply\!dimenD \!dimenC%       **  r <-- r - dq
  \advance\!dimenB -\!dimenD%       **  First step complete. Have integer part
%                                   **  of a, and corresponding remainder.
  \!dimenD=\!dimenC%                **  Temporarily use dimD to hold |d|
    \ifdim\!dimenD<\!zpt \!dimenD=-\!dimenD 
  \fi
  \ifdim\!dimenD<64pt%              **  Branch on the magnitude of |d|
    \!divstep[\!tfs]\!divstep[\!tfs]%
  \else 
    \!!divide
  \fi
  #3=\!dimenA\ignorespaces}
\def\!!divide{%
  \ifdim\!dimenD<256pt
    \!divstep[64]\!divstep[32]\!divstep[32]%
  \else 
    \!divstep[8]\!divstep[8]\!divstep[8]\!divstep[8]\!divstep[8]%
    \!dimenA=2\!dimenA
  \fi}
\def\!divstep[#1]{%                 **  #1 = "B"
  \!dimenB=#1\!dimenB%              **  r <-- B*r
  \!dimenD=\!dimenB%                **  dimD  holds quotient q=r/d for this 
    \divide \!dimenD by \!dimenC%   **    step, in units of scaled pts
  \!dimenA=#1\!dimenA%              **  a <-- B*a + q
    \advance\!dimenA by \!dimenD%
  \multiply\!dimenD by \!dimenC%    **  r <-- r - dq
    \advance\!dimenB by -\!dimenD}
\def\Divide <#1> by <#2> forming <#3> {%
  \!divide{#1}{#2}{#3}}
\def\circulararc{%
  \ellipticalarc axes ratio 1:1 }
\def\ellipticalarc axes ratio #1:#2 #3 degrees from #4 #5 center at #6 #7 {%
  \!angle=#3pt\relax%                    ** get angle
  \ifdim\!angle>\!zpt 
    \def\!sign{}%                        ** counterclockwise
  \else 
    \def\!sign{-}\!angle=-\!angle%       ** clockwise
  \fi
  \!xxloc=\!M{#6}\!xunit%                ** convert CENTER to dimension
  \!yyloc=\!M{#7}\!yunit     
  \!xxS=\!M{#4}\!xunit%                  ** get STARTing point on rim of ellipse
  \!yyS=\!M{#5}\!yunit
  \advance\!xxS -\!xxloc%                ** make center of ellipse (0,0)
  \advance\!yyS -\!yyloc
  \!divide\!xxS{#1pt}\!xxS %             ** scale point on ellipse to point on 
  \!divide\!yyS{#2pt}\!yyS %                 corresponding circle
  \let\!MC=\!M%                          ** save current c/d mode
  \!setdimenmode%                        ** go into dimension mode
  \!xS=#1\!xxS  \advance\!xS\!xxloc
  \!yS=#2\!yyS  \advance\!yS\!yyloc
  \!start (\!xS,\!yS)%
  \!loop\ifdim\!angle>14.9999pt%         ** draw in major portion of ellipse 
    \!rotate(\!xxS,\!yyS)by(\!cos,\!sign\!sin)to(\!xxM,\!yyM) 
    \!rotate(\!xxM,\!yyM)by(\!cos,\!sign\!sin)to(\!xxE,\!yyE)
    \!xM=#1\!xxM  \advance\!xM\!xxloc  \!yM=#2\!yyM  \advance\!yM\!yyloc
    \!xE=#1\!xxE  \advance\!xE\!xxloc  \!yE=#2\!yyE  \advance\!yE\!yyloc
    \!qjoin (\!xM,\!yM) (\!xE,\!yE)
    \!xxS=\!xxE  \!yyS=\!yyE 
    \advance \!angle -15pt
  \repeat
  \ifdim\!angle>\!zpt%                   ** complete remaining arc, if any
    \!angle=100.53096\!angle%            ** convert angle to radians, divide
    \divide \!angle 360 %                **   by 2, and multiply by 32
    \!sinandcos\!angle\!!sin\!!cos%      ** get 32*sin & 32*cos
    \!rotate(\!xxS,\!yyS)by(\!!cos,\!sign\!!sin)to(\!xxM,\!yyM) 
    \!rotate(\!xxM,\!yyM)by(\!!cos,\!sign\!!sin)to(\!xxE,\!yyE)
    \!xM=#1\!xxM  \advance\!xM\!xxloc  \!yM=#2\!yyM  \advance\!yM\!yyloc
    \!xE=#1\!xxE  \advance\!xE\!xxloc  \!yE=#2\!yyE  \advance\!yE\!yyloc
    \!qjoin (\!xM,\!yM) (\!xE,\!yE)
  \fi
  \let\!M=\!MC%                          ** restore c/d mode
  \ignorespaces}%                        **   if appropriate
\def\!rotate(#1,#2)by(#3,#4)to(#5,#6){% 
  \!dimenA=#3#1\advance \!dimenA -#4#2%   ** Rcos(x+t)=Rcosx*cost - Rsinx*sint
  \!dimenB=#3#2\advance \!dimenB  #4#1%   ** Rsin(x+t)=Rsinx*cost + Rcosx*sint
  \divide \!dimenA 32  \divide \!dimenB 32 
  #5=\!dimenA  #6=\!dimenB
  \ignorespaces}
\def\!sin{4.17684}%                       ** 32*sin(pi/24) (pi/24=7.5deg)
\def\!cos{31.72624}%                      ** 32*cos(pi/24)
\def\!sinandcos#1#2#3{%
 \!dimenD=#1%                **  angle is expressed in radians/32: 1pt = 1/32rad
 \!dimenA=\!dimenD%          **  dimA will eventually contain 32sin(angle)in pts
 \!dimenB=32pt%              **  dimB will eventually contain 32cos(angle)in pts
 \!removept\!dimenD\!value%  **  get value of 32*angle, without "pt"
 \!dimenC=\!dimenD%          **  holds 32*angle**i/i! in pts
 \!dimenC=\!value\!dimenC \divide\!dimenC by 64 %   ** now 32*angle**2/2
 \advance\!dimenB by -\!dimenC%                     ** 32-32*angle**2/2
 \!dimenC=\!value\!dimenC \divide\!dimenC by 96 %   ** now 32*angle**3/3!
 \advance\!dimenA by -\!dimenC%                     ** now 32*(angle-angle**3/6)
 \!dimenC=\!value\!dimenC \divide\!dimenC by 128 %  ** now 32*angle**4/4!
 \advance\!dimenB by \!dimenC%
 \!removept\!dimenA#2%                              ** set 32*sin(angle)
 \!removept\!dimenB#3%                              ** set 32*cos(angle)
 \ignorespaces}
\def\putrule#1from #2 #3 to #4 #5 {%
  \!xloc=\!M{#2}\!xunit  \!xxloc=\!M{#4}\!xunit%   
  \!yloc=\!M{#3}\!yunit  \!yyloc=\!M{#5}\!yunit%           
  \!dxpos=\!xxloc  \advance\!dxpos by -\!xloc
  \!dypos=\!yyloc  \advance\!dypos by -\!yloc
  \ifdim\!dypos=\!zpt
    \def\!!Line{\!puthline{#1}}\ignorespaces
  \else
    \ifdim\!dxpos=\!zpt
      \def\!!Line{\!putvline{#1}}\ignorespaces
    \else 
       \def\!!Line{}
    \fi
  \fi
  \let\!ML=\!M%           ** save current coord\dimen mode
  \!setdimenmode%         ** express locations in dimens
  \!!Line%
  \let\!M=\!ML%           ** restore previous c/d mode
  \ignorespaces}
\def\!putsolidhline#1{%
  \ifdim\!dxpos>\!zpt 
    \put{\!hline\!dxpos}#1[l] at {\!xloc} {\!yloc}
  \else 
    \put{\!hline{-\!dxpos}}#1[l] at {\!xxloc} {\!yyloc}
  \fi
  \ignorespaces}
\def\!putsolidvline#1{%
  \ifdim\!dypos>\!zpt 
    \put{\!vline\!dypos}#1[b] at {\!xloc} {\!yloc}
  \else 
    \put{\!vline{-\!dypos}}#1[b] at {\!xxloc} {\!yyloc}
  \fi
  \ignorespaces}
\def\!hline#1{\hbox to #1{\leaders \hrule height\linethickness\hfill}}
\def\!vline#1{\vbox to #1{\leaders \vrule width\linethickness\vfill}}
\def\!putdashedhline#1{%
  \ifdim\!dxpos>\!zpt 
    \!DLsetup\!Flist\!dxpos
    \put{\hbox to \!totalleaderlength{\!hleaders}\!hpartialpattern\!Rtrunc}
      #1[l] at {\!xloc} {\!yloc} 
  \else 
    \!DLsetup\!Blist{-\!dxpos}
    \put{\!hpartialpattern\!Ltrunc\hbox to \!totalleaderlength{\!hleaders}}
      #1[r] at {\!xloc} {\!yloc} 
  \fi
  \ignorespaces}
\def\!putdashedvline#1{%
  \!dypos=-\!dypos%            ** vertical leaders go from top to bottom
  \ifdim\!dypos>\!zpt 
    \!DLsetup\!Flist\!dypos 
    \put{\vbox{\vbox to \!totalleaderlength{\!vleaders}
      \!vpartialpattern\!Rtrunc}}#1[t] at {\!xloc} {\!yloc} 
  \else 
    \!DLsetup\!Blist{-\!dypos}
    \put{\vbox{\!vpartialpattern\!Ltrunc
      \vbox to \!totalleaderlength{\!vleaders}}}#1[b] at {\!xloc} {\!yloc} 
  \fi
  \ignorespaces}
\def\!DLsetup#1#2{%            ** Dashed-Line set up
  \let\!RSlist=#1%             ** set !Rule-Skip list
  \!countB=#2%                 ** convert rule length to integer (number of sps)
  \!countA=\!leaderlength%     ** ditto, leaderlength
  \divide\!countB by \!countA% ** number of complete leader units
  \!totalleaderlength=\!countB\!leaderlength
  \!Rresiduallength=#2%
  \advance \!Rresiduallength by -\!totalleaderlength%  \** excess length
  \!Lresiduallength=\!leaderlength
  \advance \!Lresiduallength by -\!Rresiduallength
  \ignorespaces}
\def\!hleaders{%
  \def\!Rule##1{\vrule height\linethickness width##1}%
  \def\!Skip##1{\hskip##1}%
  \leaders\hbox{\!RSlist}\hfill}
\def\!hpartialpattern#1{%
  \!dimenA=\!zpt \!dimenB=\!zpt 
  \def\!Rule##1{#1{##1}\vrule height\linethickness width\!dimenD}%
  \def\!Skip##1{#1{##1}\hskip\!dimenD}%
  \!RSlist}
\def\!vleaders{%
  \def\!Rule##1{\hrule width\linethickness height##1}%
  \def\!Skip##1{\vskip##1}%
  \leaders\vbox{\!RSlist}\vfill}
\def\!vpartialpattern#1{%
  \!dimenA=\!zpt \!dimenB=\!zpt 
  \def\!Rule##1{#1{##1}\hrule width\linethickness height\!dimenD}%
  \def\!Skip##1{#1{##1}\vskip\!dimenD}%
  \!RSlist}
\def\!Rtrunc#1{\!trunc{#1}>\!Rresiduallength}
\def\!Ltrunc#1{\!trunc{#1}<\!Lresiduallength}
\def\!trunc#1#2#3{%          
  \!dimenA=\!dimenB         
  \advance\!dimenB by #1%
  \!dimenD=\!dimenB  \ifdim\!dimenD#2#3\!dimenD=#3\fi
  \!dimenC=\!dimenA  \ifdim\!dimenC#2#3\!dimenC=#3\fi
  \advance \!dimenD by -\!dimenC}
\def\!start (#1,#2){%
  \!plotxorigin=\!xorigin  \advance \!plotxorigin by \!plotsymbolxshift
  \!plotyorigin=\!yorigin  \advance \!plotyorigin by \!plotsymbolyshift
  \!xS=\!M{#1}\!xunit \!yS=\!M{#2}\!yunit
  \!rotateaboutpivot\!xS\!yS
  \!copylist\!UDlist\to\!!UDlist% **\!UDlist has the form \\{dimen1}\\{dimen2}..
%                                 ** Routine will draw dashed line with pen
%                                 ** down for dimen1, up for dimen2, ...
  \!getnextvalueof\!downlength\from\!!UDlist
  \!distacross=\!zpt%             ** 1st point goes at start of curve
  \!intervalno=0 %                ** initialize interval counter
  \global\totalarclength=\!zpt%   ** initialize distance traveled along curve
  \ignorespaces}
\def\!ljoin (#1,#2){%
  \advance\!intervalno by 1
  \!xE=\!M{#1}\!xunit \!yE=\!M{#2}\!yunit
  \!rotateaboutpivot\!xE\!yE
  \!xdiff=\!xE \advance \!xdiff by -\!xS%**  xdiff = xE - xS
  \!ydiff=\!yE \advance \!ydiff by -\!yS%**  ydiff = yE - yS
  \!Pythag\!xdiff\!ydiff\!arclength%     **  arclength = sqrt(xdiff**2+ydiff**2) 
  \global\advance \totalarclength by \!arclength%
  \!drawlinearsegment%   ** set by dashpat to \!linearsolid or \!lineardashed
  \!xS=\!xE \!yS=\!yE%   ** shift ending points to starting points
  \ignorespaces}
\def\!linearsolid{%
  \!npoints=\!arclength
  \!countA=\plotsymbolspacing
  \divide\!npoints by \!countA%      ** now #pts =. arclength/plotsymbolspacing
  \ifnum \!npoints<1 
    \!npoints=1 
  \fi
  \divide\!xdiff by \!npoints
  \divide\!ydiff by \!npoints
  \!xpos=\!xS \!ypos=\!yS
  \loop\ifnum\!npoints>-1
    \!plotifinbounds
    \advance \!xpos by \!xdiff
    \advance \!ypos by \!ydiff
    \advance \!npoints by -1
  \repeat
  \ignorespaces}
\def\!lineardashed{%
% **
  \ifdim\!distacross>\!arclength
    \advance \!distacross by -\!arclength  %nothing to plot in this interval
  \else
    \loop\ifdim\!distacross<\!arclength
%     ** plot point, interpolating linearly in x and y
      \!divide\!distacross\!arclength\!dimenA%  ** dimA = across/arclength
      \!removept\!dimenA\!t%  ** \!t holds value in dimA, without the "pt"
      \!xpos=\!t\!xdiff \advance \!xpos by \!xS
      \!ypos=\!t\!ydiff \advance \!ypos by \!yS
      \!plotifinbounds
      \advance\!distacross by \plotsymbolspacing
      \!advancedashing
    \repeat  
    \advance \!distacross by -\!arclength%    ** prepare for next interval 
  \fi
  \ignorespaces}
\def\!!advancedashing{%
  \advance\!downlength by -\plotsymbolspacing
  \ifdim \!downlength>\!zpt
  \else
    \advance\!distacross by \!downlength
    \!getnextvalueof\!uplength\from\!!UDlist
    \advance\!distacross by \!uplength
    \!getnextvalueof\!downlength\from\!!UDlist
  \fi}
\def\inboundscheckoff{%
  \def\!plotifinbounds{\!plot(\!xpos,\!ypos)}%
  \def\!initinboundscheck{\relax}\ignorespaces}
\def\!!plotifinbounds{%
  \ifdim \!xpos<\!checkleft
  \else
    \ifdim \!xpos>\!checkright
    \else
      \ifdim \!ypos<\!checkbot
      \else
         \ifdim \!ypos>\!checktop
         \else
           \!plot(\!xpos,\!ypos)
         \fi 
      \fi
    \fi
  \fi}
\def\!!initinboundscheck{%
  \!checkleft=\!arealloc     \advance\!checkleft by \!xorigin
  \!checkright=\!arearloc    \advance\!checkright by \!xorigin
  \!checkbot=\!areabloc      \advance\!checkbot by \!yorigin
  \!checktop=\!areatloc      \advance\!checktop by \!yorigin}
\def\!logten#1#2{%
  \expandafter\!!logten#1\!nil
  \!removept\!dimenF#2%
  \ignorespaces}
\def\!!logten#1#2\!nil{%
  \if -#1%
    \!dimenF=\!zpt
    \def\!next{\ignorespaces}%
  \else
    \if +#1%
      \def\!next{\!!logten#2\!nil}%
    \else
      \if .#1%
        \def\!next{\!!logten0.#2\!nil}%
      \else
        \def\!next{\!!!logten#1#2..\!nil}%
      \fi
    \fi
  \fi
  \!next}
\def\!!!logten#1#2.#3.#4\!nil{%
  \!dimenF=1pt %                 ** DimF holds log10 original argument
  \if 0#1%                      
    \!!logshift#3pt %            ** Argument < 1
  \else %                        ** Argument >= 1
    \!logshift#2/%               ** Shift decimal pt as many places
    \!dimenE=#1.#2#3pt %         **   as there are figures in #2
  \fi %                          ** Now dimE holds revised X want log10 of
  \ifdim \!dimenE<\!rootten%          ** Transform X to XX between sqrt(10) 
    \multiply \!dimenE 10 %           **   and 10*sqrt(10)
    \advance  \!dimenF -1pt
  \fi
  \!dimenG=\!dimenE%                  ** dimG <- (XX + 10)
    \advance\!dimenG 10pt
  \advance\!dimenE -10pt %            ** dimE <- (XX - 10)
  \multiply\!dimenE 10 %              ** dimE = 10*(XX-10)
  \!divide\!dimenE\!dimenG\!dimenE%   ** Now dimE=10t==10*(XX-10)/(XX+10)
  \!removept\!dimenE\!t%              ** !t=10t, with "pt" removed
  \!dimenG=\!t\!dimenE%               ** dimG=100t**2
  \!removept\!dimenG\!tt%             ** !tt=100t**2, with "pt" removed
  \!dimenH=\!tt\!tenAe%               ** dimH=10*a5*(10t)**2 /100
    \divide\!dimenH 100
  \advance\!dimenH \!tenAc%           ** ditto + 10*a3
  \!dimenH=\!tt\!dimenH%              ** ditto * (10t)**2 /100
    \divide\!dimenH 100   
  \advance\!dimenH \!tenAa%           ** ditto + 10*a1
  \!dimenH=\!t\!dimenH%               ** ditto * 10t / 100
    \divide\!dimenH 100 %             ** Now dimH = log10(XX) - 1
  \advance\!dimenF \!dimenH}%         ** dimF = log10(X)
\def\!logshift#1{%
  \if #1/%
    \def\!next{\ignorespaces}%
  \else
    \advance\!dimenF 1pt 
    \def\!next{\!logshift}%
  \fi 
  \!next}
 \def\!!logshift#1{%
   \advance\!dimenF -1pt
   \if 0#1%
     \def\!next{\!!logshift}%
   \else
     \if p#1%
       \!dimenF=1pt
       \def\!next{\!dimenE=1p}%
     \else
       \def\!next{\!dimenE=#1.}%
     \fi
   \fi
   \!next}
\def\beginpicture{%
  \setbox\!picbox=\hbox\bgroup%
  \!xleft=\maxdimen  
  \!xright=-\maxdimen
  \!ybot=\maxdimen
  \!ytop=-\maxdimen}
\def\endpicture{%
  \ifdim\!xleft=\maxdimen%  ** check if nothing was put in picbox
    \!xleft=\!zpt \!xright=\!zpt \!ybot=\!zpt \!ytop=\!zpt 
  \fi
  \global\!Xleft=\!xleft \global\!Xright=\!xright
  \global\!Ybot=\!ybot \global\!Ytop=\!ytop
  \egroup%
  \ht\!picbox=\!Ytop  \dp\!picbox=-\!Ybot
  \ifdim\!Ybot>\!zpt
  \else 
    \ifdim\!Ytop<\!zpt
      \!Ybot=\!Ytop
    \else
      \!Ybot=\!zpt
    \fi
  \fi
  \hbox{\kern-\!Xleft\lower\!Ybot\box\!picbox\kern\!Xright}}
\def\endpicturesave <#1,#2>{%
  \endpicture \global #1=\!Xleft \global #2=\!Ybot \ignorespaces}
\def\setcoordinatesystem{%
  \!ifnextchar{u}{\!getlengths }
    {\!getlengths units <\!xunit,\!yunit>}}
\def\!getlengths units <#1,#2>{%
  \!xunit=#1\relax
  \!yunit=#2\relax
  \!ifcoordmode 
    \let\!SCnext=\!SCccheckforRP
  \else
    \let\!SCnext=\!SCdcheckforRP
  \fi
  \!SCnext}
\def\!SCccheckforRP{%
  \!ifnextchar{p}{\!cgetreference }
    {\!cgetreference point at {\!xref} {\!yref} }}
\def\!cgetreference point at #1 #2 {%
  \edef\!xref{#1}\edef\!yref{#2}%
  \!xorigin=\!xref\!xunit  \!yorigin=\!yref\!yunit  
  \!initinboundscheck % ** See linear.tex
  \ignorespaces}
\def\!SCdcheckforRP{%
  \!ifnextchar{p}{\!dgetreference}%
    {\ignorespaces}}
\def\!dgetreference point at #1 #2 {%
  \!xorigin=#1\relax  \!yorigin=#2\relax
  \ignorespaces}
\long\def\put#1#2 at #3 #4 {%
  \!setputobject{#1}{#2}%
  \!xpos=\!M{#3}\!xunit  \!ypos=\!M{#4}\!yunit  
  \!rotateaboutpivot\!xpos\!ypos%
  \advance\!xpos -\!xorigin  \advance\!xpos -\!xshift
  \advance\!ypos -\!yorigin  \advance\!ypos -\!yshift
  \kern\!xpos\raise\!ypos\box\!putobject\kern-\!xpos%
  \!doaccounting\ignorespaces}
\long\def\multiput #1#2 at {%
  \!setputobject{#1}{#2}%
  \!ifnextchar"{\!putfromfile}{\!multiput}}
\def\!putfromfile"#1"{%
  \expandafter\!multiput \input #1 /}
\def\!multiput{%
  \futurelet\!nextchar\!!multiput}
\def\!!multiput{%
  \if *\!nextchar
    \def\!nextput{\!alsoby}%
  \else
    \if /\!nextchar
      \def\!nextput{\!finishmultiput}%
    \else
      \def\!nextput{\!alsoat}%
    \fi
  \fi
  \!nextput}
\def\!finishmultiput/{%
  \setbox\!putobject=\hbox{}%
  \ignorespaces}
\def\!alsoat#1 #2 {%
  \!xpos=\!M{#1}\!xunit  \!ypos=\!M{#2}\!yunit  
  \!rotateaboutpivot\!xpos\!ypos%
  \advance\!xpos -\!xorigin  \advance\!xpos -\!xshift
  \advance\!ypos -\!yorigin  \advance\!ypos -\!yshift
  \kern\!xpos\raise\!ypos\copy\!putobject\kern-\!xpos%
  \!doaccounting
  \!multiput}
\def\!alsoby*#1 #2 #3 {%
  \!dxpos=\!M{#2}\!xunit \!dypos=\!M{#3}\!yunit 
  \!rotateonly\!dxpos\!dypos
  \!ntemp=#1%
  \!!loop\ifnum\!ntemp>0
    \advance\!xpos by \!dxpos  \advance\!ypos by \!dypos
    \kern\!xpos\raise\!ypos\copy\!putobject\kern-\!xpos%
    \advance\!ntemp by -1
  \repeat
  \!doaccounting 
  \!multiput}
\def\accountingon{\def\!doaccounting{\!!doaccounting}\ignorespaces}
\def\!!doaccounting{%
  \!xtemp=\!xpos  
  \!ytemp=\!ypos
  \ifdim\!xtemp<\!xleft 
     \!xleft=\!xtemp 
  \fi
  \advance\!xtemp by  \!wd 
  \ifdim\!xright<\!xtemp 
    \!xright=\!xtemp
  \fi
  \advance\!ytemp by -\!dp
  \ifdim\!ytemp<\!ybot  
    \!ybot=\!ytemp
  \fi
  \advance\!ytemp by  \!dp
  \advance\!ytemp by  \!ht 
  \ifdim\!ytemp>\!ytop  
    \!ytop=\!ytemp  
  \fi}
\long\def\!setputobject#1#2{%
  \setbox\!putobject=\hbox{#1}%
  \!ht=\ht\!putobject  \!dp=\dp\!putobject  \!wd=\wd\!putobject
  \wd\!putobject=\!zpt
  \!xshift=.5\!wd   \!yshift=.5\!ht   \advance\!yshift by -.5\!dp
  \edef\!putorientation{#2}%
  \expandafter\!SPOreadA\!putorientation[]\!nil%
  \expandafter\!SPOreadB\!putorientation<\!zpt,\!zpt>\!nil\ignorespaces}
\def\!SPOreadA#1[#2]#3\!nil{\!etfor\!orientation:=#2\do\!SPOreviseshift}
\def\!SPOreadB#1<#2,#3>#4\!nil{\advance\!xshift by -#2\advance\!yshift by -#3}
\def\!SPOreviseshift{%
  \if l\!orientation 
    \!xshift=\!zpt
  \else 
    \if r\!orientation 
      \!xshift=\!wd
    \else 
      \if b\!orientation
        \!yshift=-\!dp
      \else 
        \if B\!orientation 
          \!yshift=\!zpt
        \else 
          \if t\!orientation 
            \!yshift=\!ht
          \fi 
        \fi
      \fi
    \fi
  \fi}
\long\def\!dimenput#1#2(#3,#4){%
  \!setputobject{#1}{#2}%
  \!xpos=#3\advance\!xpos by -\!xshift
  \!ypos=#4\advance\!ypos by -\!yshift
  \kern\!xpos\raise\!ypos\box\!putobject\kern-\!xpos%
  \!doaccounting\ignorespaces}
\def\!setdimenmode{%
  \let\!M=\!M!!\ignorespaces}
\def\!setcoordmode{%
  \let\!M=\!M!\ignorespaces}
\def\!ifcoordmode{%
  \ifx \!M \!M!}
\def\!ifdimenmode{%
  \ifx \!M \!M!!}
\def\!M!#1#2{#1#2} 
\def\!M!!#1#2{#1}
\let\setdimensionmode=\!setdimenmode
\let\setcoordinatemode=\!setcoordmode
\def\!stack[#1]{%
  \let\!lglue=\hfill \let\!rglue=\hfill
  \expandafter\let\csname !#1glue\endcsname=\relax
  \!ifnextchar<{\!!stack}{\!!stack<\stackleading>}}
\def\!!stack<#1>#2{%
  \vbox{\def\!valueslist{}\!ecfor\!value:=#2\do{%
    \expandafter\!rightappend\!value\withCS{\\}\to\!valueslist}%
    \!lop\!valueslist\to\!value
    \let\\=\cr\lineskiplimit=\maxdimen\lineskip=#1%
    \baselineskip=-1000pt\halign{\!lglue##\!rglue\cr \!value\!valueslist\cr}}%
  \ignorespaces}
\def\!lines[#1]#2{%
  \let\!lglue=\hfill \let\!rglue=\hfill
  \expandafter\let\csname !#1glue\endcsname=\relax
  \vbox{\halign{\!lglue##\!rglue\cr #2\crcr}}%
  \ignorespaces}
\def\!Lines[#1]#2{%
  \let\!lglue=\hfill \let\!rglue=\hfill
  \expandafter\let\csname !#1glue\endcsname=\relax
  \vtop{\halign{\!lglue##\!rglue\cr #2\crcr}}%
  \ignorespaces}
\def\setplotsymbol(#1#2){%
  \!setputobject{#1}{#2}
  \setbox\!plotsymbol=\box\!putobject%
  \!plotsymbolxshift=\!xshift 
  \!plotsymbolyshift=\!yshift 
  \ignorespaces}
\def\!!plot(#1,#2){%
  \!dimenA=-\!plotxorigin \advance \!dimenA by #1%    ** over
  \!dimenB=-\!plotyorigin \advance \!dimenB by #2%    ** up
  \kern\!dimenA\raise\!dimenB\copy\!plotsymbol\kern-\!dimenA%
  \ignorespaces}
\def\!!!plot(#1,#2){%
  \!dimenA=-\!plotxorigin \advance \!dimenA by #1%    ** over
  \!dimenB=-\!plotyorigin \advance \!dimenB by #2%    ** up
  \kern\!dimenA\raise\!dimenB\copy\!plotsymbol\kern-\!dimenA%
  \!countE=\!dimenA
  \!countF=\!dimenB
  \immediate\write\!replotfile{\the\!countE,\the\!countF.}%
  \ignorespaces}
\def\savelinesandcurves on "#1" {%
  \immediate\closeout\!replotfile
  \immediate\openout\!replotfile=#1%
  \let\!plot=\!!!plot}
\def\dontsavelinesandcurves {%
  \let\!plot=\!!plot}
\xdef\!Commentsignal{%}}
\def\writesavefile#1 {%
  \immediate\write\!replotfile{\!Commentsignal #1}%
  \ignorespaces}

% ** \replot "FILE_NAME"
% **   Replots the locations saved earlier under \savelinesandcurves
% **   on "FILE_NAME"
% ** See Subsection 5.6 of the manual.
\def\replot"#1" {%
  \expandafter\!replot\input #1 /}
\def\!replot#1,#2. {%
  \!dimenA=#1sp
  \kern\!dimenA\raise#2sp\copy\!plotsymbol\kern-\!dimenA
  \futurelet\!nextchar\!!replot}
\def\!!replot{%
  \if /\!nextchar 
    \def\!next{\!finish}%
  \else
    \def\!next{\!replot}%
  \fi
  \!next}
% **************************************************
% ***  PYTHAGORAS  (Euclidean distance function) ***
% **************************************************

% ** User command:
% **   \placehypotenuse for <dimension1> and <dimension2> in <register> 

% ** Internal command:
% **   \!Pythag{X}{Y}{Z}
% **     Input X,Y are dimensions, or dimension registers.
% **     Output Z == sqrt(X**2+Y**2) must be a dimension register.
% **     Assumes that |X|+|Y| < 2048pt (about 28in).
 
% ** Without loss of generality, suppose  x>0, y>0.  Put s = x+y,
% **   z = sqrt(x**2+y**2). Then  z = s*f,  where  f = sqrt(t**2 + (1-t)**2)
% **   = sqrt((1+tau**2)/2), where  t = x/s  and  tau = 2(t-1/2) .
 
% ** Uses the \!divide macro (which uses registers \!dimenA--\!dimenD.
% ** Uses the \!removept macro   (e.g., 123.45pt --> 123.45)
% ** Uses registers \!dimenE--\!dimenI.
\def\!Pythag#1#2#3{%
  \!dimenE=#1\relax                                     
  \ifdim\!dimenE<\!zpt 
    \!dimenE=-\!dimenE 
  \fi%                                            ** dimE = |x|
  \!dimenF=#2\relax
  \ifdim\!dimenF<\!zpt 
    \!dimenF=-\!dimenF 
  \fi%                                            ** dimF = |y|
  \advance \!dimenF by \!dimenE%                  ** dimF = s = |x|+|y|
  \ifdim\!dimenF=\!zpt 
    \!dimenG=\!zpt%                               ** dimG = z = sqrt(x**2+y**2)
  \else 
    \!divide{8\!dimenE}\!dimenF\!dimenE%          ** now dimE = 8t = (8|x|)/s
    \advance\!dimenE by -4pt%                     ** 8tau = (8t-4)*2
      \!dimenE=2\!dimenE%                         **   (tau = 2*t - 1)
    \!removept\!dimenE\!!t%                       ** 8tau, without "pt"
    \!dimenE=\!!t\!dimenE%                        ** (8tau)**2, in pts
    \advance\!dimenE by 64pt%                     ** u = [64 + (8tau)**2]/2
    \divide \!dimenE by 2%                        **   [u = (8f)**2]
    \!dimenH=7pt%                                 ** initial guess g at sqrt(u)
    \!!Pythag\!!Pythag\!!Pythag%                  ** 3 iterations give sqrt(u)
    \!removept\!dimenH\!!t%                       ** 8f=sqrt(u), without "pt"
    \!dimenG=\!!t\!dimenF%                        ** z = (8f)*s/8
    \divide\!dimenG by 8
  \fi
  #3=\!dimenG
  \ignorespaces}

\def\!!Pythag{%                                   ** Newton-Raphson for sqrt
  \!divide\!dimenE\!dimenH\!dimenI%               ** v = u/g
  \advance\!dimenH by \!dimenI%                   ** g <-- (g + u/g)/2
    \divide\!dimenH by 2}

% **  \placehypotenuse for <XI> and <ETA> in <ZETA>
% **  See Subsection 9.3 of the manual.
\def\placehypotenuse for <#1> and <#2> in <#3> {%
  \!Pythag{#1}{#2}{#3}}

% **********************************************
% *** QUADRATIC ARC  (Draws a quadratic arc) ***
% **********************************************
 
% **  Internal command
% **    \!qjoin (XCOORD1,YCOORD1) (XCOORD2,YCOORD2)
 
% **  \!qjoin (XCOORD1,YCOORD1) (XCOORD2,YCOORD2)
% **  Draws an arc starting at the (last) point specified by the most recent
% **  \!qjoin, or \!ljoin, or \!start  and passing through (X_1,Y_1), (X_2,Y_2).
% **  Uses quadratic interpolation in both  x  and  y:  
% **    x(t), 0 <= t <= 1, interpolates  x_0, x_1, x_2  at  t=0, .5, 1
% **    y(t), 0 <= t <= 1, interpolates  y_0, y_1, y_2  at  t=0, .5, 1
 
\def\!qjoin (#1,#2) (#3,#4){%
  \advance\!intervalno by 1
  \!ifcoordmode
    \edef\!xmidpt{#1}\edef\!ymidpt{#2}%
  \else
    \!dimenA=#1\relax \edef\!xmidpt{\the\!dimenA}%
    \!dimenA=#2\relax \edef\!xmidpt{\the\!dimenA}%
  \fi
  \!xM=\!M{#1}\!xunit  \!yM=\!M{#2}\!yunit   \!rotateaboutpivot\!xM\!yM
  \!xE=\!M{#3}\!xunit  \!yE=\!M{#4}\!yunit   \!rotateaboutpivot\!xE\!yE
%
% ** Find coefficients for x(t)=a_x + b_x*t + c_x*t**2
  \!dimenA=\!xM  \advance \!dimenA by -\!xS%   ** dimA = I = xM - xS
  \!dimenB=\!xE  \advance \!dimenB by -\!xM%   ** dimB = II = xE-xM
  \!xB=3\!dimenA \advance \!xB by -\!dimenB%   ** b=3I-II
  \!xC=2\!dimenB \advance \!xC by -2\!dimenA%  ** c=2(II-I)
%
% ** Find coefficients for y(t)=y_x + b_y*t + c_y*t**2
  \!dimenA=\!yM  \advance \!dimenA by -\!yS%   
  \!dimenB=\!yE  \advance \!dimenB by -\!yM%  
  \!yB=3\!dimenA \advance \!yB by -\!dimenB%  
  \!yC=2\!dimenB \advance \!yC by -2\!dimenA% 
%
% ** Use Simpson's rule to calculate arc length over [0,1/2]:
% **   arc length = 1/2[1/6 f(0) + 4/6 f(1/4) + 1/6 f(1/2)]
% ** with f(t) = sqrt(x'(t)**2 + y'(t)**2).
  \!xprime=\!xB  \!yprime=\!yB%          ** x'(t) = b + 2ct
  \!dxprime=.5\!xC  \!dyprime=.5\!yC%    ** dt=1/4 ==> dx'(t) = c/2
  \!getf \!midarclength=\!dimenA
  \!getf \advance \!midarclength by 4\!dimenA
  \!getf \advance \!midarclength by \!dimenA
  \divide \!midarclength by 12
%
% ** Get arc length over [0,1].
  \!arclength=\!dimenA
  \!getf \advance \!arclength by 4\!dimenA
  \!getf \advance \!arclength by \!dimenA
  \divide \!arclength by 12%             ** Now have arc length over [1/2,1]
  \advance \!arclength by \!midarclength
  \global\advance \totalarclength by \!arclength
%
%
% ** Check to see if there's anything to plot in this interval
  \ifdim\!distacross>\!arclength 
    \advance \!distacross by -\!arclength%   ** nothing 
  \else
    \!initinverseinterp%  ** initialize for inverse interpolation on arc length
    \loop\ifdim\!distacross<\!arclength%     ** loop over points on arc 
      \!inverseinterp%    ** find  t  such that arc length[0,t] = distacross,
%                         **   using inverse quadratic interpolation
%                         ** now evaluate x(t)=(c*t + b)*t + a
      \!xpos=\!t\!xC \advance\!xpos by \!xB
        \!xpos=\!t\!xpos \advance \!xpos by \!xS
%                                             ** evaluate y(t)
      \!ypos=\!t\!yC \advance\!ypos by \!yB
        \!ypos=\!t\!ypos \advance \!ypos by \!yS
      \!plotifinbounds%                       ** plot point if in bounds
      \advance\!distacross \plotsymbolspacing%** advance arc length for next pt
      \!advancedashing%                       ** see "linear"
    \repeat  
    \advance \!distacross by -\!arclength%    ** prepare for next interval 
  \fi
  \!xS=\!xE%              ** shift ending points to starting points
  \!yS=\!yE
  \ignorespaces}

% ** \!getf -- Calculates sqrt(x'(t)**2 + y'(t)**2) and advances
% **   x'(t) and y'(t)
\def\!getf{\!Pythag\!xprime\!yprime\!dimenA%
  \advance\!xprime by \!dxprime
  \advance\!yprime by \!dyprime}

% ** \!initinverseinterp -- initializes for inverse quadratic interpolation
% ** of arc length provided  1/3 < midarclength/arclength < 2/3; otherwise
% ** initializes for inverse linear interpolation.
\def\!initinverseinterp{%
  \ifdim\!arclength>\!zpt
    \!divide{8\!midarclength}\!arclength\!dimenE% ** dimE=8w=8r/s, where  r 
%                                               **  = midarclength, s=arclength
% **  Test for  w  out of range:  w<1/3  or w>2/3
    \ifdim\!dimenE<\!wmin \!setinverselinear
    \else 
      \ifdim\!dimenE>\!wmax \!setinverselinear
      \else%                                    ** w  in range: initialize
        \def\!inverseinterp{\!inversequad}\ignorespaces
%
% **     Calculate the coefficients  \!beta  and  \!gamma  of the quadratic
% **                    t = \!beta*v + \!gamma*v**2
% **     taking the values  t=0, 1/2, 1  at  v=0, w==r/s, 1  respectively:
% **        \!beta = (1/2 - w**2)/[w(1-w)] 
% **        \!gamma = 1 - beta.
%
         \!removept\!dimenE\!Ew%           **  8w, without "pt"
         \!dimenF=-\!Ew\!dimenE%           **  -(8w)**2
         \advance\!dimenF by 32pt%         **  32 - (8w)**2
         \!dimenG=8pt 
         \advance\!dimenG by -\!dimenE%    **  8 - 8w
         \!dimenG=\!Ew\!dimenG%            **  (8w)*(8-8w)
         \!divide\!dimenF\!dimenG\!beta%   **  beta = (32-(8w)**2)/(8w(8-8w))
%                                          **       = (1/2 - w**2)/(w(1-w))
         \!gamma=1pt
         \advance \!gamma by -\!beta%      **  gamma = 1-beta
      \fi%       ** end of the \ifdim\!dimenE>\!wmax
    \fi%         ** end of the \ifdim\!dimenE<\!wmin
  \fi%           ** end of the \ifdim\!arclength>\!zpt
  \ignorespaces}

% ** For 0 <= t <= 1, let AL(t) = arclength[0,t]/arclength[0,1]; note
% ** AL(0)=0, AL(1/2)=midarclength/arclength, AL(1)=1.  This routine
% ** calculates an approximation to AL^{-1}(distance across/arclength),
% ** using the assumption that AL^{-1} is quadratic.  Specifically, 
% ** it finds  t  such that
% **    AL^{-1}(v) =. t = v*(\!beta + \!gamma*v)
% ** where  \!beta  and  \!gamma  are set by \!initinv, and where
% ** v=distance across/arclength
\def\!inversequad{%
  \!divide\!distacross\!arclength\!dimenG%   ** dimG = v = distacross/arclength
  \!removept\!dimenG\!v%                     ** v, without "pt"
  \!dimenG=\!v\!gamma%                       ** gamma*v
  \advance\!dimenG by \!beta%                ** beta + gamma*v
  \!dimenG=\!v\!dimenG%                      ** t = v*(beta + gamma*v)
  \!removept\!dimenG\!t}%                    ** t, without "pt"

% ** When  w <= 1/3  or  w >= 2/3, the following routine writes (using
% ** plain TEK's \wlog command) a warning message on the user's log file,
% ** and initializes for inverse linear interpolation on arc length.
\def\!setinverselinear{%
  \def\!inverseinterp{\!inverselinear}%
  \divide\!dimenE by 8 \!removept\!dimenE\!t
  \!countC=\!intervalno \multiply \!countC 2
  \!countB=\!countC     \advance \!countB -1
  \!countA=\!countB     \advance \!countA -1
  \wlog{\the\!countB th point (\!xmidpt,\!ymidpt) being plotted 
    doesn't lie in the}%
  \wlog{ middle third of the arc between the \the\!countA th 
    and \the\!countC th points:}%
  \wlog{ [arc length \the\!countA\space to \the\!countB]/[arc length 
    \the \!countA\space to \the\!countC]=\!t.}%
  \ignorespaces}
 
% **  Inverse linear interpolation
\def\!inverselinear{% 
  \!divide\!distacross\!arclength\!dimenG
  \!removept\!dimenG\!t}

% **************************************
% **  ROTATIONS  (Handles rotations) ***
% **************************************
 
% ** User commands
% **   \startrotation [by COS_OF_ANGLE SIN_OF_ANGLE] [about XPIVOT YPIVOT]
% **   \stoprotation

% **   \startrotation [by COS_OF_ANGLE SIN_OF_ANGLE] [about XPIVOT YPIVOT]
% ** Future (XCOORD,YCOORD)'s will be rotated about (XPIVOT,YPIVOT) 
% ** by the angle with the give COS and SIN. Both fields are optional.
% ** [COS,SIN] defaults to previous value, or (1,0).
% ** (XPIVOT,YPIVOT) defaults to previous value, or (0,0)
% ** You can't change the coordinate system in the scope of a rotation.
% ** See Subsection 9.1 of the manual.
\def\startrotation{%
  \let\!rotateaboutpivot=\!!rotateaboutpivot
  \let\!rotateonly=\!!rotateonly
  \!ifnextchar{b}{\!getsincos }%
    {\!getsincos by {\!cosrotationangle} {\!sinrotationangle} }}
\def\!getsincos by #1 #2 {%
  \edef\!cosrotationangle{#1}%
  \edef\!sinrotationangle{#2}%
  \!ifcoordmode 
    \let\!ROnext=\!ccheckforpivot
  \else
    \let\!ROnext=\!dcheckforpivot
  \fi
  \!ROnext}
\def\!ccheckforpivot{%
  \!ifnextchar{a}{\!cgetpivot}%
    {\!cgetpivot about {\!xpivotcoord} {\!ypivotcoord} }}
\def\!cgetpivot about #1 #2 {%
  \edef\!xpivotcoord{#1}%
  \edef\!ypivotcoord{#2}%
  \!xpivot=#1\!xunit  \!ypivot=#2\!yunit
  \ignorespaces}
\def\!dcheckforpivot{%
  \!ifnextchar{a}{\!dgetpivot}{\ignorespaces}}
\def\!dgetpivot about #1 #2 {%
  \!xpivot=#1\relax  \!ypivot=#2\relax
  \ignorespaces}

% ** Following terminates rotation.
% ** See Subsection 9.1 of the manual.
\def\stoprotation{%
  \let\!rotateaboutpivot=\!!!rotateaboutpivot
  \let\!rotateonly=\!!!rotateonly
  \ignorespaces}
 
% ** !!rotateaboutpivot{XREG}{YREG}
% ** XREG <-- xpvt + cos(angle)*(XREG-xpvt) - sin(angle)*(YREG-ypvt)
% ** YREG <-- ypvt + cos(angle)*(YREG-ypvt) + sin(angle)*(XREG-xpvt)
% ** XREG,YREG are dimension registers. Can't be \!dimenA to \!dimenD
\def\!!rotateaboutpivot#1#2{%
  \!dimenA=#1\relax  \advance\!dimenA -\!xpivot
  \!dimenB=#2\relax  \advance\!dimenB -\!ypivot
  \!dimenC=\!cosrotationangle\!dimenA
    \advance \!dimenC -\!sinrotationangle\!dimenB
  \!dimenD=\!cosrotationangle\!dimenB
    \advance \!dimenD  \!sinrotationangle\!dimenA
  \advance\!dimenC \!xpivot  \advance\!dimenD \!ypivot
  #1=\!dimenC  #2=\!dimenD
  \ignorespaces}

% ** \!!rotateonly{XREG}{YREG}
% ** Like \!!rotateaboutpivot, but with a pivot of  (0,0)
\def\!!rotateonly#1#2{%
  \!dimenA=#1\relax  \!dimenB=#2\relax 
  \!dimenC=\!cosrotationangle\!dimenA
    \advance \!dimenC -\!rotsign\!sinrotationangle\!dimenB
  \!dimenD=\!cosrotationangle\!dimenB
    \advance \!dimenD  \!rotsign\!sinrotationangle\!dimenA
  #1=\!dimenC  #2=\!dimenD
  \ignorespaces}
\def\!rotsign{}
\def\!!!rotateaboutpivot#1#2{\relax}
\def\!!!rotateonly#1#2{\relax}
\stoprotation

\def\!reverserotateonly#1#2{%
  \def\!rotsign{-}%
  \!rotateonly{#1}{#2}%
  \def\!rotsign{}%
  \ignorespaces}

\def\!getspan span <#1>{%
  \!dshade=#1\relax
  \!ifcoordmode 
    \let\!GRnext=\!GRccheckforAP
  \else
    \let\!GRnext=\!GRdcheckforAP
  \fi
  \!GRnext}
\def\!GRccheckforAP{%
  \!ifnextchar{p}{\!cgetanchor }
    {\!cgetanchor point at {\!xshadesave} {\!yshadesave} }}
\def\!cgetanchor point at #1 #2 {%
  \edef\!xshadesave{#1}\edef\!yshadesave{#2}%
  \!xshade=\!xshadesave\!xunit  \!yshade=\!yshadesave\!yunit
  \ignorespaces}
\def\!GRdcheckforAP{%
  \!ifnextchar{p}{\!dgetanchor}%
    {\ignorespaces}}
\def\!dgetanchor point at #1 #2 {%
  \!xshade=#1\relax  \!yshade=#2\relax
  \ignorespaces}

% **  \setshadesymbol  [<LS, RS, BS, TS>] ({SHADESYMBOL}
% **    <XDIMEN,YDIMEN> [ORIENTATION])
% **  Saves SHADESYMBOL away in an hbox for use with shading routines.
% **  A shade symbol will not be plotted if its plot position comes within
% **    distance LS of the left boundary,  RS of the right boundary,  TS of the
% **    top boundary,  BS of the bottom boundary.  These parameters have 
% **    default values that should work in most cases (see below).
% **    To override a default value, specify the replacement value
% **    in the appropriate subfield of the shrinkages field.
% **    0pt may be coded as  "z" (without the quotes).  To accept a
% **    default value, leave the field empty.  Thus
% **      [,z,,5pt]  sets  LS=default, RS=0pt, BS=default, TS=5pt .
% **    Skipping the shrinkages field accepts all the defaults.
% **  See Subsection 7.1 of the manual.
\def\setshadesymbol{%
  \!ifnextchar<{\!setshadesymbol}{\!setshadesymbol<,,,> }}

\def\!setshadesymbol <#1,#2,#3,#4> (#5#6){%
% **  set the shadesymbol
  \!setputobject{#5}{#6}%                        
  \setbox\!shadesymbol=\box\!putobject%
  \!shadesymbolxshift=\!xshift \!shadesymbolyshift=\!yshift
%
% **  set the shrinkages
  \!dimenA=\!xshift \advance\!dimenA \!smidge% ** default LS = xshift - smidge
  \!override\!dimenA{#1}\!lshrinkage%         
  \!dimenA=\!wd \advance \!dimenA -\!xshift%   ** default RS = width - xshift
    \advance\!dimenA \!smidge%                                  - smidge
    \!override\!dimenA{#2}\!rshrinkage
  \!dimenA=\!dp \advance \!dimenA \!yshift%    ** default BS = depth + yshift
    \advance\!dimenA \!smidge%                                  - smidge
    \!override\!dimenA{#3}\!bshrinkage
  \!dimenA=\!ht \advance \!dimenA -\!yshift%   ** default TS = height - yshift
    \advance\!dimenA \!smidge%                                  - smidge
    \!override\!dimenA{#4}\!tshrinkage
  \ignorespaces}
\def\!smidge{-.2pt}%

% ** \!override{NOMINAL DIMEN}{REPLACEMENT DIMEN}{DIMEN}
% ** Overrides the NOMINAL DIMEN by the REPLACEMENT DIMEN to produce DIMEN,
% ** according to the following rules:
% **   REPLACEMENT DIMEN empty: DIMEN <-- NOMINAL DIMEN
% **   REPLACEMENT DIMEN z:     DIMEN <-- 0pt
% **   otherwise:               DIMEN <-- REPLACEMENT DIMEN
% ** DIMEN must be a dimension register
\def\!override#1#2#3{%
  \edef\!!override{#2}% 
  \ifx \!!override\empty
    #3=#1\relax
  \else
    \if z\!!override
      #3=\!zpt
    \else
      \ifx \!!override\!blankz
        #3=\!zpt
      \else
        #3=#2\relax
      \fi
    \fi
  \fi
  \ignorespaces}
\def\!blankz{ z}

\setshadesymbol ({\fiverm .})%       ** initialize plotsymbol
%                                    ** \fivesy ^^B  is a small cross

% ** \!startvshade [at] (xS,ybS,ytS)
% ** Initiates vertical shading mode
\def\!startvshade#1(#2,#3,#4){%
  \let\!!xunit=\!xunit%
  \let\!!yunit=\!yunit%
  \let\!!xshade=\!xshade%
  \let\!!yshade=\!yshade%
  \def\!getshrinkages{\!vgetshrinkages}%
  \let\!setshadelocation=\!vsetshadelocation%
  \!xS=\!M{#2}\!!xunit
  \!ybS=\!M{#3}\!!yunit
  \!ytS=\!M{#4}\!!yunit
  \!shadexorigin=\!xorigin  \advance \!shadexorigin \!shadesymbolxshift
  \!shadeyorigin=\!yorigin  \advance \!shadeyorigin \!shadesymbolyshift
  \ignorespaces}
 
% ** \!starthshade [at] (yS,xlS,xrS)
% ** Initiates horizontal shading mode
\def\!starthshade#1(#2,#3,#4){%
  \let\!!xunit=\!yunit%
  \let\!!yunit=\!xunit%
  \let\!!xshade=\!yshade%
  \let\!!yshade=\!xshade%
  \def\!getshrinkages{\!hgetshrinkages}%
  \let\!setshadelocation=\!hsetshadelocation%
  \!xS=\!M{#2}\!!xunit
  \!ybS=\!M{#3}\!!yunit
  \!ytS=\!M{#4}\!!yunit
  \!shadexorigin=\!xorigin  \advance \!shadexorigin \!shadesymbolxshift
  \!shadeyorigin=\!yorigin  \advance \!shadeyorigin \!shadesymbolyshift
  \ignorespaces}

% **  \!lattice{ANCHOR}{SPAN}{LOCATION}{INDEX}{LATTICE LOCATION}
% **  Consider the lattice with points  ANCHOR + j*SPAN. This routine determines
% **  the index  k  of the smallest lattice point >= LOCATION, and sets
% **  LATTICE LOCATION = ANCHOR + k*SPAN.
% **  INDEX is assumed to be a count register, LATTICE LOCATION a dimen reg.
\def\!lattice#1#2#3#4#5{%
  \!dimenA=#1%                        ** dimA = ANCHOR
  \!dimenB=#2%                        ** dimB = SPAN  (assumed > 0pt)
  \!countB=\!dimenB%                  ** ctB  = SPAN, as a count
%
% ** Determine index of smallest lattice point >= LOCATION
  \!dimenC=#3%                        ** dimC = LOCATION
  \advance\!dimenC -\!dimenA%         ** now dimC = LOCATION-ANCHOR
  \!countA=\!dimenC%                  ** ctA = above, as a count
  \divide\!countA \!countB%           ** now ctA = desired index, if dimC <= 0
  \ifdim\!dimenC>\!zpt
    \!dimenD=\!countA\!dimenB%        ** (tentative k)*span
    \ifdim\!dimenD<\!dimenC%          ** if this is false, ctA = desired index
      \advance\!countA 1 %            ** if true, have to add 1
    \fi
  \fi
  \!dimenC=\!countA\!dimenB%          ** lattice location = anchor + ctA*span
    \advance\!dimenC \!dimenA
  #4=\!countA%                        ** the desired index
  #5=\!dimenC%                        ** corresponding lattice location
  \ignorespaces}

% ** \!qshade [with shrinkages] [[LS,RS,BS,TS]]
% ***** during vertical shading:
% **    [the region from (xS,ybS,ytS) to] (xM,ybM,ytM) [and] (xE,ybE,ytE)
% ** Shades the region {(x,y): xS <= x <= xE, yb(x) <= y <= yt(x)}, where 
% **   yb is the quadratic thru (xS,ybS) & (xM,ybM) & (xE,ybE)
% **   yt is the quadratic thru (xS,ytS) & (xM,ybM) & (xE,ytE)
% ** xS,ybS,ytS are either given by \!startvshade or carried over
% **   as the ending values of the immediately preceding \!qshade.
% ** For the interpretation of LS, RS, BS, & TS, see \setshadesymbol. The
% **   values set there can be overridden, for the course of this \!qshade
% **   only, in the same manner as overrides are specified for
% **   \setshadesymbol.
% ***** during horizontal shading:
% **    [the region from (yS,xlS,xrS) to] (yM,xlM,xrM) [and] (yE,xlE,xrE)
\def\!qshade#1(#2,#3,#4)#5(#6,#7,#8){%
  \!xM=\!M{#2}\!!xunit
  \!ybM=\!M{#3}\!!yunit
  \!ytM=\!M{#4}\!!yunit
  \!xE=\!M{#6}\!!xunit
  \!ybE=\!M{#7}\!!yunit
  \!ytE=\!M{#8}\!!yunit
  \!getcoeffs\!xS\!ybS\!xM\!ybM\!xE\!ybE\!ybB\!ybC%**Get coefficients B & C for
  \!getcoeffs\!xS\!ytS\!xM\!ytM\!xE\!ytE\!ytB\!ytC%**y=y0 + B(x-X0) + C(x-X0)**2
  \def\!getylimits{\!qgetylimits}%
  \!shade{#1}\ignorespaces}
 
% ** \!lshade ... (xE,ybE,ytE)
% ** This is like \!qshade, but the top and bottom boundaries are linear,
% ** rather than quadratic.
\def\!lshade#1(#2,#3,#4){%
  \!xE=\!M{#2}\!!xunit
  \!ybE=\!M{#3}\!!yunit
  \!ytE=\!M{#4}\!!yunit
  \!dimenE=\!xE  \advance \!dimenE -\!xS%   ** xE-xS
  \!dimenC=\!ytE \advance \!dimenC -\!ytS%  ** ytE-ytS
  \!divide\!dimenC\!dimenE\!ytB%            ** ytB = (ytE-ytS)/(xE-xS)
  \!dimenC=\!ybE \advance \!dimenC -\!ybS%  ** ybE-ybS
  \!divide\!dimenC\!dimenE\!ybB%            ** ybB = (ybE-ybS)/(xE-xS)
  \def\!getylimits{\!lgetylimits}%
  \!shade{#1}\ignorespaces}
 
% **  \!getcoeffs{X0}{Y0}{X1}{Y1}{X2}{Y2}{B}{C}
% **  Finds  B  and  C  such that the quadratic  y = Y0 + B(x-X0) + C(x-X0)**2
% **  passes through (X1,Y1) and (X2,Y2):  when X0=0=Y0, the formulas are:
% **                   B = S1 - X1*C,   C = (S2-S1)/X2
% **  with
% **                 S1 = Y1/X1,   S2 = (Y2-Y1)/(X2-X1).
\def\!getcoeffs#1#2#3#4#5#6#7#8{% 
  \!dimenC=#4\advance \!dimenC -#2%            ** dimC=Y1-Y0
  \!dimenE=#3\advance \!dimenE -#1%            ** dimE=X1-X0
  \!divide\!dimenC\!dimenE\!dimenF%            ** dimF=S1
  \!dimenC=#6\advance \!dimenC -#4%            ** dimC=Y2-Y1
  \!dimenH=#5\advance \!dimenH -#3%            ** dimH=X2-X1
  \!divide\!dimenC\!dimenH\!dimenG%            ** dimG=S2
  \advance\!dimenG -\!dimenF%                  ** dimG=S2-S1
  \advance \!dimenH \!dimenE%                  ** dimH=X2-X0
  \!divide\!dimenG\!dimenH#8%                  ** C=(S2-S1)/(X2-X0)
  \!removept#8\!t%                             ** C, without "pt"
  #7=-\!t\!dimenE%                             ** -C*(X1-X0)
  \advance #7\!dimenF%                         ** B=S1-C*(X1-X0)
  \ignorespaces}

\def\!shade#1{%
% ** Get LS,RS,BS,TS for this panel
  \!getshrinkages#1<,,,>\!nil% %       ** now effective LS=dimE, RS=dimF,
%                                      **   BS=dimG, TS=dimH
  \advance \!dimenE \!xS%              ** now dimE=xS+LS
  \!lattice\!!xshade\!dshade\!dimenE%  ** set parity=index of left-mst x-lattice
    \!parity\!xpos%                    **   point >= xS+LS, xpos=its location
  \!dimenF=-\!dimenF%                  ** set dimF=xE-RS
    \advance\!dimenF \!xE
  \!loop\!not{\ifdim\!xpos>\!dimenF}%  ** loop over x-lattice points <= xE-RS
    \!shadecolumn%                 
    \advance\!xpos \!dshade%           ** move over to next column
    \advance\!parity 1%                ** increase index of x-point
  \repeat
  \!xS=\!xE%                           ** shift ending values to starting values
  \!ybS=\!ybE
  \!ytS=\!ytE
  \ignorespaces}

\def\!vgetshrinkages#1<#2,#3,#4,#5>#6\!nil{%
  \!override\!lshrinkage{#2}\!dimenE
  \!override\!rshrinkage{#3}\!dimenF
  \!override\!bshrinkage{#4}\!dimenG
  \!override\!tshrinkage{#5}\!dimenH
  \ignorespaces}
\def\!hgetshrinkages#1<#2,#3,#4,#5>#6\!nil{%
  \!override\!lshrinkage{#2}\!dimenG
  \!override\!rshrinkage{#3}\!dimenH
  \!override\!bshrinkage{#4}\!dimenE
  \!override\!tshrinkage{#5}\!dimenF
  \ignorespaces}

\def\!shadecolumn{%
  \!dxpos=\!xpos
  \advance\!dxpos -\!xS%            ** dx = x - xS
  \!removept\!dxpos\!dx%            ** ditto, without "pt"
  \!getylimits%                     ** get top and bottom y-values
  \advance\!ytpos -\!dimenH%        ** less TS
  \advance\!ybpos \!dimenG%         ** plus BS
  \!yloc=\!!yshade%                 ** get anchor point for this column
  \ifodd\!parity 
     \advance\!yloc \!dshade
  \fi
  \!lattice\!yloc{2\!dshade}\!ybpos%
    \!countA\!ypos%                 ** ypos=smallest y point for this column
  \!dimenA=-\!shadexorigin \advance \!dimenA \!xpos%      ** over
  \loop\!not{\ifdim\!ypos>\!ytpos}% ** loop over ypos <= yt(t)
    \!setshadelocation%             ** vmode: xloc=xpos, yloc=ypos 
%                                   ** hmode: xloc=ypos, yloc=xpos 
    \!rotateaboutpivot\!xloc\!yloc%
    \!dimenA=-\!shadexorigin \advance \!dimenA \!xloc%    ** over
    \!dimenB=-\!shadeyorigin \advance \!dimenB \!yloc%    ** up
    \kern\!dimenA \raise\!dimenB\copy\!shadesymbol \kern-\!dimenA
    \advance\!ypos 2\!dshade
  \repeat
  \ignorespaces}
 
\def\!qgetylimits{%
  \!dimenA=\!dx\!ytC              
  \advance\!dimenA \!ytB%         ** yt(t)=ytS + dx*(Bt + dx*Ct)
  \!ytpos=\!dx\!dimenA
  \advance\!ytpos \!ytS
  \!dimenA=\!dx\!ybC              
  \advance\!dimenA \!ybB%         ** yb(t)=ybS + dx*(Bb + dx*Cb)
  \!ybpos=\!dx\!dimenA
  \advance\!ybpos \!ybS}
 
\def\!lgetylimits{%
  \!ytpos=\!dx\!ytB%              ** yt(t)=ytS + dx*Bt
  \advance\!ytpos \!ytS
  \!ybpos=\!dx\!ybB%              ** yb(t)=ybS + dx*Bb
  \advance\!ybpos \!ybS}
 
\def\!vsetshadelocation{%         ** vmode: xloc=xpos, yloc=ypos 
  \!xloc=\!xpos
  \!yloc=\!ypos}
\def\!hsetshadelocation{%         ** hmode: xloc=ypos, yloc=xpos 
  \!xloc=\!ypos
  \!yloc=\!xpos}

% **************************************
% *** TICKS  (Draws ticks on graphs) ***
% **************************************

% ** User commands
% **   \ticksout
% **   \ticksin
% **   \gridlines
% **   \nogridlines
% **   \loggedticks
% **   \unloggesticks
% ** See Subsection 3.4 of the manual

% ** The following is an option of the \axis command
% **   ticks 
% **     [in] [out] 
% **     [long] [short] [length <LENGTH>] 
% **     [width <WIDTH>]
% **     [andacross] [butnotacross] 
% **     [logged] [unlogged] 
% **     [unlabeled] [numbered] [withvalues VALUE1 VALUE2 ... VALUEk / ]
% **     [quantity Q] [at LOC1 LOC2 ... LOCk / ] [from LOC1 to LOC2 by
% **       LOC_INCREMENT]
% ** See Subsection 3.2 of the manual for the rules.

% ** The various options of the  tick  field are processed by the
% ** \!nextkeyword  command defined below.
% ** For example, `\!nextkeyword short '  expands to  `\!ticksshort',
% ** while `\!nextkeyword withvalues' expands to `\!tickswithvalues'.

\def\!axisticks {%
  \def\!nextkeyword##1 {%
    \expandafter\ifx\csname !ticks##1\endcsname \relax
      \def\!next{\!fixkeyword{##1}}%
    \else
      \def\!next{\csname !ticks##1\endcsname}%
    \fi
    \!next}%
  \!axissetup
    \def\!axissetup{\relax}%
  \edef\!ticksinoutsign{\!ticksinoutSign}%
  \!ticklength=\longticklength
  \!tickwidth=\linethickness
  \!gridlinestatus
  \!setticktransform
  \!maketick
  \!tickcase=0
  \def\!LTlist{}%
  \!nextkeyword}

\def\ticksout{%
  \def\!ticksinoutSign{+}}

\ticksout

\def\nogridlines{%
  \def\!gridlinestatus{\!gridlinestoofalse}}
\nogridlines

\def\loggedticks{%
  \def\!setticktransform{\let\!ticktransform=\!logten}}
\def\unloggedticks{%
  \def\!setticktransform{\let\!ticktransform=\!donothing}}
\def\!donothing#1#2{\def#2{#1}}
\unloggedticks

% ** \!ticks/ : terminates read of tick options
\expandafter\def\csname !ticks/\endcsname{%
  \!not {\ifx \!LTlist\empty}
    \!placetickvalues
  \fi
  \def\!tickvalueslist{}%
  \def\!LTlist{}%
  \expandafter\csname !axis/\endcsname}

\def\!maketick{%
  \setbox\!boxA=\hbox{%
    \beginpicture
      \!setdimenmode
      \setcoordinatesystem point at {\!zpt} {\!zpt}   
      \linethickness=\!tickwidth
      \ifdim\!ticklength>\!zpt
        \putrule from {\!zpt} {\!zpt} to
          {\!ticksinoutsign\!tickxsign\!ticklength}
          {\!ticksinoutsign\!tickysign\!ticklength}
      \fi
      \if!gridlinestoo
        \putrule from {\!zpt} {\!zpt} to
          {-\!tickxsign\!xaxislength} {-\!tickysign\!yaxislength}
      \fi
    \endpicturesave <\!Xsave,\!Ysave>}%
    \wd\!boxA=\!zpt}
  
\def\!ticksin{%
  \def\!ticksinoutsign{-}%
  \!maketick
  \!nextkeyword}

\def\!ticksout{%
  \def\!ticksinoutsign{+}%
  \!maketick
  \!nextkeyword}

\def\!tickslength<#1> {%
  \!ticklength=#1\relax
  \!maketick
  \!nextkeyword}

\def\!tickslong{%
  \!tickslength<\longticklength> }

\def\!ticksshort{%
  \!tickslength<\shortticklength> }

\def\!tickswidth<#1> {%
  \!tickwidth=#1\relax
  \!maketick
  \!nextkeyword}

\def\!ticksandacross{%
  \!gridlinestootrue
  \!maketick
  \!nextkeyword}

\def\!ticksbutnotacross{%
  \!gridlinestoofalse
  \!maketick
  \!nextkeyword}

\def\!tickslogged{%
  \let\!ticktransform=\!logten
  \!nextkeyword}

\def\!ticksunlogged{%
  \let\!ticktransform=\!donothing
  \!nextkeyword}

\def\!ticksunlabeled{%
  \!tickcase=0
  \!nextkeyword}

\def\!ticksnumbered{%
  \!tickcase=1
  \!nextkeyword}

\def\!tickswithvalues#1/ {%
  \edef\!tickvalueslist{#1! /}%
  \!tickcase=2
  \!nextkeyword}

\def\!ticksquantity#1 {%
  \ifnum #1>1
    \!updatetickoffset
    \!countA=#1\relax
    \advance \!countA -1
    \!ticklocationincr=\!axisLength
      \divide \!ticklocationincr \!countA
    \!ticklocation=\!axisstart
    \loop \!not{\ifdim \!ticklocation>\!axisend}
      \!placetick\!ticklocation
      \ifcase\!tickcase
          \relax %  Case 0: no labels
        \or
          \relax %  Case 1: numbered -- not available here
        \or
          \expandafter\!gettickvaluefrom\!tickvalueslist
          \edef\!tickfield{{\the\!ticklocation}{\!value}}%
          \expandafter\!listaddon\expandafter{\!tickfield}\!LTlist%
      \fi
      \advance \!ticklocation \!ticklocationincr
    \repeat
  \fi
  \!nextkeyword}

\def\!ticksat#1 {%
  \!updatetickoffset
  \edef\!Loc{#1}%
  \if /\!Loc
    \def\next{\!nextkeyword}%
  \else
    \!ticksincommon
    \def\next{\!ticksat}%
  \fi
  \next}    
      
\def\!ticksfrom#1 to #2 by #3 {%
  \!updatetickoffset
  \edef\!arg{#3}%
  \expandafter\!separate\!arg\!nil
  \!scalefactor=1
  \expandafter\!countfigures\!arg/
  \edef\!arg{#1}%
  \!scaleup\!arg by\!scalefactor to\!countE
  \edef\!arg{#2}%
  \!scaleup\!arg by\!scalefactor to\!countF
  \edef\!arg{#3}%
  \!scaleup\!arg by\!scalefactor to\!countG
  \loop \!not{\ifnum\!countE>\!countF}
    \ifnum\!scalefactor=1
      \edef\!Loc{\the\!countE}%
    \else
      \!scaledown\!countE by\!scalefactor to\!Loc
    \fi
    \!ticksincommon
    \advance \!countE \!countG
  \repeat
  \!nextkeyword}

\def\!updatetickoffset{%
  \!dimenA=\!ticksinoutsign\!ticklength
  \ifdim \!dimenA>\!offset
    \!offset=\!dimenA
  \fi}

\def\!placetick#1{%
  \if!xswitch
    \!xpos=#1\relax
    \!ypos=\!axisylevel
  \else
    \!xpos=\!axisxlevel
    \!ypos=#1\relax
  \fi
  \advance\!xpos \!Xsave
  \advance\!ypos \!Ysave
  \kern\!xpos\raise\!ypos\copy\!boxA\kern-\!xpos
  \ignorespaces}

\def\!gettickvaluefrom#1 #2 /{%
  \edef\!value{#1}%
  \edef\!tickvalueslist{#2 /}%
  \ifx \!tickvalueslist\!endtickvaluelist
    \!tickcase=0
  \fi}
\def\!endtickvaluelist{! /}

\def\!ticksincommon{%
  \!ticktransform\!Loc\!t
  \!ticklocation=\!t\!!unit
  \advance\!ticklocation -\!!origin
  \!placetick\!ticklocation
  \ifcase\!tickcase
    \relax % Case 0: no labels
  \or %      Case 1: numbered
    \ifdim\!ticklocation<-\!!origin
      \edef\!Loc{$\!Loc$}%
    \fi
    \edef\!tickfield{{\the\!ticklocation}{\!Loc}}%
    \expandafter\!listaddon\expandafter{\!tickfield}\!LTlist%
  \or %      Case 2: labeled
    \expandafter\!gettickvaluefrom\!tickvalueslist
    \edef\!tickfield{{\the\!ticklocation}{\!value}}%
    \expandafter\!listaddon\expandafter{\!tickfield}\!LTlist%
  \fi}

\def\!separate#1\!nil{%
  \!ifnextchar{-}{\!!separate}{\!!!separate}#1\!nil}
\def\!!separate-#1\!nil{%
  \def\!sign{-}%
  \!!!!separate#1..\!nil}
\def\!!!separate#1\!nil{%
  \def\!sign{+}%
  \!!!!separate#1..\!nil}
\def\!!!!separate#1.#2.#3\!nil{%
  \def\!arg{#1}%
  \ifx\!arg\!empty
    \!countA=0
  \else
    \!countA=\!arg
  \fi
  \def\!arg{#2}%
  \ifx\!arg\!empty
    \!countB=0
  \else
    \!countB=\!arg
  \fi}
 
\def\!countfigures#1{%
  \if #1/%
    \def\!next{\ignorespaces}%
  \else
    \multiply\!scalefactor 10
    \def\!next{\!countfigures}%
  \fi
  \!next}

\def\!scaleup#1by#2to#3{%
  \expandafter\!separate#1\!nil
  \multiply\!countA #2\relax
  \advance\!countA \!countB
  \if -\!sign
    \!countA=-\!countA
  \fi
  #3=\!countA
  \ignorespaces}

\def\!scaledown#1by#2to#3{%
  \!countA=#1\relax%                          ** get original #
  \ifnum \!countA<0 %                         ** take abs value,
    \def\!sign{-}%                            **   remember sign
    \!countA=-\!countA
  \else
    \def\!sign{}%
  \fi
  \!countB=\!countA%                          ** copy |#|
  \divide\!countB #2\relax%                   ** integer part (|#|/sf)
  \!countC=\!countB%                          ** get sf * (|#|/sf)
    \multiply\!countC #2\relax
  \advance \!countA -\!countC%                ** ctA is now remainder
  \edef#3{\!sign\the\!countB.}%               ** +- integerpart.
  \!countC=\!countA %                         ** Tack on proper number
  \ifnum\!countC=0 %                          **   of zeros after .
    \!countC=1
  \fi
  \multiply\!countC 10
  \!loop \ifnum #2>\!countC
    \edef#3{#3\!zero}%
    \multiply\!countC 10
  \repeat
  \edef#3{#3\the\!countA}%                    ** Add on rest of remainder
  \ignorespaces}

\def\!placetickvalues{%
  \advance\!offset \tickstovaluesleading
  \if!xswitch
    \setbox\!boxA=\hbox{%
      \def\\##1##2{%
        \!dimenput {##2} [B] (##1,\!axisylevel)}%
      \beginpicture 
        \!LTlist
      \endpicturesave <\!Xsave,\!Ysave>}%
    \!dimenA=\!axisylevel
      \advance\!dimenA -\!Ysave
      \advance\!dimenA \!tickysign\!offset
      \if -\!tickysign
        \advance\!dimenA -\ht\!boxA
      \else
        \advance\!dimenA  \dp\!boxA
      \fi
    \advance\!offset \ht\!boxA 
      \advance\!offset \dp\!boxA
    \!dimenput {\box\!boxA} [Bl] <\!Xsave,\!Ysave> (\!zpt,\!dimenA)
  \else
    \setbox\!boxA=\hbox{%
      \def\\##1##2{%
        \!dimenput {##2} [r] (\!axisxlevel,##1)}%
      \beginpicture 
        \!LTlist
      \endpicturesave <\!Xsave,\!Ysave>}%
    \!dimenA=\!axisxlevel
      \advance\!dimenA -\!Xsave
      \advance\!dimenA \!tickxsign\!offset
      \if -\!tickxsign
        \advance\!dimenA -\wd\!boxA
      \fi
    \advance\!offset \wd\!boxA
    \!dimenput {\box\!boxA} [Bl] <\!Xsave,\!Ysave> (\!dimenA,\!zpt)
  \fi}

\normalgraphs
\catcode`!=12 %  *****  THIS MUST NEVER BE OMITTED

% This is postpictex.tex  Version 1.1  9/10/87

% To use the PiCTeX macros under LaTeX, you first need to \input the
% file prepictex.tex, then the main corpus of PiCTeX macros (pictex.tex), 
% and finally this file.  Do not \input the file latexpicobjs.tex.
 
\catcode`@=11 \catcode`!=11
  
% Save meanings of PiCTeX keywords that duplicate LaTeX keywords
\let\!pictexendpicture=\endpicture 
\let\!pictexframe=\frame
\let\!pictexlinethickness=\linethickness
\let\!pictexmultiput=\multiput
\let\!pictexput=\put

% Redefine the PiCTeX \beginpicture macro
\def\beginpicture{%
  \setbox\!picbox=\hbox\bgroup%
  \let\endpicture=\!pictexendpicture
  \let\frame=\!pictexframe
  \let\linethickness=\!pictexlinethickness
  \let\multiput=\!pictexmultiput
  \let\put=\!pictexput
  \let\input=\@@input   % \@@input is LaTeX's saved version of TeX's primitive
  \!xleft=\maxdimen  
  \!xright=-\maxdimen
  \!ybot=\maxdimen
  \!ytop=-\maxdimen}

% Reestablish LaTeX's meaning of \frame. This makes
% PiCTeX's meaning of \frame available only inside a PiCture.
\let\frame=\!latexframe

% Make PiCTeX's meaning of \frame available everywhere in the
% guise of \pictexframe
\let\pictexframe=\!pictexframe

% Now do the same for \linethickness
\let\linethickness=\!latexlinethickness
\let\pictexlinethickness=\!pictexlinethickness

% Reset LaTeX's default meaning of \\
\let\\=\@normalcr
\catcode`@=12 \catcode`!=12

%\usepackage{times}
%COMMENT THESE OUT TO RECOVER STANDARD FONTS
%\renewcommand{\familydefault}{\sfdefault}
%\usepackage{cmbright}
%\usepackage[T1]{fontenc}
%\usepackage{textcomp}
%\renewcommand{\familydefault}{qbk}
%\usepackage[font=footnotesize,labelfont=sf]{caption}
%%%%%%%%%%%%%%%%%%%%%%%%%%%%%%%%

%\some user defined macros
\def\qd{\quad}

\def\2;{\;\;}

\def\eps{\epsilon}

\def\NatNu{{\mathbb N}}

\def\RealN{{\mathbb R}}

% Some user defined functions

\def\Ref#1{(\ref{#1})}

\def\C#1{{\mathcal #1}}

\def\h#1{{\mathbf{\widehat{\hbox{$#1$}}}}}

\def\hi{\mathbf{\widehat{\hbox{\textbf{\textit{i}}}}}}
\def\hj{\mathbf{\widehat{\hbox{\textbf{\textit{j}}}}}}

\def\h#1{\mathbf{\widehat{\hbox{\textbf{\textit{#1}}}}}}

\def\Sfrac#1#2{\hbox{\large $\frac{#1}{#2}$}}
\def\sfrac#1#2{\hbox{\nor $\frac{#1}{#2}$}}

% Brackets
\def\LB{\left(}         \def\RB{\right)}
        
\def\LA{\left\langle}        \def\RA{\right\rangle}

\def\lfl{\!\left\lfloor} \def\rfl{\right\rfloor\!}
       
\def\LH{\left[}        \def\RH{\right]}

% Arrows

%User defined environment Algorithm
 
%User defined environment CLAIM
\newcommand{\Claim}{\vspace{2mm}\noindent{{\scshape \nor Claim:} }} 
\newcommand{\ProofClaim}{{\scshape \noindent \nor Proof of claim: }}
\newcommand{\cqed}{\hskip 2mm $\triangle$ \vskip 2mm}
%\newcommand{\qed}{\hfill {\footnotesize$\square$} \vskip 2mm}

% Some size
\def\nor{\normalsize}
\def\fns{\scriptsize}

% Alignment

%Miscalaneous

\def\svv{{\;\hbox{$|$}\;}}

\def\edge#1#2{{\langle #1 \hspace{0.85pt}{\sim}\hspace{0.85pt} #2 \rangle}}

% Add and minus inline

\def\plus{{\hspace{0.85pt}{+}\hspace{0.85pt}}}
\def\minus{{\hspace{0.85pt}{-}\hspace{0.85pt}}}

%In different fonts

%WHITE vertical space

%hyphenation
\hyphenation{ter-min-olo-gy}
\hyphenation{Metro-polis}

%Really small fonts

% Some customisations for colours and for pictex macros 

\definecolor{blue}{rgb}{0,0.18,0.39}
\definecolor{RoyalBlue}{rgb}{0,0.2,0.7}

%%xcolors

\definecolor{Maroon}{cmyk}{0,0.87,0.68,0.62}
\definecolor{Brown}{rgb}{0.7,0.3,0}
\definecolor{Navy}{rgb}{0.3,0.0,0.4}
\definecolor{Red}{cmyk}{0,1,1,0}
\definecolor{BrickRed}{cmyk}{0.16,0.89,0.61,0.02}
\definecolor{DarkRed}{cmyk}{0,1,1,0.5}
\definecolor{DarkBlue}{cmyk}{1,1,0,0.2}
\definecolor{DarkGreen}{cmyk}{1,0,1,0.4}
\definecolor{Green}{cmyk}{1,0,1,0}
\definecolor{DarkBrown}{cmyk}{0,0.81,1,0.6}
\definecolor{OrangeRed}{cmyk}{0,1,0.87,0}
\definecolor{RedOrange}{cmyk}{0,0.77,0.87,0}
\definecolor{Orange}{cmyk}{0,0.61,0.87,0}
\definecolor{Offwhite}{rgb}{.8,0.9,.8}
\definecolor{Offwhite2}{cmyk}{.04,.02,.01,0}
\definecolor{Tan}{rgb}{0.82,0.70,0.55}
\definecolor{Blue}{rgb}{0,0,1}
\definecolor{RoyalBlue}{rgb}{0.25,0.41,0.88}
\definecolor{Sepia}{rgb}{0.37,0.14,0.07}
\definecolor{myblue}{cmyk}{0.025,0.05,0,0}
\definecolor{Mahogany}{cmyk}{0.18,0.87,1,0.08}

\definecolor{green1}{cmyk}{0.25,0,0.76,0}
\definecolor{green2}{cmyk}{0.25,0,0.76,0.07}
\definecolor{green3}{cmyk}{0.25,0,0.76,0.20}
\definecolor{green4}{cmyk}{0.25,0,0.75,0.30}
\definecolor{green5}{cmyk}{0.25,0,0.75,0.40}
\definecolor{green6}{cmyk}{0.25,0,0.75,0.50}

\definecolor{B02}{cmyk}{0,0.14,0.22,0.12}
\definecolor{B03}{cmyk}{0,0.16,0.26,0.16}
\definecolor{B04}{cmyk}{0,0.19,0.28,0.19}
\definecolor{B05}{cmyk}{0,0.25,0.32,0.25}
\definecolor{B06}{cmyk}{0,0.31,0.36,0.31}
\definecolor{B07}{cmyk}{0,0.37,0.40,0.37}
\definecolor{B08}{cmyk}{0,0.46,0.46,0.46}
\definecolor{B09}{cmyk}{0,0.55,0.52,0.54}
\definecolor{B10}{cmyk}{0,0.69,0.61,0.62}
\definecolor{B11}{cmyk}{0,0.78,0.70,0.68}
\definecolor{B12}{cmyk}{0,0.93,0.85,0.60}
\definecolor{B13}{cmyk}{0.25,1,0.6,0.50}
\definecolor{B14}{cmyk}{0.5,1,0.30,0.40}
\definecolor{B15}{cmyk}{0.75,1,0,0.30}

\definecolor{C02}{cmyk}{0,0.22,0.14,0.12}
\definecolor{C03}{cmyk}{0,0.26,0.16,0.16}
\definecolor{C04}{cmyk}{0,0.28,0.19,0.19}
\definecolor{C05}{cmyk}{0,0.32,0.25,0.25}
\definecolor{C06}{cmyk}{0,0.36,0.31,0.31}
\definecolor{C07}{cmyk}{0,0.40,0.37,0.37}
\definecolor{C08}{cmyk}{0,0.46,0.46,0.46}
\definecolor{C09}{cmyk}{0,0.52,0.55,0.54}
\definecolor{C10}{cmyk}{0,0.61,0.69,0.62}
\definecolor{C11}{cmyk}{0,0.70,0.78,0.68}
\definecolor{C12}{cmyk}{0,0.85,0.93,0.60}
\definecolor{C13}{cmyk}{0.25,0.60,1,0.50}
\definecolor{C14}{cmyk}{0.5,0.30,1,0.40}
\definecolor{C15}{cmyk}{0.75,0,1,0.30}

\newtheorem{theorem}{Theorem}
\newtheorem{lemma}{Lemma}

\begin{document}

\title[Supermultiplicative relations of lattice clusters]{Supermultiplicative relations in models 
of interacting self-avoiding walks and polygons.}

\author{EJ Janse van Rensburg$^1$}

\address{\sf$^1$Department of Mathematics and Statistics, 
York University, Toronto, Ontario M3J~1P3, Canada\\}
\ead{rensburg@yorku.ca}
\vspace{10pt}
\begin{indented}
\item[]\today
\end{indented}

\begin{abstract}
Fekete's lemma shows the existence of limits in subadditive sequences.
This lemma, and generalisations of it, also have been used to prove the 
existence of thermodynamic limits in statistical mechanics.  In this paper it
is shown that the two variable supermultiplicative relation 
\[ p_{n_1}(m_1)\,p_{n_2}(m_2) \leq p_{n_1+n_2}(m_1\plus m_2) \]
together with mild assumptions, imply the existence of the limit
\[ \log \C{P}_\#(\eps) = \lim_{n\to\infty} \sfrac{1}{n} \log p_n(\lfloor \eps n \rfloor).\]
This is a generalisation of Fekete's lemma.   The existence of $\log \C{P}_\#(\eps)$
is proven for models of adsorbing walks and polygons, and for pulled polygons.
In addition, numerical data are presented estimating the general shape
of $\log \C{P}_\# (\eps)$ of models of square lattice self-avoiding walks
and polygons.
\end{abstract}

%
% Uncomment for keywords
\vspace{0pc}
\noindent{\it Keywords}: Supermultiplicative functions, thermodynamic limit, 
self-avoiding walk, interacting lattice clusters\\

% Uncomment for Submitted to journal title message
\submitto{\JPA}

\pacs{82.35.Lr,82.35.Gh,61.25.Hq}
\ams{82B41,82B23,65C05}
% Uncomment if a separate title page is required
%\maketitle
% 
% For two-column output uncomment the next line and choose [10pt] rather than [12pt] in the \documentclass declaration
%\ioptwocol
%

%%%%% END OF TITLE PAGE %%%%%%%%%%%%%%%%%%%%%%%%%%%%%%%%%%%%%%%%%%%%%%

%%%%% BODY OF THE PAPER %%%%%%%%%%%%%%%%%%%%%%%%%%%%%%%%%%%%%%%%%%%%%%
% You should eventually delete (after reading) the rest of the text below %%
 
\section{Introduction}

Let $c_n$ be the number of self-avoiding walks of length $n$ steps from
the origin in a regular lattice.  In the square lattice, 
$c_0=1$, $c_1=4$, $c_2=12$, and so on.  A walk of length 
$n\plus m$ steps can be cut into two subwalks in the 
$(n\plus 1)^{th}$ vertex, the first subwalk of length $n$, and the 
second of length $m$.  This shows that $c_n$ is a submultiplicative function:
\begin{equation}
c_n\, c_m \geq c_{n+m} .
\end{equation}
Taking logarithms shows that $\log c_n$ is subadditive and by Fekete's lemma
\cite{HF57} the limit
\begin{equation}
\kappa_d = \lim_{n\to\infty} \Sfrac{1}{n} \log c_n
= \inf_{n\geq 0} \Sfrac{1}{n} \log c_n
\label{2}
\end{equation}
exists \cite{BH57}.  The constant $\kappa_d$ is the 
\textit{connective constant} of the self-avoiding walk.  In the hypercubic 
lattice $d^n \leq c_n \leq (2d)^n$ so that $\log d \leq \kappa_d \leq \log (2d)$.
The \textit{growth constant} $\mu_d$ of the self-avoiding walk 
is defined by $\log \mu_d = \kappa_d$, and the result in equation \Ref{2} 
shows that $c_n \geq \mu_d^n$ and $c_n = \mu_d^{n+o(n)}$.
See references \cite{H57,H60,K63,K64} for more classical results on $c_n$.

The existence of a growth constant in other models of lattice clusters
usually follows from a supermultiplicative relation.  For example, this
is the case for \textit{lattice polygons} \cite{H61A}.
Let $p_n$ be the number of lattice polygons in the hypercubic lattice
where $n$ is even (the hypercubic lattices are bipartite, so that
$p_n=0$ if $n$ is odd).  Two polygons can be concatenated as 
illustrated in figure \ref{figure1} and this shows that
\begin{equation}
p_n\,p_m \leq C_d\, p_{n+m},
\end{equation}
for even values of $n$ and $m$, where $C_d$ is a constant dependent 
on the dimension $d$.  In the square lattice $C_d=1$ and in the cubic lattice 
$C_d=2$.  This supermultiplicative relation implies that $- \log \LB p_n/C_d \RB$ 
is a subadditive function on even numbers, and it again follows 
from Fekete's lemma that
\begin{equation}
\lambda_d = \lim_{n\to\infty} \Sfrac{1}{2n} \log p_{2n}
= \sup_{n\geq 0} \Sfrac{1}{2n} \log p_{2n}
\label{4}
\end{equation}
exists. It is also proven that $\lambda_d = \kappa_d = \log \mu_d$  \cite{HW62A}, 
and it follows from equation \ref{4} that $p_{2n} \leq C_d\, \mu_d^{2n}$ and 
$p_{2n} = C_d\, \mu_d^{2n+o(n)}$. 

\begin{figure}[t!]
\beginpicture
\setcoordinatesystem units <1pt,1pt> 
\setlinear

\setplotarea x from -100 to 250, y from -20 to 40

\setcoordinatesystem units <1pt,1pt> point at -70 20

\put
{\beginpicture
\setplotsymbol ({$\cdot$})
\plot 0 10  -10 10  -10 0  -20 0  -20 10  -30 10  -30 0  -50 0 /
\plot -50 0  -50 20  -40 20  -40 30  -30 30  -30 40  -20 40  -20 30 /
\plot -20 30  -10 30  -10 20  0 20  0 10 /
\plot 10 10  10 30 10 60 40 60 %40 70 50 70  
50 60 50 50 20 50 20 40 40 40 /
\plot 40 40 40 30 40 0  30 0  30 10  10 10 /
\multiput {\fns$\bullet$} at
0 10  -10 10  -10 0  -20 0  -20 10  -30 10  -30 0  -40 0 -50 0 -50 10  -50 20  
-40 20  -40 30  -30 30 -30 40  -20 40  -20 30  -10 30  -10 20  0 20  0 10 
10 10  10 20 10 30 10 40 10 50 10 60 20 60 30 60 40 60 %40 70 50 70  
50 60 50 50 40 50 30 50 20 50 20 40 30 40 40 40 40 30 40 20 40 10 
40 0  30 0  30 10  20 10 10 10 /

\setplotsymbol ({.})

\plot 4 20  6 20 /
\plot 4 10  6 10 / 
\put {$p_n$} at -25 20 
\put {$p_m$} at 25 25 
\endpicture} at 0 30 

\endpicture
\label{figure1}
\caption{\textit{Concatenating two lattice polygons in the square lattice
by placing them with the right-most top-most edge of the first polygon
parallel to the left-most bottom-most edge of the second polygon.  
Deleting these two edges, and replacing them with edges 
shown gives a polygon of length $n\plus m$.  Since there are $p_n$ 
choices of the polygon on the left, and $p_m$ choices on the 
right, $p_n\, p_m \leq p_{n+m}$ in the square lattice.}}
\end{figure}

More generally, constructions involving lattice clusters similar to that 
in figure \ref{figure1} may give rise to other supermultiplicative 
relations, making it necessary to consider generalisations of Fekete's lemma.
The first is due to Wilker and Whittington \cite{WW79} and involves 
subadditive relations of the kind
\begin{equation}
a_{n+f(m)} \leq a_n + a_m,
\quad\hbox{where 
$\displaystyle \lim_{m\to\infty} \Sfrac{1}{m} f(m) = 1$}.
\end{equation}
A second generalisation is due to Hammersley \cite{H62} and it
is concerned with subadditive relations of the kind
\begin{equation}
a_{n+m} \leq a_n + a_m +\phi_{n+m},
\quad\hbox{where 
$\displaystyle \sum_{n=N_0}^\infty \Sfrac{\phi_n}{n(n+1)} < \infty$}.
\label{6}
\end{equation}

While Fekete's lemma is explicitly applicable to subadditive sequences,
it also applies to superadditive sequences, and in this context a generalisation
of it to functions of two variables will be considered in this paper.
In particular, multivariate functions taking $\NatNu \times \NatNu$ into the
non-negative reals $\RealN^+$ and satisfying a supermultiplicative relation 
\begin{equation}
p_{n_1}^\#(m_1)\, p_{n_2}^\#(m_2) \leq p_{n_1+n_2}^\# (m_1\plus m_2) 
\label{7} 
\end{equation}
will be considered here, so that $\log p_n^\#(m)$ satisfies a superadditive
relation.  It is shown in section 
\ref{section2}, given certain assumptions, that the limit
\begin{equation}
\log \C{P}_\#(\epsilon) 
= \lim_{n\to\infty} \Sfrac{1}{n} \log p_n^\#(\lfloor \epsilon n \rfloor)
\label{8}
\end{equation}
exists, and it is a concave function of $\epsilon$.  

In the statistical mechanics of models of lattice clusters, relations of the type 
in equation \Ref{7} arise in constructions similar to the concatenation in
figure \ref{figure1} (see references \cite{G99,HTW82,JvR15,MSW88}).  In section
\ref{section3} the limit in equation \Ref{8} is generalised:  Suppose that
$\delta_n$ is a positive real-valued function on $\NatNu$ such
that $\lim_{n\to\infty} \sfrac{1}{n} \delta_n = \delta$.  Then
\begin{equation}
\log \C{P}_\#(\delta) 
= \lim_{n\to\infty} \Sfrac{1}{n} \log p_n^\#(\delta_n)
\label{9A}
\end{equation} 
and the limit on the right hand side exists. In section \ref{free energy} the relation of 
$\log P_\#(\eps)$ to the free energy of the underlying statistical mechanics
model is examined, and in section \ref{examples} some models in the square
lattice, including adsorbing walks, adsorbing and pulled polygons are
presented.

\section{Existence of the limit in equation \Ref{8}}
\label{section2}

Let $n\in\NatNu$ and $m\in\NatNu$.  Let $p_n^\#(m)$
be a function on $\NatNu^2$ with range in the non-negative
reals. Define $p_n^\# = \sum_m p_n^\#(m)$  and make the
following assumptions for $p_n^\# (m)$.

\vspace{3mm}
\noindent{\bf Assumptions:} 
{\parindent=24pt
\begin{itemize}
\item[\bf (1)] There exist a finite constant $C>0$, and functions
$A_n$ and $B_n$, such that $0 \leq A_n \leq B_n \leq Cn$ and
$p_n^\# (m) >0$ if $A_n \leq m \leq B_n$, and $p_n ^\# (m) = 0$
otherwise.  Moreover, assume that 
$\liminf_{n\to\infty} \sfrac{1}{n}A_n < \limsup_{n\to\infty} \sfrac{1}{n}B_n$.
\item[\bf (2)] There exists a constant $K>0$ such that for $n\in\NatNu$,
$0\leq p_n ^\# (m) \leq K^n$ for $m\in \NatNu$.
\item[\bf (3)] The function $p_n ^\# (m)$ satisfies a supermultiplicative
inequality of the type 
\begin{equation}
p_{n_1} ^\# (m_1)\, p_{n_2} ^\# (m_2) \leq
   p_{n_1+n_2} ^\# (m_1\plus m_2) . 
\label{9}
\end{equation}
\end{itemize} \qed
}
\smallskip

Define 
\begin{equation}
\eps_m = \liminf_{n\to\infty} \sfrac{1}{n}A_n,\qd\hbox{and}\qd
\eps_M = \limsup_{n\to\infty} \sfrac{1}{n}B_n ,
\label{10}
\end{equation}
so that $\eps_m< \eps_M$ by assumption (1).  Since $C$ is finite, 
it follows that $0 \leq \eps_m < \eps_M \leq C < \infty$.

\begin{lemma}
\label{lemma1}
For each $n\in \NatNu$ it is the case that 
$\eps_m = \lim_{n\to\infty} \sfrac{1}{n} A_n 
\leq \sfrac{1}{n} A_n$ and 
$\eps_M = \lim_{n\to\infty} \sfrac{1}{n} B_n 
\geq \sfrac{1}{n} B_n$.
\end{lemma}

\begin{proof} By assumptions (1) and (3),
\[ 0 < p^\#_n(A_n)\, p^\#_m(A_m) \leq p^\#_{n+m}(A_n\plus A_m). \]
This shows that $A_n \plus A_m \geq A_{n+m}$.  Thus $A_n$ is 
a subadditive function on $\NatNu$.  By Fekete's lemma
\[ \eps_m 
= \lim_{n\to\infty} \sfrac{1}{n} A_n = \inf_{n\in \NatNu}\sfrac{1}{n} A_n .\]

Similarly
\[ 0 < p^\#_n(B_n)\, p^\#_m(B_m) \leq p^\#_{n+m}(B_n\plus B_m). \]
This shows that $B_n\plus B_m \leq B_{n+m}$.  Hence $B_n$ is
a superadditive function on $\NatNu$.  By Fekete's lemma
\[ \eps_M 
= \lim_{n\to\infty} \sfrac{1}{n} B_n = \sup_{n\in \NatNu}\sfrac{1}{n} B_n .\]
This completes the proof of the lemma. 
\end{proof}
\smallskip

The function $p_n^\#$ grows exponentially with $n$.

\begin{theorem}
The growth constant $\mu_\#$ of $p_n^\#$ is defined by
\[ \log \mu_\#  =  \lim_{n\to\infty} \sfrac{1}{n} \log p_n^\# > 0 .\]
By the assumption (2), $\mu_\# < K$ and so is finite.
\end{theorem}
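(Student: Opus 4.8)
The plan is to reduce the total $p_n^\#$ to its single largest term and to exploit the fact that, although the supermultiplicative inequality \Ref{eqn3.1} is stated only for individual energies $m$, it immediately yields an \emph{exact} supermultiplicative relation for that largest term. Define $M_n = \max_m p_n^\#(m)$, where the maximum is taken over the finite support $A_n \leq m \leq B_n$ guaranteed by Assumption A(2). Since each $p_n^\#(m)$ is a non-negative integer that is strictly positive on this support, Assumption A(1) gives $1 \leq M_n \leq K^n$.

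The first step is to verify that $M_n$ is supermultiplicative. Fix $n_1,n_2$ and let $m_i^\ast$ maximise $p_{n_i}^\#(\cdot)$, so that $M_{n_i} = p_{n_i}^\#(m_i^\ast)$. Applying \Ref{eqn3.1} with $m_i = m_i^\ast$ gives $M_{n_1} M_{n_2} = p_{n_1}^\#(m_1^\ast)\, p_{n_2}^\#(m_2^\ast) \leq p_{n_1+n_2}^\#(m_1^\ast + m_2^\ast) \leq M_{n_1+n_2}$. Hence $-\log M_n$ is subadditive on $\NatN$, and because $-\log M_n \geq -n\log K$ it is bounded below by a linear function. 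Fekete's lemma (Theorem~\ref{thmA1}) then shows that $\lim_{n\to\infty}\sfrac{1}{n}\log M_n = \sup_{n\geq 1}\sfrac{1}{n}\log M_n$ exists and is finite, lying in $[0,\log K]$.

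The second step controls $p_n^\#$ by $M_n$. By Assumption A(2) there are at most $B_n - A_n + 1 \leq Cn+1$ values of $m$ on which $p_n^\#(m)$ is non-zero, and each such term is at most $M_n$, so $M_n \leq p_n^\# \leq (Cn+1)\,M_n$. Taking logarithms and dividing by $n$ sandwiches $\sfrac{1}{n}\log p_n^\#$ between $\sfrac{1}{n}\log M_n$ and $\sfrac{1}{n}\log M_n + \sfrac{1}{n}\log(Cn+1)$. Since the correction $\sfrac{1}{n}\log(Cn+1)\to 0$, the two bounds share a common limit, and therefore $\log\mu_\# = \lim_{n\to\infty}\sfrac{1}{n}\log p_n^\#$ exists and equals $\lim_{n\to\infty}\sfrac{1}{n}\log M_n$, a finite number in $[0,\log K]$.

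The only genuine obstacle is that the sum $p_n^\#$ is not itself exactly supermultiplicative: forming $p_{n_1}^\#\, p_{n_2}^\#$ and applying \Ref{eqn3.1} term-by-term overcounts each energy $m=m_1+m_2$ by the number of ways of splitting it, which injects a spurious polynomial factor. Passing to the maximal term $M_n$ sidesteps this completely, since one well-chosen pair $(m_1^\ast,m_2^\ast)$ suffices for the supermultiplicative step, and Assumption A(2) then recovers $p_n^\#$ from $M_n$ at the negligible cost of a $\log n$ term. Care is needed only to apply Fekete's lemma in its superadditive form (equivalently, to the subadditive sequence $-\log M_n$) and to check the linear lower bound that forces the limit to be finite rather than $-\infty$.
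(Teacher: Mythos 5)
Your proof is correct, but it takes a genuinely different route from the paper's. The paper sums the supermultiplicative inequality \Ref{eqn3.1} over all pairs $(m_1,m_2)$ to get $p_{n_1}^\#\,p_{n_2}^\# \leq (1\plus B_{n_2}\minus A_{n_2})\,p_{n_1+n_2}^\#$ — accepting exactly the polynomial overcounting factor you identify — and then invokes Hammersley's generalisation of Fekete's lemma, which is built to tolerate such a sub-linear correction in the supermultiplicative relation. You instead pass to the maximal term $M_n=\max_m p_n^\#(m)$, observe that $M_n$ is \emph{exactly} supermultiplicative (one well-chosen pair of energies suffices), apply only the classical Fekete's lemma (Theorem \ref{thmA1}), and then transfer the conclusion back to $p_n^\#$ via the sandwich $M_n \leq p_n^\# \leq (Cn\plus 1)M_n$ furnished by Assumption A(2). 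The trade-off is clear: the paper's argument is shorter but leans on a generalised subadditivity result that is only cited, not proved, in this manuscript; your argument is slightly longer but entirely self-contained given Theorem \ref{thmA1}, and the bounds $1\leq M_n\leq K^n$ from Assumptions A(1) and A(2) give the finiteness of $\mu_\#$ (indeed $0\leq \log\mu_\#\leq \log K$) just as directly. Both uses of Assumption A(2) — nonemptiness of the support for $M_n\geq 1$ and the width bound $B_n\minus A_n\leq Cn$ — are legitimate, so there is no gap.
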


\begin{proof}
Sum the left-hand side of equation \Ref{9} over all $\{m_1,m_2\}$:
\[ p_{n_1}^\#\, p_{n_2}^\# 
\leq (B_{n_1}+B_{n_2})\, p_{n_1+n_2}^\# 
\leq B_{n_1+n_2}\, p_{n_1+n_2}^\# .\]
Take logarithms to see that
\[ - \log p_{n_1+n_2}^\# \leq
- \log p_{n_1}^\# - \log p_{n_2}^\# + \log B_{n_1+n_2} . \]
Since $B_n \leq C\,n$ it follows by equation \Ref{6}
that the limit exists as claimed \cite{H62}.  Moreover, by assumption (2)
$\log \mu_\# \leq \log K$. 
\end{proof}
\smallskip

Fixed $\eps \in (\eps_m,\eps_M)$ and choose $m_1 = \lfl \eps n_1\rfl$ 
and $m_1\plus m_2 = \lfl \eps (n_1 \plus n_2) \rfl$ in equation \Ref{9}.  
This gives
\begin{equation}
p_{n_1} ^\# (\lfl \eps n_1\rfl)\, 
p_{n_2} ^\# (\lfl \eps (n_1 \plus n_2) \rfl - \lfl \eps n_1\rfl) 
\leq   p_{n_1+n_2} ^\# (\lfl \eps (n_1 \plus n_2) \rfl) . 
\end{equation}

\begin{lemma}
\label{lemma3.3}
There exists a function $z_{n_1,n_2}$ of $(\eps, n_1,n_2)$
such that 
\[p^\#_{n_1} (\lfl \eps n_1 \rfl)\,
p_{n_2}^\# (\lfl \eps n_2 \rfl+ z_{n_1,n_2}) 
\leq p_{n_1+n_2}^\#(\lfl \eps(n_1\plus n_2) \rfl ) ,\]
and where $|z_{n_1,n_2}| \leq 1$. 
\end{lemma}

\begin{proof} Fix $\eps\in (\eps_m,\eps_M)$, and fix $n_1$ and $n_2$.
If $\lfl \eps n_1 \rfl \plus \lfl \eps n_2 \rfl = \lfl \eps (n_1\plus n_2) \rfl-1$
then 
\[ p^\#_{n_1} (\lfl \eps n_1 \rfl)\, 
   p_{n_2}^\# (\lfl \eps n_2 \rfl \plus 1 ) 
\leq p_{n_1+n_2}^\#(\lfl\eps(n_1\plus n_2) \rfl ). \]  
Otherwise, 
$\lfl \eps n_1 \rfl \plus \lfl \eps n_2 \rfl = \lfl \eps (n_1\plus n_2) \rfl$,
and so 
\[ p^\#_{n_1} (\lfl \eps n_1 \rfl)\, p_{n_2}^\# (\lfl \eps n_2 \rfl)
\leq p_{n_1+n_2}^\#(\lfl \eps(n_1\plus n_2) \rfl ). \]  

Define $z_{n_1,n_2} = 1$ if 
$\lfl \eps n_1 \rfl \plus\lfl \eps n_2 \rfl = \lfl \eps (n_1\plus n_2) \rfl-1$, 
and zero otherwise.  Then these two last inequalities can be combined into
\[ p^\#_{n_1} (\lfl \eps n_1 \rfl)\, 
   p_{n_2}^\# (\lfl \eps n_2 \rfl + z_{n_1,n_2}) 
   \leq p_{n_1+n_2}^\#(\lfl \eps(n_1\plus n_2) \rfl ) \] 
for any $\eps\in(\eps_m,\eps_M)$ and for any $n_1$ and $n_2$. 
\end{proof}
\smallskip

Lemma \ref{lemma3.3} can be used to prove the
existence of the limit in equation \Ref{8}.

\begin{theorem}
\label{theorem5}
Let $\eps\in (\eps_m,\eps_M)$. Then the limit
\[ \log {\cal P}_\# (\epsilon) = \lim_{n\to\infty} \sfrac{1}{n} \log p^\# _n
(\lfloor \epsilon n \rfloor ) \]
exists.   Moreover, for each value of $n$
\[p_n^\#(\lfl\eps n\rfl\! \plus \eta_n) \leq [{\cal P}_\# (\eps)]^n \]
where $\eta_n\in \{0,1\}$ is a function of $n$.
\end{theorem}

\begin{proof}
The proof is similar to the proof of Fekete's lemma; see for example
theorem A.1 in reference \cite{JvR15}.  Fix $\eps\in(\eps_m,\eps_M)$
and let $n=n_1\plus n_2$ be large.

Choose $n_1=n\minus k$ and $n_2=k$ in lemma \ref{lemma3.3}.  
Then the supermultiplicative inequality becomes 
\begin{equation}
p^\#_n (\lfl \eps n \rfl)
\geq p^\#_{n-k} (\lfl \eps (n\minus k) \rfl)\,
   p^\#_k(\lfl \eps k \rfl + z_{n-k,k}) . 
\label{13}
\end{equation}
Put $n=Nm\plus r$, where $m$ is a large fixed natural number,
and with $N_0 \leq r < N_0 \plus m$, for some fixed large $N_0$.
Assume that $n\gg N_0$.
  
Choose $k=r$ so equation \Ref{13} becomes
\[ p^\#_{Nm+r} (\lfl \eps (Nm\plus r) \rfl)
\geq  p^\#_{Nm}(\lfl\eps Nm\rfl)\,
 p^\#_r(\lfl \eps r \rfl + z_{Nm,r}) .\]
Increase $N_0$, if necessary, until $\lfl \eps r \rfl \geq A_r \plus z_{Nm,r}$
(with $A_r$ defined in assumption (1)). Since $\eps > \eps_m$, 
and $z_{Nm,r} \in \{0,1\}$, this is always possible.

Next, consider $p^\#_{Nm}(\lfl\eps Nm \rfl)$ and apply 
equation \Ref{13}, with $k=m$, recursively, for $n=Nm$, 
$(N\minus 1)m$, $\ldots$,$m$. This gives
\begin{eqnarray*}
p^\#_{Nm} (\lfl \eps Nm \rfl ) 
& \geq  p^\#_{(N\minus 1)m} (\lfl \eps((N\minus 1)m) \rfl)
 \, p^\#_{m} (\lfl \eps m  \rfl\!  \plus z_{(N\minus 1)m,m}) \\
& \geq  p^\#_{(N\minus 2)m} (\lfl \eps((N\minus 2)m) \rfl)
 \, \prod_{j=1}^2 \LH
    p^\#_{m} (\lfl \eps m  \rfl\!  \plus z_{(N\minus j)m,m})\RH  \\
& \geq   \cdots \\
& \geq  
    \prod_{j=1}^N \LH p^\#_m (\lfl \eps m\rfl\!  \plus z_{(N-j)m,m} )\RH . 
\end{eqnarray*}
That is, 
\begin{eqnarray*} 
p^\#_n (\lfl \eps n \rfl )
&\geq  p^\#_{Nm}(\lfl\eps Nm\rfl)\, p^\#_r(\lfl \eps r \rfl + z_{Nm,r})
   \nonumber\\
&\geq \prod_{j=1}^N \LH p^\#_m (\lfl \eps m\rfl\!  \plus z_{(N-j)m,m} )\RH 
\, p^\#_r(\lfl \eps r \rfl + z_{Nm,r}) . 
\end{eqnarray*}
Choose $\eta_m \in \{0,1\}$ to be that value of $\ell$ which minimizes
$\min_\ell \{ p^\#_m(\lfl \eps m\rfl\! \plus \ell) \}$ (recall that $m$ is fixed).  
Then $z_{(N-j)m,m}$ in the above may be replaced by $\eta_m$.  
This gives
\[ p^\#_n (\lfl \eps n \rfl )
\geq
\LH p^\#_m (\lfl \eps m\rfl\!  \plus \eta_m) \RH^N\,
 p^\#_r (\lfl \eps r \rfl\!  \plus z_{mN,r}) .\]
Take logarithms of this, and divide by $n=mN\plus r$.  Keep $m$ fixed
and take the limit inferior as $n\to\infty$ on the left hand side.
Since $n=Nm\plus r$ with $N_0 \leq r <N_0\plus m$, $N\to\infty$, and
it follows that
\begin{equation}
\liminf_{n\to\infty} \sfrac{1}{n} \log p^\#_n (\lfl \eps n \rfl) \geq
\sfrac{1}{m} \log p^\#_m (\lfl \eps m \rfl \!  \plus \eta_m) . 
\label{14}
\end{equation}
Now take the limsup as $m\to\infty$ on the right hand side.  Then
\[ \liminf_{n\to\infty} 
\sfrac{1}{n} \log p^\#_n (\lfl \eps n \rfl) \geq
\limsup_{m\to\infty} 
\sfrac{1}{m} \log p^\#_m (\lfl \eps m \rfl\!   \plus \eta_m) . \]

Proceed by choosing $n_1=m\minus k$, $n_2=k$, so that
$n_1\plus n_2=m$ in equation \Ref{9}.  Put $m_1 = \lfl \eps(m\minus k) \rfl$
and $m_1\plus m_2=\lfl \eps m \rfl + \eta_m$ (with $\eta_m$ as
defined above).  Then $m_2=\lfl \eps m \rfl + \eta_m - \lfl \eps(m\minus k) \rfl$.
This gives
\[ p_{m-k}^\# (\lfl \eps(m\minus k) \rfl)\,
 p_k^\# (\lfl \eps m \rfl\! \plus \eta_m \minus \lfl\!  \eps(m\minus k) \rfl ) 
\leq p_m^\# (\lfl \eps m \rfl\!  \plus \eta_m ) . \]
Define $\delta_{m,k}$ by $\lfl \eps k \rfl\!  \plus \delta_{m,k} 
= \lfl \eps m \rfl \minus  \lfl \eps(m\minus k) \rfl$.  
Then $|\delta_{m,k} | \leq 1$, and it follows that 
\[ p_{m-k}^\# (\lfl \eps(m\minus k) \rfl)\,
 p_k^\# (\lfl \eps k  \rfl\!  \plus \delta_{m,k} \plus \eta_m) 
\leq p_m^\# (\lfl \eps m \rfl\!  \plus \eta_m ) . \]
Increase $k$, if necessary, until $\lfl \eps k  \rfl \geq A_k \plus 2 
 \geq A_k \plus \delta_{m,k} \plus \eta_m$, with $A_k$ as defined in 
assumption (1).  Since $\eps > \eps_m$ this is always
possible.  Fix $k$, take logarithms and divide the above by $m$,
and take the limsup on the left hand side as $m\to\infty$.  This gives
\begin{equation}
\limsup_{m\to\infty} \sfrac{1}{m} \log p_m^\# (\lfl \eps m \rfl)
\leq \limsup_{m\to\infty} \sfrac{1}{m} \log p_m^\# (\lfl \eps m \rfl \plus \eta_m ) . 
\label{15}
\end{equation}
Comparison with equation \Ref{14} establishes the existence of 
the limit 
\[\C{P}_\# (\eps) 
= \lim_{n\to\infty} \sfrac{1}{n} \log p_n^\# (\lfl \eps n \rfl) .\] 
By equation \Ref{14} there exists an $\eta_n \in \{0,1\}$ 
such that $p_n^\#(\lfl\eps n\rfl \plus \eta_n) \leq [{\cal P}_\# (\eps)]^n$
for each value of $n$.
\end{proof}
\smallskip

A useful consequence of the proof above is that if $\eta_n\in\{0,1\}$ 
is defined as that value of $\ell\in\{0,1\}$ minimizing
$p_n^\#(\lfl \eps n \rfl\plus\ell)$, then equation \Ref{15} gives
\begin{equation}
\log \C{P}_\#(\eps) 
= \lim_{n\to\infty} \sfrac{1}{n} \log p_n^\# (\lfl \eps n \rfl)
\leq \limsup_{n\to\infty} 
\sfrac{1}{n} \log p_n^\# (\lfl \eps n \rfl \plus \eta_n ) 
\label{13A}
\end{equation}
for $\eps\in(\eps_m,\eps_M)$.

It follows from the supermultiplicative inequality in equation \Ref{9}
that $\log \C{P}_\#(\eps)$ is a concave function of $\eps\in(\eps_m,\eps_M)$.

\begin{theorem}
\label{theorem3.5}
$\log {\cal P}_\# (\eps)$ is a concave function of $\eps\in (\eps_m,\eps_M)$.  
${\cal P}_\# (\eps)$ is continuous in $(\eps_m,\eps_M)$, has 
semi-continuous right- and left-derivatives everywhere in 
$(\eps_m,\eps_M)$, and is differentiable almost everywhere (except at
countably many isolated points) in $(\eps_m,\eps_M)$.
\end{theorem}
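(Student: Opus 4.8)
The plan is to prove concavity of $\log\mathcal{P}_\#$ directly from the supermultiplicative inequality (\ref{eqn3.1}), by splitting $n$ into two pieces in the proportion $\lambda:(1-\lambda)$, and then to obtain continuity, the existence of one-sided derivatives, and almost-everywhere differentiability as automatic consequences of finite concavity on an open interval. Finiteness is immediate: by Assumption A(1) one has $p_n^\#(m)\le K^n$, so $\log\mathcal{P}_\#(\epsilon)\le\log K$; and since $p_n^\#(\lfloor\epsilon n\rfloor)$ is a positive integer for all large $n$ when $\epsilon\in(\epsilon_m,\epsilon_M)$ (by Lemma~\ref{lemma1}, $A_n<\epsilon n<B_n$ eventually), we also get $\log\mathcal{P}_\#(\epsilon)\ge 0$. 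Hence $\log\mathcal{P}_\#$ is real-valued and finite on $(\epsilon_m,\epsilon_M)$.

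The key auxiliary step is a bounded-shift lemma: for every fixed integer $s$,
\[ \lim_{n\to\infty}\sfrac{1}{n}\log p_n^\#(\lfloor\epsilon n\rfloor+s)=\log\mathcal{P}_\#(\epsilon). \]
I would prove this by the absorption trick already used in Theorem~\ref{theorem3.4}. Fix a large $k$ and set $j=\lfloor\epsilon(n+k)\rfloor-\lfloor\epsilon n\rfloor+s$; since $\sfrac{1}{k}A_k\to\epsilon_m$ and $\sfrac{1}{k}B_k\to\epsilon_M$ with $\epsilon_m<\epsilon<\epsilon_M$ (Lemma~\ref{lemma1}), both $\epsilon k-A_k$ and $B_k-\epsilon k$ tend to $+\infty$, so $A_k\le j\le B_k$ and $p_k^\#(j)\ge1$ for all large $n$. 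Applying (\ref{eqn3.1}) in the form $p_n^\#(\lfloor\epsilon n\rfloor)\,p_k^\#(j)\le p_{n+k}^\#(\lfloor\epsilon(n+k)\rfloor+s)$, dividing by $n$ and letting $n\to\infty$ with $k$ fixed gives $\log\mathcal{P}_\#(\epsilon)\le\liminf_{n}\sfrac{1}{n}\log p_n^\#(\lfloor\epsilon n\rfloor+s)$. The reverse inequality follows symmetrically by absorbing the shift in the other direction (choosing $j'=\lfloor\epsilon(n+k)\rfloor-\lfloor\epsilon n\rfloor-s$), which yields the matching bound on the $\limsup$.

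With the shift lemma in hand, concavity follows. Fix $\epsilon_1,\epsilon_2\in(\epsilon_m,\epsilon_M)$ and $\lambda\in(0,1)$, put $\epsilon=\lambda\epsilon_1+(1-\lambda)\epsilon_2$, and set $n_1=\lfloor\lambda n\rfloor$, $n_2=n-n_1$, $m_1=\lfloor\epsilon_1 n_1\rfloor$, $m_2=\lfloor\epsilon_2 n_2\rfloor$. Then $m_1+m_2=\lfloor\epsilon n\rfloor+w_n$ with $|w_n|$ bounded, the bound coming only from the three floor functions and from $|n_1-\lambda n|\le1$. Now (\ref{eqn3.1}) gives $p_{n_1}^\#(m_1)\,p_{n_2}^\#(m_2)\le p_n^\#(\lfloor\epsilon n\rfloor+w_n)$; taking $\sfrac{1}{n}\log$ and letting $n\to\infty$, the left side tends to $\lambda\log\mathcal{P}_\#(\epsilon_1)+(1-\lambda)\log\mathcal{P}_\#(\epsilon_2)$ (since $n_1/n\to\lambda$, $n_2/n\to1-\lambda$, and Theorem~\ref{theorem3.4} applies to each factor), while the right side tends to $\log\mathcal{P}_\#(\epsilon)$ because $w_n$ takes only finitely many integer values and, by the shift lemma, each constant-shift subsequence converges to $\log\mathcal{P}_\#(\epsilon)$. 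This yields $\log\mathcal{P}_\#(\epsilon)\ge\lambda\log\mathcal{P}_\#(\epsilon_1)+(1-\lambda)\log\mathcal{P}_\#(\epsilon_2)$, i.e.\ concavity. The main obstacle throughout is exactly the bookkeeping of the floor functions and the verification that the bounded offset $w_n$ can be neutralised; this succeeds only because $\epsilon$ lies in the \emph{open} interval $(\epsilon_m,\epsilon_M)$, which leaves room to absorb a bounded shift with a fixed finite cluster.

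Finally, the regularity claims are classical consequences of finite concavity. A finite concave function on an open interval is automatically continuous there, possesses finite left- and right-hand derivatives at every interior point (with the left derivative never smaller than the right), and these one-sided derivatives are monotone; hence the set of points of non-differentiability is at most countable, so $\log\mathcal{P}_\#$, and therefore $\mathcal{P}_\#$, is differentiable almost everywhere in $(\epsilon_m,\epsilon_M)$.
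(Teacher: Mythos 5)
Your proof is correct, but it takes a genuinely different route from the paper's. The paper specialises to $n_1=n_2=n$ and establishes only the midpoint inequality $\log\mathcal{P}_\#(\epsilon)+\log\mathcal{P}_\#(\delta)\le 2\log\mathcal{P}_\#(\frac{1}{2}(\epsilon+\delta))$, handling the $\pm 1$ floor discrepancy by re-using the minimising shifts $\eta_n$ and the inequality $(\diamond)$ from the proof of Theorem~\ref{theorem3.4}; the passage from midpoint concavity to concavity (and to the stated continuity and differentiability properties) is left implicit, and strictly speaking requires the classical fact that a midpoint-concave function bounded on an open interval is concave. You instead first prove a bounded-shift lemma --- that a fixed integer shift of the argument does not change the limit, obtained by absorbing the shift into a fixed finite cluster of size $k$ with $p_k^\#(j)\ge 1$, which is possible precisely because $\epsilon$ lies strictly between $\epsilon_m$ and $\epsilon_M$ --- and then split $n$ in the general proportion $\lambda:(1-\lambda)$, which yields the full concavity inequality for arbitrary $\lambda\in(0,1)$ in one step. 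Your route is slightly longer but more self-contained: it avoids the appeal to the Bernstein--Doetsch type result, and the shift lemma is a cleaner, reusable substitute for the paper's $z_n/\eta_n/(\diamond)$ bookkeeping (indeed it immediately implies the paper's $(\diamond)$). Both arguments rest on the same mechanism --- the supermultiplicative inequality plus the room afforded by the open interval $(\epsilon_m,\epsilon_M)$ to absorb a bounded offset --- and your derivation of the regularity statements from finite concavity is the standard one the paper also has in mind.
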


\begin{proof}
Choose $\eps$ and $\delta$ so that $\eps_m < \eps < \delta < \eps_M$.

Choose $n_1=n_2=n$, $m_1 = \lfl \eps n \rfl$ and $m_1\plus m_2 = 
\lfl (\eps\plus \delta) n \rfl$ in equation \Ref{9}.  Then 
$m_2 = \lfl (\eps\plus \delta) n \rfl \minus \lfl \eps n \rfl$.  There exists 
an integer $\zeta_n\in\{0,1\}$, dependent on $\delta$ and $\eps$, such that
$m_2 = \lfl \delta n \rfl\! \plus \zeta_n$.  These choices in equation 
\Ref{9} give
\[ p_n^\# (\lfl \eps n \rfl )\, p_n^\# (\lfl \delta n \rfl\!  \plus \zeta_n) \leq
p_{2n}^\# (\lfl (\eps\plus\delta) n \rfl) . \]   

Define $\eta_n$ as before (it is that value of $\ell\in\{0,1\}$ minimizing
$p_n^\# (\lfl \delta n \rfl\!  \plus \ell)$):
\[ p_n^\#(\lfl \delta n\rfl \plus \eta_n) = \min_{\ell\in\{0,1\}}
\{ p_n^\#(\lfl \delta n\rfl \plus \ell) \}
\leq p_n^\#(\lfl\delta n \rfl + \zeta_n) . \]
Replace the factor $p_n^\# (\lfl \delta n \rfl \plus \zeta_n)$
by $p_n^\#(\lfl \delta n \rfl \plus \eta_n)$ to obtain
\[ p_n^\# (\lfl \eps n \rfl )\; p_n^\# (\lfl \delta n \rfl\!  \plus \eta_n) \leq
p_{2n}^\# (\lfl (\eps\plus\delta) n \rfl) . \]
Dividing this by $n$, and taking the limsup 
on the left hand side as $n\to\infty$,  gives
\[ \log {\cal P}_\# (\eps) +
 \limsup_{n\to\infty} \sfrac{1}{n} \log p_n^\# (\lfl \delta n \rfl \plus \eta_n) 
\leq 2 \log {\cal P}_\# (\sfrac{1}{2} (\eps\plus\delta)) . \]
By equation \Ref{13A} it follows that
\[ \log {\cal P}_\# (\eps) + \log {\cal P}_\# (\delta) \leq
2\, \log {\cal P}_\# (\sfrac{1}{2} (\eps\plus\delta)) . \]
Thus, $\log {\cal P}_\# (\eps)$ is concave on $(\eps_m,\eps_M)$
and so differentiable almost everywhere in $(\eps_m,\eps_M)$
(except on a countable set of isolated points).
This shows that the right- and left-derivatives exist everywhere in
$(\eps_m,\eps_M)$ and are semi-continuous.
\end{proof}
\smallskip

\section{Further results}
\label{section3}

\subsection{Existence of the limit in equation \Ref{9A}}

\begin{theorem}
\label{theorem3.6}
Let $\delta_n$ be a sequence of integers such that $A_n < \delta_n < B_n$ 
for all $n\geq M_0$ (where $M_0\in\NatNu$ is fixed).  If
$\lim_{n\to\infty} \sfrac{1}{n}\, \delta_n = \delta \in (\eps_m,\eps_M)$,
then 
\[ \lim_{n\to\infty} \sfrac{1}{n} \log p_n^\# (\delta_n) = \log \C{P}_\# (\delta) . \]
\end{theorem}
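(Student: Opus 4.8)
The plan is to prove the two inequalities
$\liminf_{n\to\infty}\sfrac1n\log p_n^\#(\delta_n)\ge\log\C{P}_\#(\delta)$ and
$\limsup_{n\to\infty}\sfrac1n\log p_n^\#(\delta_n)\le\log\C{P}_\#(\delta)$ separately, writing $f=\log\C{P}_\#$ for brevity. The whole argument rests on a two-stage structure: I would first establish the lower bound for \emph{every} admissible sequence, and only then use that lower bound to prove the upper bound. Before either stage I would record a \emph{shift lemma}: for fixed $\eps\in(\eps_m,\eps_M)$ and fixed integer $j$, $\lim_{n\to\infty}\sfrac1n\log p_n^\#(\lfl\eps n\rfl\plus j)=f(\eps)$. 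Only its lower half is needed, and it follows by writing, for a large fixed block size $c$, $p_n^\#(\lfl\eps n\rfl\plus j)\ge p_{n-c}^\#(\lfl\eps(n\minus c)\rfl)\,p_c^\#(b_n)$ with $b_n=\lfl\eps n\rfl\plus j\minus\lfl\eps(n\minus c)\rfl$; since $b_n$ ranges over a finite set of energies that lie in $(A_c,B_c)$ once $c$ is large (because $A_c/c\to\eps_m<\eps$ and $B_c/c\to\eps_M>\eps$), one has $p_c^\#(b_n)\ge\kappa_c>0$, and dividing by $n$, letting $n\to\infty$, and invoking theorem \ref{theorem3.4} at density $\eps$ gives the claim.

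For the lower bound I would bracket $\delta$ between two densities $\eps_1<\delta<\eps_2$ in $(\eps_m,\eps_M)$ and split $n=s\plus t$. Because $g(s):=\lfl\eps_1 s\rfl\plus\lfl\eps_2(n\minus s)\rfl$ decreases in steps of bounded size as $s$ runs over $\{0,\dots,n\}$ and sweeps an interval containing $\delta_n$, I can choose $s=s_n$ with $|g(s_n)\minus\delta_n|\le C_0$ for a constant $C_0=C_0(\eps_1,\eps_2)$; then $s_n/n\to(\eps_2\minus\delta)/(\eps_2\minus\eps_1)=:\sigma$ and $t_n/n\to1\minus\sigma=:\tau$. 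Setting $m_1=\lfl\eps_1 s_n\rfl$ and $m_2=\delta_n\minus m_1=\lfl\eps_2 t_n\rfl\plus j_n$ with $|j_n|\le C_0$, supermultiplicativity gives $p_n^\#(\delta_n)\ge p_{s_n}^\#(m_1)\,p_{t_n}^\#(m_2)$. Dividing by $n$ and letting $n\to\infty$, theorem \ref{theorem3.4} handles the first factor and the shift lemma the second (uniformly over the finitely many values of $j_n$), yielding $\liminf_{n\to\infty}\sfrac1n\log p_n^\#(\delta_n)\ge\sigma f(\eps_1)\plus\tau f(\eps_2)$. Since $\sigma\eps_1\plus\tau\eps_2=\delta$ and $\sigma\plus\tau=1$, letting $\eps_1\uparrow\delta$ and $\eps_2\downarrow\delta$ and using the continuity of $f$ from theorem \ref{theorem3.5} gives $\liminf_{n\to\infty}\sfrac1n\log p_n^\#(\delta_n)\ge f(\delta)$.

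The upper bound is the main obstacle, and it is where the original proof fails: the naive idea of trapping $p_n^\#(\delta_n)$ as a factor of an object whose \emph{complementary} block sits at an exact floor density is circular, because that block's energy inherits the $o(n)$ fluctuation of $\delta_n$ and is therefore not an exact floor. The repair is to force the \emph{total} to an exact floor and to control the complementary block from below using the lower bound just proved. Concretely, I would fix $\delta<\rho<\rho'<\eps_M$, put $\lambda=(\rho\minus\delta)/(\rho'\minus\rho)$, $v=\lfl\lambda n\rfl$, $N=n\plus v$, and $E=\lfl\rho N\rfl\minus\delta_n$; then $E/v\to\rho'\in(\eps_m,\eps_M)$, and supermultiplicativity gives $p_n^\#(\delta_n)\,p_v^\#(E)\le p_N^\#(\lfl\rho N\rfl)$, whence
\[ \sfrac1n\log p_n^\#(\delta_n)\le\sfrac1n\log p_N^\#(\lfl\rho N\rfl)\minus\sfrac1n\log p_v^\#(E). \]
Letting $n\to\infty$, theorem \ref{theorem3.4} sends the first term to $(1\plus\lambda)f(\rho)$, while the already-established lower bound, applied to the block energies $E$ at sizes $v$, bounds $\liminf_{n\to\infty}\sfrac1n\log p_v^\#(E)$ below by $\lambda f(\rho')$; hence $\limsup_{n\to\infty}\sfrac1n\log p_n^\#(\delta_n)\le(1\plus\lambda)f(\rho)\minus\lambda f(\rho')$. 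A short computation identifies the right-hand side with the value at $\delta$ of the secant line of $f$ through $\rho$ and $\rho'$; because $f$ is concave and $\delta<\rho<\rho'$, this secant extrapolated back to $\delta$ lies above $f(\delta)$, and letting $\rho,\rho'\downarrow\delta$ (using continuity of $f$ and the local boundedness of its slopes in the interior) drives it to $f(\delta)$. This yields $\limsup_{n\to\infty}\sfrac1n\log p_n^\#(\delta_n)\le f(\delta)$ and completes the proof. The essential point, and the fix of the flawed argument, is the order of operations: the lower bound must be proved first and then fed into the denominator of the upper-bound estimate, which is exactly what breaks the apparent circularity.
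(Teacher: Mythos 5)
Your proof is correct, but it takes a genuinely different route from the paper's in both halves. For the upper bound the paper uses the powering-up trick: applying equation \Ref{eqn3.1} to $N$ copies of the same cluster gives $[p_n^\#(\delta_n)]^N \leq p_{nN}^\#(N\delta_n)$, and since $N\delta_n = \lfl (nN)\,\sfrac{1}{n}\delta_n \rfl$ exactly, theorem \ref{theorem3.4} applied at the rational density $\sfrac{1}{n}\delta_n$ yields $\sfrac{1}{n}\log p_n^\#(\delta_n) \leq \log \C{P}_\#(\sfrac{1}{n}\delta_n)$ in one line; continuity of $\log\C{P}_\#$ then finishes with no auxiliary block, no prior lower bound, and no secant argument. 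For the lower bound the paper attaches a single \emph{microscopic} block of size $\kappa_n = o(n)$ (chosen of order $\sqrt{n|f_n|}$ so that the floor-fluctuation $f_n$ is $o(\kappa_n)$) carrying the residual energy $\delta_n - \lfl \delta(n-\kappa_n)\rfl$; this energy has density tending to $\delta \in (\eps_m,\eps_M)$, so the block's count is at least $1$ and contributes nothing on the exponential scale, giving $p_n^\#(\delta_n) \geq p_{n-\kappa_n}^\#(\lfl\delta(n-\kappa_n)\rfl)$ directly. Your macroscopic two-block bracketing $\eps_1 < \delta < \eps_2$ (plus the shift lemma) reaches the same lower bound after a limiting argument, and your upper bound --- padding to an exact floor at density $\rho$ and feeding the lower bound into the complementary block, then extrapolating the secant of the concave function back to $\delta$ --- is a valid, if heavier, alternative; note in particular that your diagnosis of ``the fix'' as an ordering of the two bounds does not match the paper, where the two halves are proved independently. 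What the paper's route buys is brevity; what yours buys is robustness: it does not exploit the arithmetic accident that $N\delta_n$ is an exact floor of the scaled density, and the bracketing scheme generalizes to settings where such exact identities are unavailable. Two small points you should spell out: the lower bound must be stated for an arbitrary sequence of sizes tending to infinity (your block sizes $v = \lfl \lambda n \rfl$ do not range over all of $\NatN$), and you need to verify $A_v < E < B_v$ for all large $n$ so that $p_v^\#(E) \geq 1$ before taking logarithms; both are routine given lemma \ref{lemma1}.
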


\begin{proof}
Let $n\geq M_0$ and apply inequality \Ref{9} $N\minus 1$ times to obtain
\[ \LH p_n^\#(\delta_n) \RH^N \leq p_{nN}^\#(N\delta_n) 
=  p_{nN}^\#(\lfloor nN\,\sfrac{1}{n}\delta_n \rfloor) . \]
Take the logarithm on both sides, divide by $nN$ and let $N\to\infty$.  
Then by theorem \ref{theorem5}, 
\[
\sfrac{1}{n} \log p_n^\#(\delta_n) \leq \log \C{P}_\# (\sfrac{1}{n} \delta_n) . \]
Take the $\limsup$ on the left hand side of this inequality; since 
$\log \C{P}_\# (\eps)$ is a concave function, it is continuous on 
$(\eps_m,\eps_M)$, and in particular since 
$\sfrac{1}{n} \delta_n \to \delta$, the result is
\begin{equation}
\limsup_{n\to\infty} 
\sfrac{1}{n} \log p_n^\#(\delta_n) \leq \log \C{P}_\# (\delta) . 
\label{eqn3.3}
\end{equation}

If $\delta \in (\eps_m,\eps_M)$, then 
\begin{equation}
\delta_n - \lfl \delta (n\minus k) \rfl = \lfl \delta k \rfl +f_n 
\label{eqn3.4}
\end{equation}
where $f_n = o(n)$.  Define $\kappa_n = \lfl \max\{\sqrt{n\,|f_n|},\sqrt{n}\} \rfl$
so that $\kappa_n = o(n)$ and $f_n = o(\kappa_n)$.  Notice that $\kappa_n\to\infty$
as $n\to\infty$.

\Claim There exists an $N$ such that for all $n>\max\{N,M_0\}$,
\[ \delta_{n} - \lfl \delta({n}\minus \kappa_{n}) \rfl > A_{\kappa_n} .\]

\ProofClaim  Let $\kappa_n$ and $f_n$ be as defined above.  

By lemma \ref{lemma1} for all $n\in \NatNu$,
\[ \sfrac{1}{n} A_n \geq \lim_{n\to\infty} \sfrac{1}{n} A_{n} = \eps_m . \]
Let $\alpha = \sfrac{1}{3} (\delta - \eps_m) > 0$.  Since $\kappa_n \to\infty$
as $n\to\infty$, this shows, in particular, that there is an $N_0$ such that
\[ \eps_m \leq  \Sfrac{1}{\kappa_n} A_{\kappa_n} < \eps_m + \alpha,
\qd\hbox{for all $n>N_0$} . \]

By equation \Ref{eqn3.4},
\[ \Sfrac{1}{\kappa_n} \LB \delta_n - \lfl \delta(n \minus \kappa_n) \rfl \RB
= \Sfrac{1}{\kappa_n} \LB \lfl \delta \kappa_n \rfl + f_n \RB . \]
If $n\to\infty$, then $\kappa_n \to\infty$ and this gives
\[ \lim_{n\to\infty}  \Sfrac{1}{\kappa_n} \LB \delta_n 
- \lfl \delta(n \minus \kappa_n) \rfl \RB = \delta  \]
since $f_n = o(\kappa_n)$.
In other words, for $\alpha$ as above, there exists an $N_1$ such that for
all $n> N_1$,
\[ \Sfrac{1}{\kappa_n} \LB \delta_n - \lfl \delta(n \minus \kappa_n) \rfl \RB 
> \delta - \alpha . \]

Put $N= \max\{ N_0,N_1 \}$ and suppose that $n>N$.  Then, by the above
\[  \Sfrac{1}{\kappa_n} A_{\kappa_n} < \eps_m + \alpha
< \delta - \alpha <  \Sfrac{1}{\kappa_n} \LB \delta_n 
- \lfl \delta(n \minus \kappa_n) \rfl \RB .\]

That is, for all $n>N$,
\[ \delta_n - \lfl \delta(n \minus \kappa_n) \rfl > A_{\kappa_n} . \]
This completes the proof of the claim. \cqed

Notice that similar to the proof of the claim above, and by 
lemma \ref{lemma1}, there is an $M$ such that
\[ \delta_{n} - \lfl \delta(n\minus \kappa_n) \rfl < B_{\kappa_n} \]
for all $n > M$.  Indeed, notice that with $\kappa_n$ as defined above,
\[ \lim_{n\to\infty}
 \sfrac{1}{\kappa_n} \LB \delta_{n} - \lfl \delta(n\minus \kappa_n) \rfl \RB
= \delta < \eps_M  = \lim_{n\to\infty} \sfrac{1}{n} B_n. \]
That is, for small $\alpha>0$ and $\alpha < \sfrac{1}{3}(\eps_M-\delta)$, 
there is a $M$ such that for all $n>\max\{M,M_0\}$,
\[ \delta_{n} - \lfl \delta(n\minus \kappa_n) \rfl 
< (\delta \plus \alpha) \kappa_n
< (\eps_M \minus \alpha) \kappa_n < B_{\kappa_n}. \]

Put $J = \max\{N,M,M_0\}$.  Then for all $n>J$, by the claim and the 
last inequality, 
\[ A_{k_n} < \delta_{n} - \lfl \delta(n\minus \kappa_n) \rfl < B_{\kappa_n} . \]
Since $\kappa_n = o(n)$ in the claim above, $n-\kappa_n \to\infty$ as $n\to\infty$.
Use equation \Ref{9} with $n_1=n-\kappa_n$, $n_2=\kappa_n$, 
$m_1=\lfl \delta(n\minus \kappa_n) \rfl$ and 
$m_2=\delta_{n} - \lfl \delta(n\minus \kappa_n) \rfl$.  Then
\[ p_{n-\kappa_n}^\#(\lfl \delta(n\minus \kappa_n) \rfl)\;
p^\#_{\kappa_n}(\delta_{n} - \lfl \delta(n\minus \kappa_n) \rfl) \leq
p^\#_{n}(\delta_{n}) . \]
Take logarithms, divide by $n$, and take the limit inferior on the right hand
side as $n\to\infty$.   Since 
$A_{k_n} < \delta_{n} - \lfl \delta(n\minus \kappa_n) \rfl < B_{\kappa_n}$
for all $n>J$, and $\kappa_n = o(n)$, this gives
\[ \log \C{P}_\#(\delta) = \lim_{n\to\infty}\sfrac{1}{n} \log
p_{n-\kappa_n}^\#(\lfl \delta(n-\kappa_n) \rfl) \leq
\liminf_{n\to\infty} \sfrac{1}{n} \log p_n^\# (\delta_n) \]
by theorem \ref{theorem5}. This completes the proof, by 
equation \Ref{eqn3.3}.
\end{proof}
\smallskip

A corollary of this theorem is that the concavity of $\log \C{P}_\#(\eps)$
can be extended to the closed interval $[\eps_m,\eps_M]$.  This
follows from equation \Ref{9}.  Choose $n_1=n_2=n$,
$m_1=A_n$, and $m_2=\lfl\eps n \rfl$ where $\eps\in(\eps_m,\eps_M)$.
This gives, after taking logarithms and dividing by $n$, 
\begin{equation}
\Sfrac{1}{n} \log p_n^\#(A_n)
+\Sfrac{1}{n} \log p_n^\# (\lfl \eps n \rfl) \leq
  \Sfrac{1}{n} \log p_{2n}^\#(A_n+\lfl \eps n \rfl) .
\end{equation}
By lemma \ref{lemma1}, $A_n \geq n\, \eps_m$ and 
$\lim_{n\to\infty} \sfrac{1}{n} A_n = \eps_m$.  Take the 
limsup on the left hand side above as $n\to\infty$ to obtain
\begin{equation}
\limsup_{n\to\infty} \Sfrac{1}{n} \log p_n^\#(A_n)
+ \log\C{P}_\#(\eps) \leq 2\log \C{P}_\#(\sfrac{1}{2}(\eps_m\plus \eps))
\end{equation}
by theorems \ref{theorem5} and \ref{theorem3.6}.  
Taking $\eps\to\eps_m^+$ gives 
$\limsup_{n\to\infty}\sfrac{1}{n} \log p_n^\#(A_n) 
\leq \log\C{P}_\#(\eps_m^+)$. \textit{Defining} 
$\log \C{P}_\#(\eps_m) = \limsup_{n\to\infty}\sfrac{1}{n} \log p_n^\#(A_n)$
extends $\log\C{P}_\#(\eps)$ to a function which is concave on
to $[\eps_m,\eps_M)$.  

A similar argument shows that for any $\eps\in(\eps_m,\eps_M)$,
\begin{equation}
\log \C{P}_\#(\eps)
+\limsup_{n\to\infty}
   \Sfrac{1}{n}  \log p_n^\#(B_n) 
       \leq 2\log \C{P}_\#(\sfrac{1}{2}(\eps\plus \eps_M)) .
\end{equation}
Similarly to the above, taking $\eps\to\eps_M^-$ gives
$\limsup_{n\to\infty}\sfrac{1}{n} \log p_n^\#(B_n) 
\leq \log\C{P}_\#(\eps_M^-)$. If one \textit{defines} 
$\log \C{P}_\#(\eps_M) = \limsup_{n\to\infty}\sfrac{1}{n} \log p_n^\#(B_n)$
then this extends $\log\C{P}_\#(\eps)$ to a function which is concave on
to $[\eps_m,\eps_M]$.

This gives the following theorem.

\begin{theorem}
Define 
$\log \C{P}_\#(\eps_m) = \limsup_{n\to\infty}\sfrac{1}{n} \log p_n^\#(A_n)$
and
$\log \C{P}_\#(\eps_M) = \limsup_{n\to\infty}\sfrac{1}{n} \log p_n^\#(B_n)$.

Then the function $\log \C{P}_\#(\eps)$ is a concave function on 
$[\eps_m,\eps_M]$. \qed
\end{theorem}
\smallskip

Define $\gamma_n$ as the smallest value of $m$ so that 
$p_n(m) \leq p_n(\gamma_n)$ for all $A_n \leq m \leq B_n$.  Then
\begin{equation}
p_n(\lfl \eps n \rfl) 
\leq p_n \leq (B_n{-}A_n{+}1)\, p_n(\gamma_n) \leq  (B_n{-}A_n{+}1)\, p_n .
\end{equation}
Take logarithms and the divide by $n$, and the limit superior of
$\sfrac{1}{n} \log p_n(\gamma_n)$ as $n\to\infty$.  This proves that
there exists an accumulation point $\gamma\in [\eps_m,\eps_M]$ 
of $\{\gamma_n\}$, so that 
\begin{eqnarray}
& \hspace{-1.5cm}
\log \C{P}_\#(\eps)
\leq \limsup \sfrac{1}{n} \log p_n(\gamma_n) 
= \sup\{ \log \C{P}_\#(\eps) \svv \eps_m < \eps < \eps_M \} \nonumber \\
& = \log \C{P}_\#(\gamma) = \log \mu_\# . 
\label{22A}
\end{eqnarray}
By concavity of $\log \C{P}_\#(\eps)$ there are exists
$\eps_m \leq \eps_1 \leq \eps_2 \leq \eps_M$ so that
\begin{equation}
\log \C{P}_\#(\eps) 
\cases{
< \log \mu_\#  & \hbox{if $\eps_m \leq \eps < \eps_1$;} \\
= \log \mu_\#  & \hbox{if $\eps_1 \leq \eps \leq \eps_2$;} \\
< \log \mu_\#  & \hbox{if $\eps_2 < \eps \leq \eps_M$.} \\
} .
\label{24A}
\end{equation}
Moreover, $\log \C{P}_\#(\eps)$ is strictly increasing on $[\eps_m,\eps_2]$,
constant on $[\eps_1,\eps_2]$ and strictly decreasing on $[\eps_2,\eps_M]$.

\subsection{Integrated density functions}
\label{integrated density functions}

Define 
\begin{equation}
p_n({\leq}m) = \sum_{k=A_n}^m p_n(k),
\;\hbox{and}\;
p_n({\geq}m) = \sum_{k=m}^{B_n} p_n(k) .
\end{equation}
Define $\delta_{n,m}$ as that minimum value of $k\in\{A_n,\ldots,m\}$ so that
\begin{equation}
p_n(\delta_{n,m}) = \max_{A_n\leq k \leq m} p_n(k)
\end{equation}
and thus $p_n(m) \leq p_n(\delta_{n,m}) \leq p_n(\gamma_n)$. 
Since $\delta_{n,m} \leq m$ it follows, in particular, that 
\begin{equation}
p_n(m) \leq p_n(\delta_{n,m}) \leq p_n({\leq}m) \leq (m{-}A_n{+}1)\, p_n(\delta_{n,m}) .
\label{27A}
\end{equation}
Since $\log \C{P}_\#(\eps)$ is non-decreasing on $[\eps_m,\eps_2]$, it follows that, 
by putting $m=\lfl \eps n \rfl$,
\begin{equation}
\limsup_{n\to\infty} \sfrac{1}{n} \log p_n(\delta_{n,\lfl\eps n\rfl})
= \log \C{P}_\#(\eps), \; \hbox{for $\eps_m\leq \eps \leq \eps_2$}.
\end{equation}
Moreover, from equation \Ref{27A}, and the last equation, one may define
$\log \C{P}_\#({\leq}\eps)$ for $\eps \in [\eps_m,\eps_2]$,
\begin{equation}
\hspace{-2cm}
\log \C{P}_\#(\eps)
 = \lim_{n\to\infty} \sfrac{1}{n} \log p_n(\delta_{n,\lfl\eps n\rfl})
 = \lim_{n\to\infty} \sfrac{1}{n} \log p_n({\leq}\lfl \eps n \rfl)
 = \log \C{P}_\# ({\leq} \eps).
\label{29A}
\end{equation}
If $\eps \in (\eps_2,\eps_M]$, then
$p_n (\lfl \eps_2 m \rfl) \leq p_n({\leq} \lfl \eps n \rfl) \leq p_n(\gamma_n)$
and so by equations \Ref{22A} and \Ref{24A} it follows
\begin{equation}
\log \C{P}_\# ({\leq} \eps) = \log \mu_\#,
\;\hbox{for $\eps\in [\eps_2,\eps_M]$}.
\label{30A}
\end{equation}

The proof that $\log \C{P}_\#(\eps) = \lim_{n\to\infty} \sfrac{1}{n} 
\log p_n({\geq}\lfl \eps n \rfl) = \log \C{P}_\#({\geq} \eps)$ for
$\eps \in [\eps_1,\eps_M]$, and $\log \C{P}_\#({\geq} \eps) =
\lim_{n\to\infty} \sfrac{1}{n} \log p_n({\geq}\lfl \eps n \rfl) = 
\log \mu_\#$ if $\eps \in [\eps_m,\eps_1]$ is similar to the above.
Taking this together with equations \Ref{29A} and \Ref{30A},
and noting that for any $\eps \in [\eps_1,\eps_2]$
it is the case that $\log \C{P}_\# ({\leq}\eps) 
= \log \C{P}_\# ({\geq}\eps) = \log \mu_\#$, gives
the following theorem.

\begin{theorem}
Suppose that $\eps_1$ is the minimum value of $\eps$ such that
$\log \C{P}_\#(\eps) = \log \mu_\#$, and that $\eps_2$ is the 
maximum value of $\eps$ such that $\log \C{P}_\#(\eps) = \log \mu_\#$.
Then $\eps_m \leq \eps_1 \leq \eps_2 \leq \eps_M$.

For $\eps\in [\eps_m,\eps_M]$, the integrated density functions are given by
\[ \hspace{-1cm}
 \log \C{P}_\# ({\leq}\eps)
= \lim_{n\to\infty} \sfrac{1}{n} \log p_n({\leq}\lfl \eps n \rfl)
= \cases{ \log \C{P}_\#(\eps) ,& \hbox{if $\eps \in [\eps_m,\eps_1]$}; \cr
                \log \mu_\#, & \hbox{if $\eps \in [\eps_1,\eps_M]$},} \]
and
\[ \hspace{-1cm}
 \log \C{P}_\# ({\geq}\eps)
= \lim_{n\to\infty} \sfrac{1}{n} \log p_n({\leq}\lfl \eps n \rfl)
= \cases{ \log \mu_\#, & \hbox{if $\eps \in [\eps_m,\eps_2]$}; \cr
                \log \C{P}_\#(\eps) ,& \hbox{if $\eps \in [\eps_2,\eps_M]$} .} \]

Thus, it follows that for any $\eps \in [\eps_m,\eps_M]$,
\[\max\{\log \C{P}_\#({\leq}\eps), \log\C{P}_\#({\geq}\eps) \}
= \log \mu_\# \] 
and 
\[\min\{\log \C{P}_\#({\leq}\eps), \log\C{P}_\#({\geq}\eps) \}
= \log \C{P}_\#(\eps).\]
 \qed
\end{theorem}

\subsection{The free energy}
\label{free energy}

The partition function of a statistical mechanics model with $p_n^\#(m)$
states of size $n$ and energy $m$ is given by
\begin{equation}
Z_n(z) = \sum_{m=A_n}^{B_n} p_n^\#(m)\, z^m ,
\label{22}
\end{equation}
where $z = e^{-\xi/k_BT}>0$, $\xi$ is an interaction energy, $T$ the
absolute temperature and $k$ is Boltzmann's constant.  Since
$p_n^\#(m)$ satisfies equation \Ref{9}, it follows that
\begin{eqnarray*}
Z_{n_1}(z)\;Z_{n_2}(z) 
   &= \sum_{m_1=A_{n_1}}^{B_{n_1}} p_{n_1}^\#(m_1)\, z^{m_1}
          \sum_{m_2=A_{n_2}}^{B_{n_2}} p_{n_2}^\#(m_2)\, z^{m_2}
          \nonumber \\
   &\leq \LB B_{n_1+n_2} {-} A_{n_1+n_2} {+} 1 \RB  
    \sum_{m=A_{n_1+n_2}}^{B_{n_1+n_2}} p_{n_1+n_2}^\#(m)\, z^m 
          \nonumber \\
   &=  \LB B_{n_1+n_2} {-} A_{n_1+n_2} {+} 1 \RB\;   Z_{n_1+n_2} (z) .
\end{eqnarray*}
Thus, $\log Z_n(z)$ satisfies a superadditive relation of the type
in equation \Ref{6} and this proves existence of the limit
\begin{equation}
F(z) = \lim_{n\to\infty} \Sfrac{1}{n} \log Z_n(z) 
\label{23}
\end{equation}
by Hammersley's theorem \cite{H62}.  In this case it is said
that the model defined by equation \Ref{22} has a 
\textit{thermodynamic limit}, and $F(z)$ is the free energy.

It is the case that $F(z)$ is the Legendre transform 
of $\log \C{P}_\#(\eps)$.

\begin{theorem} 
If $\eps\in(\eps_m,\eps_M)$, then
$\displaystyle F(z) = \sup_{\eps\in(\eps_m,\eps_M)} 
\{\log \C{P}_\#(\eps) + \eps \log z\}$.
\label{theorem6}
\end{theorem}

\begin{proof}
For any $z>0$ there exists an $m$ in equation \Ref{22} maximizing 
the summand.  Denote the smallest such $m$ by $m_n$ (and $m_n$ 
is a function of $z$).  Then
\begin{equation}
p_n^\# (m_n)\, z^{m_n} \leq Z_n(z) 
\leq (B_n-A_n+1)\, p_n^\# (m_n)\, z^{m_n} .
\label{24}
\end{equation}
Define the function
\[ \eps_z = \limsup_{n\to\infty} \Sfrac{1}{n}\, m_n
= \lim_{k\to\infty} \Sfrac{1}{n_k} \, m_{n_k}, \]
where $\{ n_k\}$ is a subsequence realising the limit superior.  Put $n=n_k$
in equation \Ref{24}, and take $k\to\infty$. This shows that
\[ \lim_{k\to\infty} \Sfrac{1}{n_k} \log Z_{n_k} (z)
 = \lim_{k\to\infty} \Sfrac{1}{n_k} \log p_{n_k}^\# (m_{n_k})
   + \eps_z \log z . \]
By theorem \ref{theorem3.6},
\[ \lim_{k\to\infty} \Sfrac{1}{n_k} \log p_{n_k}^\# (m_{n_k})
 = \log \C{P}_\#(\eps_z) \]
 so that 
 \begin{equation}
\log \C{P}_\# (\eps_z) + \eps_z \log z
 =   \lim_{k\to\infty} \Sfrac{1}{n_k} \log Z_{n_k} (z)
 = F(z) 
 \label{25}
 \end{equation}
 by equation \Ref{23}.
 
By equation \Ref{22}, for all $z>0$ and $n\in\NatNu$, and for any 
$\eps \in (\eps_m,\eps_M)$,
\[ p_n^\#(\lfloor \eps n \rfloor)\, z^{\lfloor \eps n \rfloor} \leq Z_n(z). \]
Take logarithms, divide by $n$ and take $n\to\infty$.  This shows that
\begin{equation}
\log \C{P}_\#(\eps) + \eps \log z 
  \leq \lim_{n\to\infty} \Sfrac{1}{n} \log Z_n(z) = F(z) ,
  \quad\hbox{for any $\eps\in (\eps_m,\eps_M)$.}
\label{26}
\end{equation}
For fixed finite $z>0$, $\log \C{P}_\#(\eps)\plus \eps\log z$ is a 
concave and continuous function of $\eps\in(\eps_m,\eps_M)$.
Comparing equations \Ref{25} and \Ref{26} shows that
$\eps_z$ realises an equality in equation \Ref{26} so that
\[ F(z) = \sup_{\eps\in(\eps_m,\eps_M)} 
\{\log \C{P}_\#(\eps) + \eps \log z\}. \]
To see this, suppose that $\sup_{\eps\in(\eps_m,\eps_M)} 
\{\log \C{P}_\#(\eps) + \eps \log z\} < F(z)$ so that there
exists a $\delta>0$ such that 
$\log \C{P}_\#(\eps) + \eps \log z < F(z)\minus\delta$ 
for all  $\eps\in [\eps_m,\eps_M]$. This is a contradiction
of equation \Ref{25}. This completes the proof. 
\end{proof}
\smallskip

Next, by equation \Ref{26},
\[ \log \C{P}_\#(\eps) \leq \inf_{z>0} \{ F(z) - \eps \log z\} . \]
By the concavity of $\log \C{P}_\#(\eps)$, $\eps_z$ is a 
non-decreasing function of $z$.  Whenever it has an inverse 
$z_\eps$ it follows by equation \Ref{25} that
\begin{equation}
\log \C{P}_\#(\eps) = F(z_\eps) - \eps \log z_\eps .
\label{27}
\end{equation}
Similar to the argument above, if $\log \C{P}_\#(\eps) 
< \inf_{z>0} \{ F(z) \minus \eps \log z\}$ then there exists a 
$\delta>0$ such that $ \log \C{P}_\#(\eps) \plus \delta
< F(z) \minus \eps \log z$ for all $z\geq 0$. This is a contradiction
to equation \Ref{27}, and it follows that 
\begin{equation}
\log \C{P}_\#(\eps) = \inf_{z>0} \{ F(z) - \eps \log z\} . 
\label{28}
\end{equation}

\section{Some examples}
\label{examples}

\subsection{Adsorbing self-avoiding walks in the square lattice}

\subsubsection{Existence of the density function:}
The \textit{positive half square lattice} is the subset of the square lattice
with non-negative $y$-coordinates.  The \textit{boundary} or 
\textit{adsorbing line} in the positive half square lattice is the $x$-axis.
A \textit{loop} in the positive half square lattice is a self-avoiding walk
with both its endpoints in the adsorbing line.  See the left
panel of figure \ref{figure2}.  An \textit{adsorbing loop} is a loop with 
a given number $v$ of vertices in the adsorbing line.  The number of 
adsorbing loops of length $n$ with $v$ visits to the adsorbing line is 
denoted by $\ell_n(v)$.

A loop is \textit{unfolded} if its end-vertices are left-most and
right-most; see the right panel of figure \ref{figure2} for an example
of an unfolded loop. The number of unfolded loops of length $n$ is 
denoted by $\ell_n^\dagger$. By the Hammersley-Welsh construction \cite{HW62A},
\begin{equation}
\ell_n^\dagger (v) \leq \ell_n(v) \leq e^{o(n)}\, \ell_n^\dagger (v).
\label{29}
\end{equation}
By concatenating an unfolded loop of length $n$ with $v$ visits, and 
an unfolded loop of length $m$ with $w$ visits, as shown in 
figure \ref{figure3}, $\ell_n^\dagger (v) \; \ell_m^\dagger (w) 
\leq \ell_{n+m+1}^\dagger (v+w)$
since the resulting adsorbing loop has length $n{+}m{+}1$.
If $n$ is replaced by $n{-}1$, and $m$ by $m{-}1$, then
\begin{equation}
\ell_{n-1}^\dagger (v) \; \ell_{m-1}^\dagger (w) 
\leq \ell_{n+m-1}^\dagger (v+w) 
\end{equation}
and $\ell_{n-1}^\dagger(v)$ satisfies equation \Ref{10}.
Since $\ell^\dagger_{n-1}(v) \leq \ell_{n-1}(v) \leq 4^{n-1}$ and, 
for a given $n{-}1$, $0\leq v \leq n$, if follows by theorems \ref{theorem5}
and \ref{theorem3.6}, and by equation \Ref{29}, that
\begin{equation}
\hspace{-2cm}
\log \C{P}_v (\epsilon) 
= \lim_{n\to\infty} \sfrac{1}{n} \log \ell_n^\dagger (\lfl \epsilon n \rfl)
= \lim_{n\to\infty} \sfrac{1}{n} \log \ell_n (\lfl \epsilon n \rfl),
\;\hbox{for $\eps\in(0,1)$}.
\label{31}
\end{equation}

\begin{figure}[t!]
\beginpicture
\setcoordinatesystem units <1pt,1pt> 
\setlinear

\setplotarea x from -100 to 250, y from -20 to 35

\setcoordinatesystem units <1pt,1pt> point at -70 20

\put
{\beginpicture
\setplotsymbol ({$\cdot$})
\plot  -30 0  -40 0  -50 0  -50 10  -50 20  -40 20  -40 30  -30 30  -30 20 
-30 10  -20 10  -20 20  -10 20  -10 10  -10 0  0 0  10 0  10 10  10 20  
20 20  20 10  20 0   /
\plot 60 0  60 10   60 20  60 30  70 30  80 30  90 30  90 20  80 20  
70 20  70 10  70 0  80 0  90 0  90 10  100 10  100 20  110 20  110 10 
110 0  120 0  120 10  130 10  130 0   /  
\multiput {\fns$\bullet$} at
  -30 0  -40 0  -50 0  -50 10  -50 20  -40 20  -40 30  -30 30  -30 20 
-30 10  -20 10  -20 20  -10 20  -10 10  -10 0  0 0  10 0  10 10  10 20  
20 20  20 10  20 0  60 0  60 10   60 20  60 30  70 30  80 30  90 30  90 20  80 20  
70 20  70 10  70 0  80 0  90 0  90 10  100 10  100 20  110 20  110 10 
110 0  120 0  120 10  130 10  130 0  /
 
\put {$\ell_n (v)$} at -10 35 
\put {$\ell_n^\dagger (v)$} at 110 35 

\setplotsymbol ({\scalebox{2.0}{.}})

\plot -60 -4 30 -4 /
\plot 50 -4 140 -4 /

\endpicture} at 0 30 

\endpicture
\caption{\textit{An adsorbing loop (on the left), and an unfolded adsorbing
loop on the right).}}
\label{figure2}
\end{figure}

Let $c_n$ be the number of self-avoiding walks from the origin in the square lattice,
and let $c_n^+(v)$ be the number of self-avoiding walks, from the origin 
in the positive half square lattice (\textit{positive walks}), of length $n$ with 
$v$ visits to the adsorbing line.   If the height ($y$-coordinate) of 
the last vertex of a positive walk is denoted by $h$, then 
$c_n^+(v;h)$ is the number of positive walks of length $n$, 
with $v$ visits and last vertex at height $h$.

\begin{figure}[t!]
\beginpicture
\setcoordinatesystem units <1pt,1pt> 
\setlinear

\setplotarea x from -100 to 250, y from -20 to 30

\setcoordinatesystem units <1pt,1pt> point at -70 20

\put
{\beginpicture
\setplotsymbol ({$\cdot$})
\plot -50 0  -40 0  -40 10  -30 10  -30 20  -20 20  -20 10  -20 0  -10 0  -10 10  
0 10  0 20  10 20  10 10  10 0 /
\plot 20 0  20 10  20 20  20 30  30 30  30 20  40 20  40 10  30 10  30 0
40 0  50 0  50 10  60 10  70 10  70 0  /
\multiput {\fns$\bullet$} at
-50 0  -40 0  -40 10  -30 10  -30 20  -20 20  -20 10  -20 0  -10 0  -10 10  
0 10  0 20  10 20  10 10  10 0  20 0  20 10  20 20  20 30  30 30  30 20  
40 20  40 10  30 10  30 0 40 0  50 0  50 10  60 10  70 10  70 0 /

\setplotsymbol ({.})

\plot 14 0 16 0 /
 
\put {$\ell_n^\dagger (v)$} at -20 30 
\put {$\ell_m^\dagger (w)$} at 60 25 

\setplotsymbol ({\scalebox{2.0}{.}})

\plot -60 -4 80 -4 /

\endpicture} at 0 30 

\endpicture
\caption{\textit{Concatenating two unfolded adsorbing self-avoiding loops.
If the loop on the left has length $n$ and $v$ visits to the adsorbing plane,
then the number of conformations is $\ell_n^\dagger (v)$.  Similarly, the
loop on the right has length $m$ and $w$ visits, so the number of conformations
is $\ell_m^\dagger (w)$. Concatenating these loops by adding a step between
them shows that $\ell_n^\dagger(v)\,\ell_m^\dagger(w) 
\leq \ell_{n+m+1}^\dagger(v+w)$.}}
\label{figure3}
\end{figure}

\textit{Unfolded positive walks} are positive walks, with first vertex 
(at the origin) left-most, and last vertex right-most.  Denote by 
$c_n^{{+,\dagger}}(v;h)$ the number of unfolded positive walks of
length $n$, and with $v$ visits and last vertex at height $h$. 

Positive walks can be unfolded, using the Hammersley-Welsh construction 
\cite{HW62A}, while keeping constant both $v$ and $h$.  This shows that
\begin{equation}
c_n^{+,\dagger} (v;h) \leq 
c_n^+(v;h) \leq e^{o(n)}\,c_n^{+,\dagger} (v;h) .
\label{32}
\end{equation}
For each of the functions $c_n^{+,\dagger} (v;h)$ and $c_n^+(v;h)$ there
are, for given $n$ and $v$, most popular values of $h$, denoted
by $h^*$ (a function of $n$ and $v$), so that
\begin{equation}
c_n^{+,\dagger} (v;h^*) \leq c_n^{+,\dagger}(v)
\leq (n{+}1)\, c_n^{+,\dagger}(v,h^*)
\label{33}
\end{equation}
since there are at most $n{+}1$ different choices for $h$.
By unfolding positive walks with $v$ visits using the Hammersley-Welsh
construction \cite{HW62A} gives 
$c_n^+(v) \leq e^{o(n)}\, c_n^{+,\dagger}(v)$, and by combining 
this with equations \Ref{32} and \Ref{33},
\begin{equation}
\hspace{-2cm}
c_n^{+,\dagger}(v;h) \leq c_n^{+,\dagger}(v;h^*)
\leq c_n^+(v) 
\leq e^{o(n)}\, c_n^{+,\dagger}(v) 
\leq (n{+}1)\,e^{o(n)}\,c_n^{+,\dagger}(v;h^*) .
\label{34}
\end{equation}

Two positive unfolded walks, with endpoints of height $h$, can be
concatenated as shown in figure \ref{figure4} (by reflecting
the second walk in the vertical plane and placing it as shown)
to create an adsorbing loop.  Choosing $h$ to be the most
popular height of unfolded positive adsorbing walks give
\begin{equation}
\LB c_n^{+,\dagger}(v;h^*) \RB^2 
\leq \ell_{2n+1}(2v)
\leq c_{2n+1}^+(2v) .
\label{35}
\end{equation}
Combining equations \Ref{34} and \Ref{35} and noting the
existence of the limits in equation \Ref{31} proves that the
density function of positive adsorbing self-avoiding walks is given by
\begin{equation}
\hspace{-2cm}
\log \C{P}_v (\epsilon) 
= \lim_{n\to\infty} \sfrac{1}{n} \log c_n^+ (\lfl \epsilon n \rfl)
= \lim_{n\to\infty} \sfrac{1}{n} \log \ell_n^\dagger (\lfl \epsilon n \rfl),
\;\hbox{for $\eps\in(0,1)$}.
\end{equation}

\subsubsection{Numerical estimation of $\log \C{P}_v(\epsilon)$:}
Positive walks were approximately enumerated using the GAS-algorithm
\cite{JvRR09,JvR16} implemented with Berretti-Sokal elementary moves 
\cite{BS85}.  A total of $5\times 10^{11}$ iterations were performed
sampling positive walks of lengths $n\in\{0,1,2,\ldots,500\}$, and the data
were used to estimate $c_n^+(v)$.  Since $\log \C{P}_v(\eps)$ is approximated
by $\sfrac{1}{n} \log c_n (v/n)$ when $v = \lfl \eps n \rfl$ and $n$ is large,
a plot of $\sfrac{1}{n} \log c_n (v/n)$ against $v/n$ is shown in figure
\ref{figure5}.  The data points are accumulating on a limiting curve, which
as $n\to\infty$, will be $\log \C{P}_v(\eps)$.

\begin{figure}[t!]
\beginpicture
\setcoordinatesystem units <1pt,1pt> 
\setlinear

\setplotarea x from -100 to 250, y from -20 to 35

\setcoordinatesystem units <1pt,1pt> point at -70 20

\put
{\beginpicture

\arrow <6pt> [.2,.67] from 15 0 to 15 30 

\setplotsymbol ({$\cdot$})
\plot -50 0  -40 0  -40 10  -30 10  -30 20  -20 20  -20 10  -20 0  -10 0  -10 10  
0 10  0 20  10 20  10 30 /
\plot  20 30  30 30  30 20  40 20  40 10  30 10  30 0
40 0  50 0  50 10  60 10  70 10  80 10  80 0 /
\multiput {\fns$\bullet$} at
-50 0  -40 0  -40 10  -30 10  -30 20  -20 20  -20 10  -20 0  -10 0  -10 10  
0 10  0 20  10 20  10 30  20 30  30 30  30 20  40 20  40 10  30 10  30 0
40 0  50 0  50 10  60 10  70 10  80 10  80 0  /

\setplotsymbol ({.})

\plot 14 30 16 30 /
 
\put {$c_n^{+,\dagger} (v)$} at -20 30 
\put {$h$} at 20 15
\put {$c_n^{+,\dagger} (v)$} at 60 25 

\setplotsymbol ({\scalebox{2.0}{.}})

\plot -60 -4 90 -4 /

\endpicture} at 0 30 

\endpicture
\caption{\textit{Concatenating two unfolded positive walks to 
create an adsorbing loop.}}
\label{figure4}
\end{figure}

The partition function of adsorbing positive walks is given by
\begin{equation}
V_n(a) = \sum_{v=0}^n c^+_n(v)\, a^v
\end{equation}
where $a=e^{-1/k_BT}$ is an activity conjugate to the number of visits.
The limiting free energy
\begin{equation}
\kappa(a) = \lim_{n\to\infty} \Sfrac{1}{n} \log V_n(a)
\end{equation}
exists, and $\kappa(a) \simeq \log \mu_{d-1} + \log a$ \cite{HTW82}
(where $\mu_d$ is the growth constant of the self-avoiding walk).
It is reasonable to assume that $\kappa(a) \simeq \log\mu_{d-1} + \log a  + A/\log a$
for large $a>1$.   By equation \Ref{28} one may then estimate
$\log \C{P}_v(\eps)$ by minimizing $\log \mu_{d-1} + \log a + A/\log a - \epsilon \log a$.
This gives  
\begin{equation}
\log \C{P}_v(\eps) \approx \log \mu_{d-1} + 2\sqrt{A}\, \sqrt{1-\epsilon} .
\end{equation}
For the purpose of fitting this to numerical data, it would be reasonable to 
assume that 
\begin{equation}
\hspace{-1cm}
\log \C{P}_v(\eps) \simeq \alpha + \beta \sqrt{1-\eps} + \Sfrac{\gamma}{\sqrt{1-\eps}}
+ \Sfrac{\delta}{1-\eps} +  \cdots, 
\;\hbox{for $\eps\in(0,1)$} .
\label{40}   %%ZXZ[40]
\end{equation}
for $\eps\in(0,1)$ with higher order decaying terms left away.  Since 
$\log \C{P}_v(0^+)  = \log \mu_d$, it should also be the case
that $\alpha + \beta + \gamma + \delta + \cdots = \log \mu_d$. 
Also, $\log \C{P}_v(1) = 0$, since the fully adsorbed phase in the half
square lattice has growth constant $\log \mu_1 = 0$.  

A best fit to our data gives
\begin{equation}
\hspace{-2cm}
\log \C{P}_v(\eps) \approx 
-0.256
+ 1.202\, \sqrt{1-\eps} 
+ \frac{0.0300}{\sqrt{1-\eps}}
+ \frac{0.00135}{1-\eps}  .
\end{equation}
Putting $\eps=0$ in the above gives $\C{P}(0) \approx 0.97069$, 
which compares well to the accurate estimate 
$\log \mu_2 \approx 0.97008114726(76)$ \cite{CJ12}.  By taking the 
right derivative of $\log \C{P}(\eps)$ to $\eps$ at $\eps=0$, the estimated 
adsorption critical point is
\begin{equation}
\log a_c^+ = 0.587\ldots, \qd\hbox{in the square lattice}.
\end{equation}
This compares well with the accurate estimate $\log a^+_c = 0.57416$ 
in reference \cite{BGJ12} and $\log a^+_c = 0.5761(17)$ in 
reference \cite{JvR16}.

\begin{figure}[t]
\centering
\includegraphics[scale=0.4]{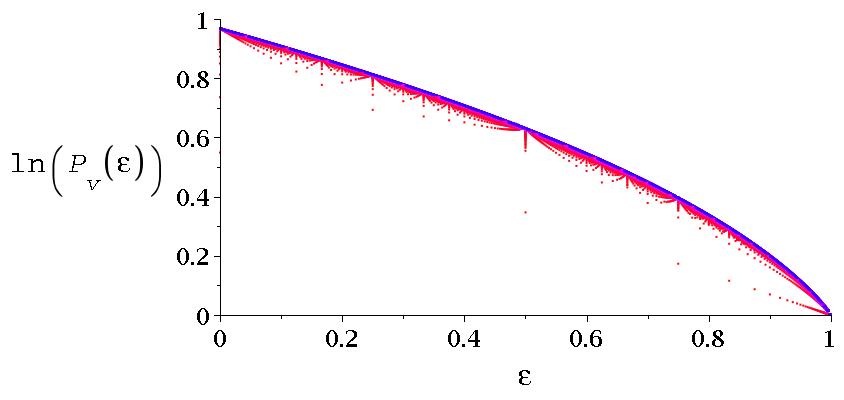}
\caption{$\log \C{P}_v (\eps)$ as a function of $\eps\in(0,1)$.
The points have coordinates $(v/n,(\log c_n^+(v))/n)$ for
$0\leq n \leq 500$ and $0\leq v \leq n$ and they accumulate on
$\log \C{P}_v(\eps)$ as $n$ increases, by theorem \ref{theorem3.6}.}
\label{figure5}   %%ZXZ[figure5]
\end{figure}

In the cubic lattice a similar least squares analysis of data for 
adsorbing walks of length up to $n=500$ gives
\begin{equation}
\hspace{-2cm}
\log \C{P}_v(\eps) \approx 
0.922
+ 0.628\, \sqrt{1-\eps} 
+ \frac{0.00469}{\sqrt{1-\eps}}
+ \frac{0.000639}{1-\eps}  .
\end{equation}  This gives the estimate $\C{P}(0) \approx 1.54589$, 
which is close to $\log \mu_3 \approx 1.544160971(58)$ \cite{C13}.  
The estimate for the critical point is 
\begin{equation}
\log a^+_c = 0.316\ldots, \qd\hbox{in the cubic lattice}.
\end{equation}
This estimate is slightly larger than the estimate 
$\log a^+_c = 0.2670(54)$ in reference \cite{JvR16}, which due to the
methods in that paper, is the more accurate estimate.

\subsection{Adsorbing polygons in the square lattice}

\subsubsection{Existence of the density function:}
Adsorbing polygons is a model of adsorbing ring polymers.   Denote
the number of square lattice polygons of length $n$, counted up to
translations, by $p_n$.  Notice that the square lattice is bipartite,
so that $p_n=0$ if $n$ is odd.   We assume that $n$ and $m$ are both
even in what follows below.

The lattice for this model is again the (positive) half square lattice, but 
with lattice polygons which are constrained to have at least one vertex
(a \textit{visit}) in the adsorbing line.  Denote the number of these
\textit{adsorbing polygons} of length $n$, (counted under 
equivalence of translations parallel to the adsorbing line), with 
$v$ vertices in the adsorbing line, by $p_n(v)$.  The partition
function is defined by
\begin{equation}
P_n(a) = \sum_{v=2}^{n/2} p_n(v)\, a^v,
\end{equation}
since each adsorbing polygon has at least two visits in the adsorbing
line, and cannot have more than $n/2$ visits in the half square lattice.

An indirect proof of the existence of the free energy of adsorbing square
lattice polygons was given in reference \cite{JvR15} (see section 9.2), using
most popular class arguments \cite{HW62A} to prove that
a large subclass of adsorbing polygons satisfies a supermultiplicative
relation.

\begin{figure}[t!]
\beginpicture
\setcoordinatesystem units <1pt,1pt> 
\setlinear

\setplotarea x from -120 to 250, y from -20 to 55

\setcoordinatesystem units <1pt,1pt> point at -70 20

\put
{\beginpicture

\arrow <5pt> [.2,.67] from -80 0 to -80 10 
\arrow <5pt> [.2,.67] from -80 0 to -70 0 
\put {$\hj$} at -85 12
\put {$\hi$} at -72 -10

\setplotsymbol ({$\cdot$})
\plot  -30 0  -40 0  -50 0  -50 10  -50 20  -40 20  -40 30  -30 30  -30 20 
-30 10  -20 10  -20 20  -10 20  -10 30  -10 40   0 40 0 30  10 30  10 20
10 10  20 10  30 10  40 10  40 0  30 0  20 0  10 0  0 0  -10 0  -20 0  -30 0    /
\plot  50 0  50 10  50 20  60 20  60 30  80 30  80 20  80 10  
90 10  90 20  100 20  100 30  110 30 120 30  120 20
130 20  130 10  130 0  120 0  110 0  110 10  100 10  100 0
90 0  80 0  70 0  70 10  60 10  60 0  50 0   /
\multiput {\fns$\bullet$} at
 -30 0  -40 0  -50 0  -50 10  -50 20  -40 20  -40 30  -30 30  -30 20 
-30 10  -20 10  -20 20  -10 20  -10 30  -10 40   0 40 0 30  10 30  10 20
10 10  20 10  30 10  40 10  40 0  30 0  20 0  10 0  0 0  -10 0  -20 0  
-30 0   50 0  50 10  50 20  60 20  60 30  70 30  80 30  80 20  80 10  
90 10  90 20  100 20  100 30  110 30  120 30  
120 20  130 20  130 10  130 0  120 0  110 0  110 10  100 10  
100 0  90 0  80 0  70 0  70 10  60 10  60 0  50 0 /

\plot 110 34 110 40  120 40 120 34    /  
 
\put {$p_n (v)$} at -30 40 
\put {$p_m (w)$} at 80 40 
\put {$A$} at 130 40

\plot  44 0 46 0 /
\plot  44 10 46 10 /

\plot 37 3 43 7 /
\plot 47 3 53 7 /

\setplotsymbol ({\scalebox{2.0}{.}})

\plot -60 -4 140 -4 /

\put {(a)} at -110 20

\endpicture} at 0 30

\setcoordinatesystem units <1pt,1pt> point at -70 90

\put
{\beginpicture

\arrow <5pt> [.2,.67] from -80 0 to -80 10 
\arrow <5pt> [.2,.67] from -80 0 to -70 0 
\put {$\hj$} at -85 12
\put {$\hi$} at -72 -10

\setplotsymbol ({$\cdot$})
\plot  -30 0  -40 0  -50 0  -50 10  -50 20  -40 20  -40 30  -30 30  -30 20 
-30 10  -20 10  -20 20  -10 20  -10 30  -10 40   0 40 0 30  10 30  20 30  
20 20  30 20  30 10  40 10  40 0   30 0  20 0  20 10  10 10  0 10
0 0  -10 0  -20 0  -30 0    /
\plot  60 40 70 40  80 40  90 40  90 30  80 30  80 20  80 10  
90 10  90 20  100 20  100 30  110 30  110 40  120 40  120 30  
120 20  130 20  130 10  130 0  120 0  110 0  110 10  100 10  100 0
90 0  80 0  70 0  60 0  60 10  50 10  50 20  40 20  40 30  30 30 
30 40  40 40  50 40  60 40 /
\multiput {\fns$\bullet$} at
-30 0  -40 0  -50 0  -50 10  -50 20  -40 20  -40 30  -30 30  -30 20 
-30 10  -20 10  -20 20  -10 20  -10 30  -10 40   0 40 0 30  10 30  20 30  
20 20  30 20  30 10  40 10  40 0   30 0  20 0  20 10  10 10  0 10
0 0  -10 0  -20 0  -30 0  60 40 70 40  80 40  90 40  90 30  80 30  80 20  
80 10  90 10  90 20  100 20  100 30  110 30  110 40  120 40  120 30  
120 20  130 20  130 10  130 0  120 0  110 0  110 10  100 10  100 0
90 0  80 0  70 0  60 0  60 10  50 10  50 20  40 20  40 30  30 30 
30 40  40 40  50 40  60 40  /
 
\put {$p_n (v)$} at -30 40 
\put {$p_m (w)$} at 80 50 
\put {$A$} at 130 50

\plot  24 30 26 30 /

\plot  110 44  110 50  120 50  120 44  /

\setplotsymbol ({\scalebox{2.0}{.}})

\plot -60 -4 140 -4 /

\put {(b)} at -110 20

\endpicture} at 0 30 

\setcoordinatesystem units <1pt,1pt> point at -70 180

\put
{\beginpicture

\arrow <5pt> [.2,.67] from -80 0 to -80 20 
\arrow <5pt> [.2,.67] from -80 0 to -60 0 
\put {$\hj$} at -86 12
\put {$\hi$} at -70 -7

\setplotsymbol ({$\cdot$})
\plot  -20 0 0 0 20 0 20 20 20 40 40 40 40 60 20 60 0 60 
-20 60 -20 40 -20 20 -20 0  /
\plot  100 20 100 0 80 0 60 0 60 20 60 40 80 40 80 60 
100 60 100 40 100 20   /
\multiput {\fns$\bullet$} at
-20 0 0 0 20 0 20 20 20 40 40 40 40 60 20 60 0 60 
-20 60 -20 40 -20 20 -20 0
100 20 100 0 80 0 60 0 60 20 60 40 80 40 80 60 
100 60 100 40 100 20  /
 
\plot  45 40 55 40 /

\multiput {\scalebox{2.0}{$*$}} at 40 40 60 40  /

\put {$\h{x}$} at 35 48
\put {$\h{y}$} at 67 47

\multiput {\scalebox{1.5}{$\circ$}} at 60 60 40 20 /

\setplotsymbol ({\scalebox{2.0}{.}})

\plot -50 -4 130 -4 /

\put {(c)} at -110 20

\endpicture} at 0 30 
\normalcolor

\endpicture
\caption{\textit{(a) Two adsorbing square lattice polygons placed to that there
is a pair of parallel edges between them.  (b)  These two polygons cannot be placed
so that there is a pair of parallel edges between them.  (c) The highest
pair of nearest neighbour vertices $(\h{x},\h{y})$ in two polygons.
Notice that either $\h{x}+\h{j}$, or $\h{y}+\h{j}$, but not both, are accupied.}}
\label{figure6}
\end{figure}

In this section a direct proof is given showing that adsorbing polygons satisfy 
a supermultiplicative inequality.   This proves existence of $\log P_p(\eps)$
(which is a new result).  This proof proceeds by first showing that 
two adsorbing polygons can be translated in the horizontal direction 
so that they are close to each other (see figure \ref{figure6}(a)) and 
then they can be  concatenated into a single polygon.  

There are several cases to consider in completing this construction.

In case (a) it is possible to translate a polygon of length $n$
with $v$ visits and a polygon of length $m$ with $w$ visits, so that
there is a pair of parallel vertical edges between them. By deleting these
two edges, and inserting two edges, as shown in figure \ref{figure6}(a),
one obtains a polygon of length $n{+}m$ with $v{+}w$ visits.
Notice that the concatenated polygon has a top-most right-most edge
which can be replaced by three edges as shown at $A$ in figure
\ref{figure6}(a).  This increases the length of the polygon by two
without creating new visits to the adsorbing line.

There are cases where it is not possible to translate two polygons
so that they can be concatenated as in case (a) above.  An example
is shown in figure \ref{figure6}(b).  It is however, possible to translate
the two polygons so that there is a pair of vertices, one in the first, and the
other in the second polygon, which are one step apart in the horizontal
direction.  There is, moreover, a highest pair of such vertices (shown
in figure \ref{figure6}(b)).  Let these vertices be denoted $\h{x}$ and
$\h{y}$, shown in figure \ref{figure6}(c).

Having identified the highest pair of vertices $(\h{x},\h{y})$, where
$\h{x}$ is in the first polygon, and $\h{y}$ is in the second polygon,
so that $\h{y}=\h{x}{+}\hi$, it is now shown how to concatenate 
the two polygons. 

First examine the vicinity of the pair of vertices $\h{x}$ and $\h{y}$.  
Observe that one may not have both vertices $\h{x}{+}\hj$ and $\h{y}{+}\hj$
occupied (since $\h{x}$ and $\h{y}$ are the highest such vertices).
In addition, if both $\h{x}{+}\hj$ and $\h{y}{+}\hj$ are unoccupied,
then both the edges $\edge{\h{x}}{(\h{x}{-}\hj)}$ and
$\edge{\h{x}}{(\h{x}{-}\hj)}$ are in the polygons respectively,
and case (a) is recovered.

Thus, exactly one of $\h{x}{+}\hj$ and $\h{y}{+}\hj$ are occupied
by the respective polygons, and we assume, without loss of generality 
that this is $\h{x}{+}\hj$.  The other case can be dealt with by
reflecting the polygons in the vertical line, completing the concatenation,
and then reflecting it back.

By analysing all possible arrangements of edges incident on
$\h{x}$ and $\h{y}$, it is seen that the edge $\edge{\h{x}}{(\h{x}{+}\hj)}$ 
must be in the first polygon, while the edge $\edge{\h{y}}{(\h{y}{+}\hj)}$ 
cannot be in the first or second polygon, since $\h{y}{+}\hj$ is not occupied.

This gives the situation in figure \ref{figure7}(a)(1) and (a)(2).
In other words, the basic starting point is the arrangement of edges
about the pair of vertices $(\h{x},\h{y})$ shown in figure \ref{figure7}(a)(2).

Notice that the height of the vertices $(\h{x},\h{y})$ is at least one 
(step above the adsorbing line) -- they cannot be visits.  We will consider
two cases.  In the first instance the edge $\edge{\h{x}}{\h{y}}$ is at 
height one (thus close to the adsorbing line), and secondly, the height 
of $\edge{\h{x}}{\h{y}}$ is at least two.

\begin{figure}[t!]
\beginpicture
\setcoordinatesystem units <1pt,1pt> 
\setlinear

\setplotarea x from -120 to 250, y from -20 to 55

\setcoordinatesystem units <1pt,1pt> point at -35 20

\put
{\beginpicture

\arrow <5pt> [.2,.67] from -80 0 to -80 20 
\arrow <5pt> [.2,.67] from -80 0 to -60 0 
\put {$\hj$} at -87 12
\put {$\hi$} at -72 -9

\setplotsymbol ({$\cdot$})

\setdots <4pt>
\plot -40 20  -20 20 /
\setsolid
\plot -40 20  -40 40 /

\multiput {$\bullet$} at -40 20  -20 20  -40 40  /
\put {$\circ$} at -20 40 

\put {$\h{x}$} at -46 14
\put {$\h{y}$} at -14 13

\put {(1)} at -30 -14

\setplotsymbol ({\fiverm.})
\arrow <5pt> [.2,.67] from 0 20  to  20 20 

\setplotsymbol ({$\cdot$})

\setdots <4pt>
\plot 60 20  80 20 /
\setsolid
\plot 60 20  60 40 /
\plot 80 0  80 20  100 20 /
\plot 40 20  60 20 /

\multiput {$\bullet$} at 60 20  80 20  60 40  80 0  100 20  40 20 /
\put {$\circ$} at 80 40 

\put {$\h{x}$} at 54 12
\put {$\h{y}$} at 86 13

\put {$\times$} at 60 0
\put {${+}$} at 100 40

\put {(2)} at 70 -14

\setplotsymbol ({\scalebox{2.0}{.}})
\color{white}
\multiput {$\bullet$} at -60 0 140 0 /
\color{black}
\put {(a)} at -110 20

\endpicture} at 0 30

\setcoordinatesystem units <1pt,1pt> point at -69 110

\put
{\beginpicture

\arrow <5pt> [.2,.67] from -90 0 to -90 20 
\arrow <5pt> [.2,.67] from -90 0 to -70 0 
\put {$\hj$} at -97 12
\put {$\hi$} at -82 -9

\setplotsymbol ({$\cdot$})

\setdots <4pt>
%\plot -40 20  -20 20 /
\setsolid
\plot -60 20  -40 20  -40 40 /
\plot 0 0 -20 0  -20 20  0 20 /
\setdashes <2pt>
\plot -40 20  -20 20  /
\plot -40 40 -20 40 0 40 0 20 /
\setsolid
\multiput {$\bullet$} at -60 20  -40 20  0 0  -20 0  -20 20  0 20  -40 40  /
\multiput {$\circ$} at -20 40  -40 0 / 

\put {$\h{x}$} at -46 14
\put {$\h{y}$} at -14 13

\plot -44 34  -36 26  /
\plot -14 24  -6 16 /

\put {$\circ$} at 0 40

\put {(3)} at -30 -16

\setplotsymbol ({\fiverm.})
%\arrow <5pt> [.2,.67] from 10 20  to  30 20 

\setplotsymbol ({$\cdot$})

\setdots <4pt>
%\plot 40 20  60 20 /
\setsolid
\plot 20 20  40 20  40 40 /
\plot 80 0 60 0  60 20  80 20 80 40 /
\setdashes <2pt>
\plot 40 20  60 20  /
\plot 40 40 60 40 80 40 /
\setsolid
\multiput {$\bullet$} at 20 20  40 20  80 0  60 0  60 20  80 20  40 40 80 40 /
\multiput {$\circ$} at 60 40  40 0 / 

\put {$\h{x}$} at 34 14
\put {$\h{y}$} at 66 13

\plot 36 34  44 26  /
\plot 76 34  84 26 /
\plot 66 24  74 16 /

\put {(4)} at 50 -16

\setplotsymbol ({\fiverm.})
%\arrow <5pt> [.2,.67] from 10 20  to  30 20 

\setplotsymbol ({$\cdot$})

\setdots <4pt>
\plot 120 20  140 20 /
\setsolid
\plot 100 20  120 20  120 40 /
\plot 200 0 180 0 160 0 140 0  140 20  160 20 180 20   /
\plot 160 60 160 40 180 40 /
\multiput {$\bullet$} at 100 20  120 20  120 40 
                                       160 0 140 0  140 20  160 20 
                                       160 60 160 40 180 40  180 20  180 0 200 0 200 20 /
\multiput {$\circ$} at 140 40  120 0 /

\put {$\alpha$} at 180 30 
\put {$\beta$} at 190 20 

\plot 180 20 180 25 /  \plot 180 40 180 35 /
\plot 180 20 185 20 /  \plot 200 20 195 20 /

\put {$\h{x}$} at 114 14
\put {$\h{y}$} at 146 13

\put {(5)} at 130 -16

\setplotsymbol ({\scalebox{2.0}{.}})

\plot -60 -4 0 -4 /
\plot 20 -4 80 -4 /
\plot 100 -4 200 -4 /

\put {(b)} at -120 20

\endpicture} at 0 30

\setcoordinatesystem units <1pt,1pt> point at -60 210

\put
{\beginpicture

\arrow <5pt> [.2,.67] from -90 0 to -90 20 
\arrow <5pt> [.2,.67] from -90 0 to -70 0 
\put {$\hj$} at -97 12
\put {$\hi$} at -82 -9

\setplotsymbol ({\fiverm.})
%\arrow <5pt> [.2,.67] from 10 20  to  30 20 

\setplotsymbol ({$\cdot$})

\setdots <4pt>
\plot -40 20  -20 20 /
\setsolid
\plot -60 20  -40 20  -40 40 /
\plot 40 0 20 0 0 0 -20 0  -20 20  0 20 20 20 40 20  /
\plot 0 60 0 40 20 40 /
\setdashes <2pt>
\plot 40 20  40 0  /
\setsolid
\multiput {$\bullet$} at  -60 20  -40 20  -40 40
                                        40 0 20 0 0 0 -20 0  -20 20  0 20 20 20 40 20
                                        0 60 0 40 20 40  /
\multiput {$\circ$} at -20 40  -40 0 / 

\put {$\h{x}$} at -46 14
\put {$\h{y}$} at -14 13

\plot -24 14 -16 6 /
\plot -14 4 -6 -4 /
\plot 6 4 14 -4 /
\plot 26 4 34 -4 /
\plot -14 24 -6 16 /
\plot 6 24 14 16 /
\plot 26 24 34 16 /

\put {(6)} at -10 -16

\setplotsymbol ({\fiverm.})
\arrow <5pt> [.2,.67] from 50 20  to  70 20 

\setplotsymbol ({$\cdot$})

\setdots <4pt>
%\plot 100 20  120 20 /
\setsolid
\plot 80 20  100 20  100 40 /
\plot 180 0 180 20  /
\plot 140 60 140 40 160 40 /
\setdashes <2pt>
\plot 100 40  120 40 140 40  /
\plot 100 20 100 0 120 0 140 0 140 20 160 20 160 40  /
\setsolid
\multiput {$\bullet$} at  80 20  100 20  100 40
                                       180 0 180 20
                                       140 60 140 40 160 40   /
\multiput {$\circ$} at 100 0  120 40 120 0 120 20 
                                   140 0  140 20  160 0  160 20
                                   180 0  180 20  / 

\put {$\h{x}$} at 94 14
\put {$\h{y}$} at 126 13

\plot 96 34 104 26 /
\plot 146 44 154 36 /

\put {(7)} at 130 -16

\setplotsymbol ({\scalebox{2.0}{.}})

\plot -60 -4 40 -4 /
\plot 80 -4 180 -4 /

\put {(c)} at -120 20

\endpicture} at 0 30

\setcoordinatesystem units <1pt,1pt> point at -37 310

\put
{\beginpicture

\arrow <5pt> [.2,.67] from -80 0 to -80 20 
\arrow <5pt> [.2,.67] from -80 0 to -60 0 
\put {$\hj$} at -87 12
\put {$\hi$} at -72 -9

\setplotsymbol ({$\cdot$})

\setdots <4pt>
\plot -40 20 -20 20 /
\setsolid
\plot -60 20 -40 20  -40 40 /
\plot -20 0  -20 20  0 20 /
\setdashes <2pt>
\plot -40 20  -40 0  -20 0 /
\plot -40 40  -20 40  -20 20 /
\setsolid
\plot -44 32 -36 28 /
\plot -24 12 -16 8 /

\multiput {$\bullet$} at -60 20  -40 20  -20 0  -20 20  0 20  -40 40  /
\multiput {$\circ$} at -20 40  -40 0 / 

\put {$\h{x}$} at -46 14
\put {$\h{y}$} at -12 14

\put {(8)} at -30 -34

\setplotsymbol ({\fiverm.})

\setplotsymbol ({$\cdot$})

\setdots <4pt>
\plot 40 20  60 20 /
\setsolid
\plot 40 20  40 40 /
\plot 40 0  60 0  60 20  80 20 /
\plot 20 20  40 20 /
\setdashes <2pt>
\plot 40 0  40 20  /
\plot 40 40  60 40  60 20  /
\setsolid
\plot 36 32 44 28 /
\plot 56 12  64 8 /
\plot 46 2  54 -2 /

\multiput {$\bullet$} at 40 20  60 20  40 40  60 0  80 20  20 20
                                      40 0  /
\put {$\circ$} at 60 40 

\put {$\h{x}$} at 34 14
\put {$\h{y}$} at 68 14

\put {(9)} at 50 -34

\setplotsymbol ({\fiverm.})

\setplotsymbol ({$\cdot$})

\setdots <4pt>
\plot 120 20  140 20 /
\setsolid
\plot 100 20  120 20  120 40 /
\plot 160 0  140 0  140 20  160 20 /
\plot 100 0  120 0  120 -20 /
\setdashes <2pt>
\plot 40 0  40 20  /
\plot 40 40  60 40  60 20  /
\setsolid
\plot 36 32 44 28 /
\plot 56 12  64 8 /
\plot 46 2  54 -2 /

\multiput {$\bullet$} at 100 20  120 20  120 40  160 0  
                                      140 0  140 20  160 20  100 0  
                                      120 0  120 -20 / 
\multiput {$\circ$} at 140 40 140 -20  /

\put {$\h{x}$} at 114 14
\put {$\h{y}$} at 148 14

\put {A} at 110 30
\put {A} at 110 -10
\put {B} at 160 10

\put {(10)} at 130 -34

\setplotsymbol ({\scalebox{2.0}{.}})
\color{white}
\multiput {$\bullet$} at -80 30 140 -10 /
\color{black}
\put {(d)} at -110 20

\endpicture} at 10 30

\setcoordinatesystem units <1pt,1pt> point at -26 410

\put
{\beginpicture

\arrow <5pt> [.2,.67] from -80 0 to -80 20 
\arrow <5pt> [.2,.67] from -80 0 to -60 0 
\put {$\hj$} at -87 12
\put {$\hi$} at -72 -9

\setplotsymbol ({$\cdot$})

\setdots <4pt>
\plot -20 20  0 20 /
\setsolid
\plot -40 20  -20 20  -20 40 /
\plot  20 0  0 0  0 20  20 20 /
\plot -40 0  -20 0  -20 -20 /

\multiput {$\bullet$} at -40 20  -20 20  -20 40  20 0  
                                      0 0  0 20  20 20  -40 0  -20 0  
                                      -20 -20 / 
\multiput {$\circ$} at 0 40 0 -20  /

\put {$\h{x}$} at -26 14
\put {$\h{y}$} at 8 14

\put {A} at -30 30
\put {A} at -30 -10
\put {B} at 20 10

\put {(11)} at -10 -34

\setplotsymbol ({\fiverm.})
\arrow <5pt> [.2,.67] from  30 10 to 50 10 

\setplotsymbol ({$\cdot$})

\setdots <4pt>

\setsolid
\plot 60 20  80 20  80 40 /
\plot  140 0 120 0 120  20  140 20 /
\plot 60 0  80 0  80 -20 /
\setdashes <2pt>
\plot 80 40  100 40  120 40 120 20  /
\plot 80 20  100 20  100 0  120 0  /
\setsolid
\plot 76 32  84 28 /
\plot 116 12  124 8 /

\multiput {$\bullet$} at 60 20  80 20  80 40 140 0 120 0 
                                     120  20  140 20 60 0  80 0  80 -20 / 
\multiput {$\circ$} at 100 -20  100 0  100 20  100 40  120 -20  120 40  /

\put {$\h{x}$} at 74 14
\put {$\h{y}{+}\hi$} at 132 29

\put {A} at 70 30
\put {A} at 70 -10
\put {B} at 140 10

\put {(12)} at 100 -34

\setplotsymbol ({\scalebox{2.0}{.}})
\color{white}
\multiput {$\bullet$} at -80 30 140 -10 /
\color{black}
\put {(e)} at -110 20

\endpicture} at 10 30 

\normalcolor

\endpicture
\caption{\textit{The case analysis of concatenating two adsorbing polygons.
Either the edge $\alpha$, or the edge $\beta$ is present in case (5).
If $\alpha$ is present, then this case is handled similarly to case (4),
but $5$ edges are deleted and $3$ are added, increasing the length by
$2$.  If $\beta$ is present, then this case is analysed in (c).}}
\label{figure7}
\end{figure}

\vspace{2mm}
\noindent{\textit{(a) Height equal to one:}} The cases are shown in
figure \ref{figure7}(b).  These are found by noting that the vertex
marked $\times$ in figure \ref{figure7}(a)(2) cannot be occupied by 
either polygon.  However, the vertex marked by ${+}$ may be
either not occupied, giving figure \ref{figure7}(b)(3), or occupied,
in which case a case analysis give either figures \ref{figure7}(b)(4)
or \ref{figure7}(b)(5).

In the case of figure \ref{figure7}(b)(3), insert the edges
$\edge{(\h{x}{+}\hj)}{(\h{y}{+}\hj)}$,
$\edge{(\h{y}{+}\hj)}{(\h{y}{+}\hi{+}\hj)}$,
$\edge{(\h{x}{+}\hi{+}\hj)}{(\h{y}{+}\hi)}$ and
$\edge{\h{x}}{\h{y}}$, and then delete the edges
$\edge{\h{x}}{(\h{x}{+}\hj)}$ and
$\edge{\h{y}}{(\h{y}{+}\hi)}$.  This gives a polygon
of length $n{+}m{+2}$ and with $v{+}w$ visits.

The case in figure \ref{figure7}(b)(4) is similarly done.
Insert the edges
$\edge{(\h{x}{+}\hj)}{(\h{y}{+}\hj)}$,
$\edge{(\h{y}{+}\hj)}{(\h{y}{+}\hi{+}\hj)}$ and
$\edge{\h{x}}{\h{y}}$, and then delete the edges
$\edge{\h{x}}{(\h{x}{+}\hj)}$,
$\edge{\h{y}}{(\h{y}{+}\hi)}$ and
$\edge{(\h{y}{+}\hi)}{(\h{y}{+}\hi{+}\hj)}$.
This gives a polygon of length $n{+}m$ and with
$v{+}w$ visits.

The remaining case is figure \ref{figure7}(b)(5).
This case is more complicated, and is dealt with in
figure \ref{figure7}(c).  The first part of the construction
is to delete the subwalk of length seven steps containing
$\h{y}$ as shown in figure \ref{figure7}(c)(6) starting at
$\h{y}{+}3\hi$ and ending at
$\h{y}{-}\hj{+}3\hi$. This is followed by inserting the edge 
$\edge{(\h{y}{+}3\hi)}{(\h{y}{-}\hj{+}3\hi)}$.
The result is shown in figure \ref{figure7}{c}(7) where we
also add eight edges and delete two.  The newly added edges
are the subwalks 
$\h{x}{\sim}(\h{x}{-}\hj){\sim}(\h{x}{+}\hi{-}\hj){\sim}(\h{x}{+}2\hi{-}\hj)
{\sim}(\h{x}{+}2\hi){\sim}(\h{x}{+}3\hi){\sim}(\h{x}{+}2\hi{+}\hj)$
and
$(\h{x}{+}\hj){\sim}(\h{x}{+}\hi{+}\hj){\sim}(\h{x}{+}2\hi{+}\hj)$.
The two deleted edges are
$\edge{(\h{y}{+}\hi{+}\hj)}{\h{y}{+}2\hi{+}\hj)}$ and
$\edge{\h{x}}{(\h{x}{+}\hj)}$.
The resulting polygon has length $n{+}m$ and the total number 
of visits $v{+}w$ did not change.

This completes the cases when the vertices $\h{x}$ and $\h{y}$
are at height one.

\vspace{2mm}
\noindent{\textit{(b) Height larger than one:}}  The starting
point is again the configuration in figure \ref{figure7}(a)(2),
and we consider the cases that the vertex $\times$ is or is not
occupied (the vertex marked by ${+}$ will not play a role).

If the vertex marked by $\times$ in figure \ref{figure7}(a)(2)
is not occupied, then the concatenation is done as shown in 
figure \ref{figure7}(d)(8).  That is, insert edges
$\edge{(\h{x}{+}\hj)}{(\h{y}{+}\hj)}$,
$\edge{(\h{y}{+}\hj)}{\h{y}}$, 
$\edge{(\h{x}}{\h{x}{-}\hj}$, and
$\edge{(\h{x}{-}\hj)}{(\h{y}{-}\hj)}$, and then delete edges
$\edge{\h{x}}{(\h{x}{+}\hj)}$ and
$\edge{\h{y}}{(\h{y}{-}\hj)}$.
This gives a polygon of length $n{+}m{+}2$ with $v{+}w$
visits.

If $\times$ is occupied, then a case analysis shows that either
the conformations in figures \ref{figure7}(d)(9) or (d)(10) are
encountered. 

In the case of figure \ref{figure7}(d)(9), the concatenation is done as
shown. Insert edges
$\edge{(\h{x}{+}\hj)}{(\h{y}{+}\hj)}$,
$\edge{(\h{y}{+}\hj)}{\h{y}}$, and
$\edge{(\h{x}}{\h{x}{-}\hj}$, and then delete the edges
$\edge{\h{x}}{(\h{x}{+}\hj)}$,
$\edge{(\h{x}{-}\hj)}{(\h{y}{-}\hj)}$ and 
$\edge{(\h{y}}{\h{y}{-}\hj}$.
This gives a polygon of length $n{+}m$ with $v{+}w$ visits.

The remaining case is figure \ref{figure7}(d)(10), and the approach
here is shown in figure \ref{figure7}(e).

First notice that if the subwalks both marked by A in figure
\ref{figure7}(d)(10) are in two different polygons, then there is
a pair of parallel edges between the polygons and they can be concatenated without
changing length or the number of visits.

This, the subwalks in figure \ref{figure7}(d)(10) marked by A are
both in the first polygon, and the subwalk marked B is in the second
polygon. Proceed by translating the second polygon one step in the
$\hi$ direction as shown in figure \ref{figure7}(e)(11) to obtain
figure \ref{figure7}(e)(12).  This translation makes available the
vertices $\h{x}{+}\hi{+}N\hj$ for $N\in\{-2,-1,0,1\}$, while the
vertex $\h{x}{+}2\hi{+}\hj$ is also unoccupied.  The polygons
are now concatenated by inserting the subwalks
${\LA\right.}(\h{x}{+}\hj){\sim}(\h{x}{+}\hi{+}\hj){\sim}(\h{x}{+}2\hi{+}\hj)
{\sim}(\h{x}{+}2\hi){\left.\RA}$ and
${\LA\right.}\h{x}{\sim}(\h{x}{+}\hi){\sim}(\h{x}{+}\hi{-}\hj)
{\sim}(\h{x}{+}2\hi{-}\hj){\left.\RA}$, and deleting the edges
$\edge{\h{x}}{(\h{x}{+}\hj)}$ and
$\edge{(\h{y}{+}\hi)}{(\h{y}{+}\hi{-}\hj)}$.
The resulting polygon has length $n{+}m{+}4$ and has $v{+}w$ visits.

\begin{figure}[t]
\centering
\includegraphics[scale=0.4]{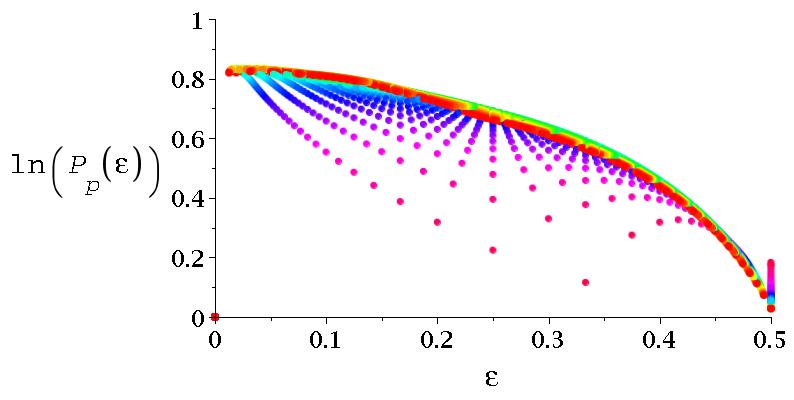}
\caption{Numerical estimate of $\log \C{P}_p (\eps)$ as a function of 
$\eps\in(0,1/2)$. The points have coordinates $(v/n,(\log c_n^+(v))/n)$ 
for $0\leq n \leq 160$ and $0\leq v \leq n/2$ and they accumulate on
$\log \C{P}_v(\eps)$ as $n$ increases, by theorem \ref{theorem3.6}.
Observe that the maximum in the data is well below $0.97$ for
small values of $\epsilon$, showing that for these values of $n$
(the length of the polygons) these estimates are still well below 
$\log \C{P}_p (\eps)$.  As $n\to \infty$ in this mode, the maximum
in the curve should approach $\log \mu_2 \approx 0.97$ at
$\eps=0$.}
\label{figure8}   %%ZXZ[figure8]
\end{figure}

\vspace{2mm}
\noindent{\textit{(c) Supermultiplicativity of adsorbing polygons:}}
While the number of visits to the adsorbing line is additive under the
concatenation of polygons in all the cases examined above, the resulting 
concatenated polygon may have lengths $n{+}m{+}k$ for $k\in\{0,2,4\}$
(and exactly $v{+}w$ visits).  This shows, in particular, that
\begin{equation}
p_n(v)\,p_m(w) \leq
p_{n+m}(v{+}w)
+ p_{n+m+2}(v{+}w)
+ p_{n+m+4}(v{+}w) .
\end{equation}
However, note that $p_n(v) \leq p_{n+2}(v)$, as seen by adding edges
at A in figure \ref{figure6}.  This simplifies the above to
$ p_n(v)\,p_m(w) \leq 3\, p_{n+m+4}(v{+}w)$, which, by replacing
$n \to n{-}4$ and $m \to m{-}4$, gives
\begin{equation}
p_{n-4}(v)\, p_{m-4}(w) \leq 3\, p_{n+m-4}(v{+}w) .
\end{equation}
Putting $q_n(v) = \sfrac{1}{3}\,p_{n-4}(v)$ then shows that 
$q_n(v)$ satisfies equation \Ref{9}.  Moreover $0 \leq q_n(v) \leq
p_n(v) \leq (2d)^n$ and $q_n(v) > 0$ if $2 \leq v \leq (n-4)/2$ and
$n\geq 8$ and $n$ even.  Thus, it follows from theorems \ref{theorem5}
and \ref{theorem3.6} that
\begin{equation}
\hspace{-2cm}
\log \C{P}_p(\eps)   
= \lim_{n\to\infty} \sfrac{1}{n} \log q_n(\lfl \eps n \rfl)
= \lim_{n\to\infty} \sfrac{1}{n} \log p_n(\lfl \eps n \rfl),
\;\hbox{for $\eps\in(0,\sfrac{1}{2})$}
\end{equation}
where the limits are taken through even $n$.

\subsubsection{Numerical estimation of $\log \C{P}_p(\eps)$:}
Adsorbing polygons were approximately enumerated using the
GARM algorithm \cite{RJvR08}, implemented with BFACF elementary
moves \cite{ACF83,BF81}.  Square lattice polygons were sampled of lengths
$n\leq 160$.  Overall, the numerical simulation in this model is more
difficult compared to adsorbing walks sampled in the last section, 
and the number of realised GARM sequences exceeded $1.5\times 10^8$.

The function $\log \C{P}_p(\eps)$ is approximated by
$\sfrac{1}{n} \log p_n(v/n)$ where $v=\lfl \eps n \rfl$ for large
values of $n$.  The results are shown in figure \ref{figure8}.
These data are not well approximated by functions similar to
equation \Ref{40} (but with $\epsilon$ replaced by $2\eps$).  However,
a non-linear fit using the function $a_0 + a_1 (1-2\eps)^{b_1}$ gives a good
fit,
\begin{equation}
\log \C{P}_p(\eps) \approx -0.0565 + 0.943\,(1-2\eps)^{0.408} .
\label{49}  %%ZXZ[49]
\end{equation} 
Taking the right derivative at $\eps=0$ gives the estimate of the
adsorption critical point for adsorbing square lattice polygons:
$\log a^+_p \approx 0.77$, although this estimate may have a large
uncertainly.  This estimate also suggests that the critical adsorption
point for walks, $a_c^+$ is strictly less than $a_p^+$ in the square
lattice (but it is known that $a_c^+ = a_p^+$ in the cubic lattice,
see theorem 9.23 in reference \cite{JvR15}).  Thus, we conjecture 
that $a_c^+ < a_p^+$ in the square lattice.

\begin{figure}[t!]
\beginpicture
\setcoordinatesystem units <1pt,1pt> 
\setlinear

\setplotarea x from -100 to 250, y from -20 to 35

\setcoordinatesystem units <1pt,1pt> point at -70 20

\put
{\beginpicture

\arrow <5pt> [.2,.67] from 10 25 to 10 40
\arrow <5pt> [.2,.67] from 95 45 to 95 60 
\circulararc 360 degrees from -30 4 center at -30 0 

\setplotsymbol ({$\cdot$})
\plot  -30 0  -40 0  -50 0  -50 10  -50 20  -40 20  -40 30  -30 30  -30 20 
-30 10  -20 10  -20 20  -10 20  -10 10  -10 0  0 0  10 0  10 10  10 20  /
\plot 60 0 60 10 60 20 70 20 70 30 70 40 80 40 80 30 80 20 90 20 100 20
100 30 90 30 90 40  100 40 110 40 120 40 120 30 130 30 130 20 120 20 
110 20 110 10 120 10 120 0 110 0 100 0  90 0  90 10 80 10 80 0 70 0 
60 0  /  
\multiput {\fns$\bullet$} at
-30 0  -40 0  -50 0  -50 10  -50 20  -40 20  -40 30  -30 30  -30 20 
-30 10  -20 10  -20 20  -10 20  -10 10  -10 0  0 0  10 0  10 10  10 20
60 0 60 10 60 20 70 20 70 30 70 40 80 40 80 30 80 20 90 20 100 20
100 30 90 30 90 40  100 40 110 40 120 40 120 30 130 30 130 20 120 20 
110 20 110 10 120 10 120 0 110 0 100 0  90 0  90 10 80 10 80 0 70 0 
  /
 
\put {$c^+_n (h)$} at -56 35 
\put {$p_n (h)$} at 146 35 
\multiput {$f$} at 17 40 102 55 /

\setplotsymbol ({\scalebox{2.0}{.}})

\plot -60 -4 20 -4 /
\plot 50 -4 140 -4 /
\plot 50 44 140 44 /

\endpicture} at 0 30 

\endpicture
\caption{\textit{Left: A self-avoiding walk from the origin pulled vertically by
a force $f$ at its endpoint.  Right:  A lattice polygon attached to the bottom
line pulled by a force $f$ in a plane through its highest vertices.}}
\label{figure9}
\end{figure}

\subsection{Pulled polygons in the square lattice}

\subsubsection{Existence of the density function:}
A ring polymer between two sticky plates pulled apart by a force $f$
can be modelled by a lattice polygon between two plates (and with
at least one vertex in each plate) which is being pulled apart by a force
$f$.  This model is illustrated on the right in figure \ref{figure9} where
the polygon is attached (has at least one vertex) in the bottom line
and in the top line, which can move as the height of the polygon changes.
A force pulling the top line stretches the polygon, as its highest vertices
are being pulled with the top line.

This model is quantified by letting $p_n(h)$ be the number of lattice polygons
in the half-square lattice, attached to the bottom line, and with highest 
vertices a height $h$ above the bottom line.  

A related model is shown on the left in figure \ref{figure9}, namely a 
positive self-avoiding walk from the origin in the half square lattice, pulled
at its endpoint by a vertical force $f$.  If $c_n^+(h)$ is the number of self-avoiding
walks from the origin in the half square lattice, of length $n$, and with endpoint
at height $h$, then the partition function of this model is given by
\begin{equation}
C_n^+(y) = \sum_{h=0}^n c_n^+(h)\, y^h ,
\end{equation}
where $y = e^{f/k_BT}$ with $f$ the applied vertical force ($T$ is the
temperature and $k_B$ the Boltzmann constant).  The free energy in this
model exists \cite{JOTW09} and is given by
\begin{equation}
\lambda(y) = \lim_{n\to\infty} \Sfrac{1}{n} \log C_n^+(y) .
\label{51}   %%ZXZ[51]
\end{equation}

\begin{figure}[t!]
\beginpicture
\setcoordinatesystem units <1pt,1pt> 
\setlinear

\setplotarea x from -110 to 250, y from -20 to 50

\setcoordinatesystem units <1pt,1pt> point at -70 20

\put
{\beginpicture

\arrow <5pt> [.2,.67] from 205 0 to 205 39
\arrow <5pt> [.2,.67] from 205 41 to 205 80

\setdashes <2pt>
\plot 50 40 200 40 /
\setsolid

\setplotsymbol ({$\cdot$})
\plot 60 0 60 10 60 20 70 20 70 30 70 40 80 40 80 30 80 20 90 20 100 20
100 30 90 30 90 40  100 40 110 40 120 40 120 30 130 30 130 20 120 20 
110 20 110 10 120 10 120 0 110 0 100 0  90 0  90 10 80 10 80 0 70 0 
60 0  /  
\plot 110 40 120 40 130 40 140 40 140 50 130 50 120 50 120 60 130 60 
140 60 150 60 150 50 160 50 160 40 170 40 170 50 180 50 180 60 190 60
190 70 180 70 180 80 170 80 170 70 170 60 160 60 160 70 160 80 
150 80 140 80 140 70 130 70 120 70 110 70 110 60 100 60 100 50
110 50 110 40 /
\multiput {\fns$\bullet$} at 
60 0 60 10 60 20 70 20 70 30 70 40 80 40 80 30 80 20 90 20 100 20
100 30 90 30 90 40  100 40 110 40 120 40 120 30 130 30 130 20 120 20 
110 20 110 10 120 10 120 0 110 0 100 0  90 0  90 10 80 10 80 0 70 0 
60 0 110 40 120 40 130 40 140 40 140 50 130 50 120 50 120 60 130 60 
140 60 150 60 150 50 160 50 160 40 170 40 170 50 180 50 180 60 190 60
190 70 180 70 180 80 170 80 170 70 170 60 160 60 160 70 160 80 
150 80 140 80 140 70 130 70 120 70 110 70 110 60 100 60 100 50
110 50 110 40 /
 
\put {$p_n (h)$} at 150 15 
\put {$p_m (k)$} at 80 65 

\put {$h$} at 210 20
\put {$k$} at 210 60

\plot 113 36 117 44 /

\setplotsymbol ({\scalebox{2.0}{.}})

\plot 50 -4 200 -4 /
\plot 50 84 200 84 /

\endpicture} at 0 30 

\endpicture
\caption{\textit{A polygon of length $n$ and height $h$ can be concatenated
with a polygon of length $m$ and height $k$ by placing the second polygon
so that its bottom-most, left-most edge is on top of the top-most, right-most
edge of the first polygon.  If these identified edges are deleted, then a polygon
of length $n{+}m{-}2$ and height $h{+}k$ is obtained.}}
\label{figure10}
\end{figure}

The partition function of lattice polygons between two plates and pulled
in the top plate is given by
\begin{equation}
R_n(y) = \sum_{h=1}^{(n-2)/2} p_n(h)\, y^h .
\end{equation}
The free energy in this model also exists, and is given by \cite{GJJW18}
\begin{equation}
\lambda(\sqrt{y}) = \lim_{n\to\infty} \Sfrac{1}{n} \log R_n (y) ,
\end{equation}
where $\lambda(y)$ is the free energy of pulled self-avoiding walks
(equation \Ref{51}).

By placing a polygon of length $n$ and height $h$ so that its top-most right-most
edge is identified with the bottom-most left-most edge of a polygon of
length $m$ and height $k$, and then deleting the two identified edges,
a polygon of length $n{+}m{-}2$ of height $h{+}k$ is obtained.  This is shown
in figure \ref{figure10}, and since there are $p_n(h)$ choices for the first 
polygon, and $p_m(k)$ choices for the second polygon,
$p_n(h)\,p_m(k) \leq p_{n+m-2}(h{+}k) $. Replacing $n$ by $n{+}2$ and
$m$ by $m{+}2$ to see that
\begin{equation}
p_{n+2}(h)\,p_{m+2}(k) \leq p_{n+m+2}(h{+}k) .
\end{equation}
That is, $p_{n+2}(h)$ satisfies equation \Ref{10}.
Since $p_{n+2}(h) \leq p_{n+2} \leq 4^{n+2}$ and, 
for a given $n{+}2$, $0\leq h \leq n/2$, if follows by theorems \ref{theorem5}
and \ref{theorem3.6} that the density function in this model is given by
\begin{equation}
\log \C{P}_h (\epsilon) 
= \lim_{n\to\infty} \sfrac{1}{n} \log p_n (\lfl \epsilon n \rfl),
\;\hbox{for $\eps\in(0,\sfrac{1}{2})$}.
\label{55}
\end{equation}

\begin{figure}[t]
\centering
\includegraphics[scale=0.4]{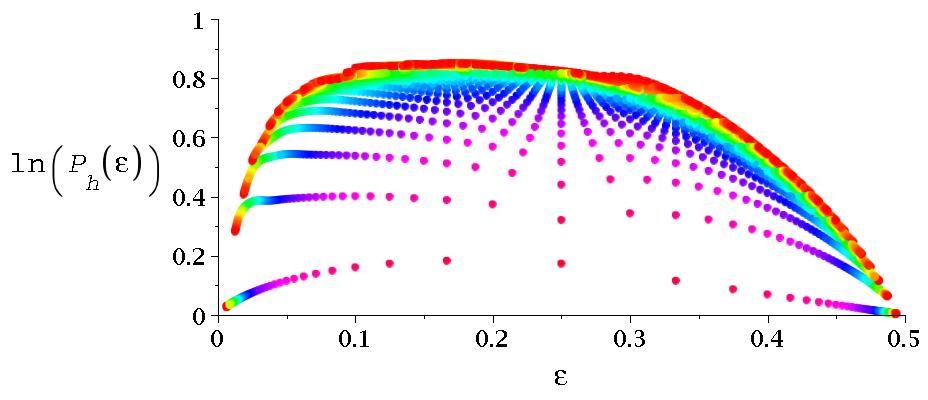}
\caption{Numerical estimate of $\log \C{P}_h (\eps)$ as a function of 
$\eps\in(0,1/2)$. The points have coordinates $(v/n,(\log p_n(h))/n)$ 
for $0\leq n \leq 160$ and $0\leq h \leq n/2$ and they accumulate on
$\log \C{P}_h(\eps)$ as $n$ increases, by theorem \ref{theorem3.6}.
Observe that the maximum in the data is below $0.97$, since the
finite length polygons are effectively confined to slits in the square lattice when
$\eps$ is fixed.  In the limit that $n$ goes to infinity the maximum in the
curve should approach $\log \mu_2 \approx 0.97$ as $\eps\to 0^+$.}
\label{figure11}   %%ZXZ[figure11]
\end{figure}

\subsubsection{Numerical estimation of $\log \C{P}_h(\eps)$:}
Pulled polygons were approximately enumerated using the
GARM algorithm \cite{RJvR08}, implemented with BFACF elementary
moves \cite{ACF83,BF81}.  Square lattice polygons were sampled of lengths
$n\leq 160$.  As for adsorbing polygons, the number of realised 
GARM sequences exceeded $9.6\times 10^7$.

The function $\log \C{P}_h(\eps)$ is approximated by
$\sfrac{1}{n} \log p_n(h/n)$ where $h=\lfl \eps n \rfl$ for large
values of $n$.  The results are shown in figure \ref{figure11}.
These data are not well approximated by functions similar to
equation \Ref{40} or equation \Ref{49}.  A non-linear fit using the 
function $a_1\, \eps^{b_0}\,(1-2\eps)^{b_1}$, giving
\begin{equation}
\log \C{P}_h(\eps) \approx 2.68\,\eps^{0.448}\,(1-2\eps)^{0.753} ,
\end{equation} 
was more successful.   The maximum in this approximation is
at $\epsilon \approx 0.187$ where $\log \C{P}_h(0.187) \approx 0.888$,
smaller than $\log \mu_2 \approx 0.9700$ (showing that for these
values of $n$ these estimates are below the limiting value $\log \C{P}_h(\eps)$). 

Using the model $a_1\, \eps^{b_0}\,(1-2\eps)^{b_1}+a_2\, \eps\,(1-2\eps)$
instead gives a slightly better approximation
\begin{equation}
\log \C{P}_h(\eps) \approx 5.18\,\eps^{0.586}\,(1-2\eps)^{0.799} 
- 3.79\, \eps\, (1-2\eps),
\end{equation} 
showing that there are still large uncertainties in the values of the exponents.
The maximum in this approximation is at $\epsilon \approx 0.177$ 
where $\log \C{P}_h(0.177) \approx 0.888$.  

Plotting these approximations
show that, while they adequately fit the data for $\eps\lesssim 0.1$ and
$\eps\gtrsim 0.3$, the data points for $\eps\in(0.1,0.3)$ are generally below
the approximations.

\def\P#1#2#3#4{\plot #1 #2 #3 #4 /}
\begin{figure}[b!]
\beginpicture
\setcoordinatesystem units <1pt,1pt> 
\setlinear

\setplotarea x from -100 to 250, y from -20 to 35

\setcoordinatesystem units <1.5pt,1.5pt> point at -70 20

\put
{\beginpicture

\circulararc 360 degrees from -50 5 center at -50 0 

\setplotsymbol ({$\cdot$})
\plot -50 0  -40 0  -40 10  -30 10  -30 20  -20 20  -20 10  -20 0  -10 0 
-10 -10 -20 -10 -20 -20 -10 -20 0 -20 10 -20 20 -20 20 -10 10 -10
0 -10 0 0 0 10 0 20 10 20  20 20  20 30  30 30  
30 20  40 20  40 10  30 10  30 0 40 0  50 0  50 10  60 10  70 10  70 0
70 -10 60 -10   /
\multiput {\fns$\bullet$} at
-50 0  -40 0  -40 10  -30 10  -30 20  -20 20  -20 10  -20 0  -10 0 
-10 -10 -20 -10 -20 -20 -10 -20 0 -20 10 -20 20 -20 20 -10 10 -10
0 -10 0 0 0 10 0 20 10 20  20 20  20 30  30 30  
30 20  40 20  40 10  30 10  30 0 40 0  50 0  50 10  60 10  70 10  70 0
70 -10 60 -10  /

\setplotsymbol ({.})

\setdashes <3.33pt>
\P{-30}{10}{-20}{10}
\P{-20}{0}{-20}{-10}
\P{-10}{-10}{-10}{-20}
\P{-10}{-10}{0}{-10}
\P{-10}{0}{0}{0}
\P{0}{-10}{0}{-20}
\P{10}{-10}{10}{-20}
\P{20}{20}{30}{20}
\P{30}{20}{30}{10}
\P{40}{10}{40}{0}
\P{40}{10}{50}{10}

\put {$c_n (k)$} at -20 30

\endpicture} at 0 30 

\endpicture
\caption{\textit{A self-interacting self-avoiding walk with eleven
nearest neighbour contacts.}}
\label{figure12}
\end{figure}

\subsection{Self-interacting self-avoiding walks in the square lattice}

A \textit{contact} in a self-avoiding walk is a pair of vertices which are
nearest neighbour in the lattice, but not nearest neighbour in the walk.
In figure \ref{figure12} a self-avoiding walk with eleven contacts is shown;
the contacts are indicated by a broken line between nearest neighbour
vertices.

If $c_n(k)$ is the number of self-avoiding walks in the square lattice,
of length $n$, and with $k$ contacts, then the density function of
this model is not known to exist.  The function $c_n(k)$ is not known to
satisfy a supermultiplicative, or submultiplicative, relation similar to equation
\Ref{10}.  A density function can be defined by the limit superior
\begin{equation}
\log \C{P}_k (\eps) = \limsup_{n\to\infty} \Sfrac{1}{n}
\log c_n(\lfl \eps n \rfl),
\;\hbox{for $\eps\in (0,1)$}.
\end{equation}
This function can be estimated numerically, although it is difficult
to collect data when $\eps \gtrsim 0.7$, since those states are
dense self-avoiding walks with ``frozen'' conformations which are
not easily updated by an algorithm.

\begin{figure}[t]
\centering
\includegraphics[scale=0.4]{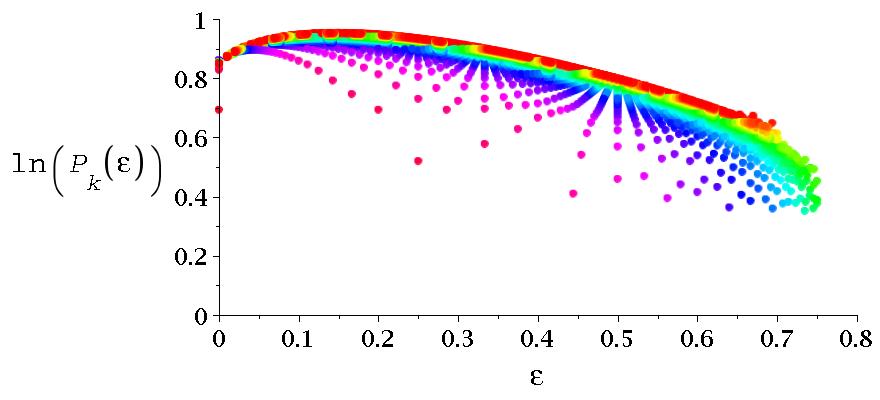}
\caption{Numerical estimate of $\log \C{P}_k (\eps)$ as a function of 
$\eps\in(0,0.8)$. The points have coordinates $(k/n,(\log c_n(k))/n)$ 
for $0\leq n \leq 100$ and they accumulate on $\log \C{P}_k(\eps)$ 
as $n$ increases. Observe that the maximum in the data is close to
$0.97$, where the function peaks at $\eps \approx 0.14$.}
\label{figure13}   %%ZXZ[figure13]
\end{figure}

In figure \ref{figure13} a numerical approximation is shown for
$\C{P}_k(\eps)$.  The data were collected using the GAS algorithm
\cite{RJvR08} implemented with Berretti-Sokal elementary moves 
\cite{BS85}. A total of $500$ sequences were sampled, each of length
$10^8$ iterations, for walks of lengths less than or equal to $100$.
Notice that no data were collected for $\eps\gtrapprox 0.75$ due to 
the difficulty of sampling data in the dense regime where walks have 
a large number of contacts.  

At $\eps=0$ the data accumulates on the point with value $0.86$,
and this gives the approximate value of the growth constant of 
neighbour-avoiding walks: $e^{0.86} \approx 2.364$.  In addition,
the data in figure \ref{figure13} suggest that the derivative
$\log \C{P}_k (\eps)$ is finite at $\eps=0$.  Thus,
the data were fitted to a quartic in $\eps$ with constant term equal 
to $0.86$.  This gives
\begin{equation}
\log \C{P}_k(\eps) \approx
0.86 + 1.38\,\nu - 6.26\,\nu^2 + 9.23\,\nu^3 - 5.46\,\nu^4 .
\end{equation}
The maximum in the data is seen at $\eps\approx 0.16$, which 
will be the natural density of contacts in self-avoiding walks.  At this
point $\log \C{P}_k(0.16) \approx 0.955$, close to the expected value
which is $\log \mu_2 \approx 0.9700$.  The apparent finite right derivative
at $\eps=0$ suggests a critical point $\log b_c \approx -1.38$ where
a (repulsive) force between nearest neighbour contacts becomes small
enough for the model to transition from a phase characterised by 
walks which are (mostly) nearest neighbour avoiding to a phase of
(expanded) self-avoiding walks.   A transition to the collapse
phase of self-avoiding walks should occur at a positive value 
of $\eps$ (corresponding to $\log b > 0$), where $\log \C{P}_k(\eps)$ 
will be non-analytic.

\section{Conclusions}

In this paper two-variable functions satisfying a supermultiplicative inequality 
and arising in the statistical mechanics of models of lattice clusters 
were examined.  In particular, it is shown that models with microcanonical
partition functions $p_n^\# (m)$ satisfying equation \Ref{7} have a 
(microcanonical) density function $\C{P}_\#(\eps)$.  Proofs of the existence
of the density function are given in theorems \ref{theorem5} and \ref{theorem3.6}
(where theorem \ref{theorem3.6} gives a more generalised limit
definition of $\C{P}_\#(\eps)$).  The proof of theorem \ref{theorem5} corrects 
the flawed proof of theorem 3.6 in reference \cite{JvR00}.  Existence of a 
thermodynamic limit in models satisfying equation \Ref{7} follows from 
theorem \ref{theorem6}. We have applied these results to various models 
in section \ref{examples}, and show that while $\log \C{P}_\#(\eps)$ arises
naturally in some models (adsorbing walks, and adsorbing and pulled polygons),
in other models (collapsing walks) existence of $\log \C{P}_\#(\eps)$ has 
not been showed (even if numerical data strongly suggest existence).

Generalisations of the results in this paper to multivariate functions in more than
two variables, or to models satisfying weaker supermultiplicative relations
than in equation \Ref{7}, remain unresolved.

\vspace{5mm}
\noindent{\bf Acknowledgements:} EJJvR acknowledges financial support 
from NSERC (Canada) in the form of Discovery Grant RGPIN-2019-06303.

 %%%%%%%% Bibliography
\vspace{5mm}
\noindent{\bf References}
\bibliographystyle{plain}
\bibliography{density}

\end{document}